\let\old@ssect\@ssect % Store how ifacconf defines \@ssect
\def\endfigure{\end@float}
\def\endtable{\end@float}
\newtheoremstyle{boldtheorem}
  {\topsep}   % Space above
  {\topsep}   % Space below
  {\normalfont}%{\bfseries} % Body font (bold)
  {}          % Indent amount (empty means no indent)
  {\bfseries} % Theorem head font (bold)
  {.}         % Punctuation after theorem head
  {.5em}      % Space after theorem head
  {}          % Theorem head specification (can be left empty, meaning `normal`)
\theoremstyle{boldtheorem}
\newtheorem{thm}{Theorem}%  meant for continuous numbers
\newtheorem{rem}{Remark}%
\newtheorem{cor}{Corollary}%
\newtheorem{lem}{Lemma}%
\newtheorem{assum}{Assumption}%
\begin{document}

\begin{frontmatter}

%% Title, authors and addresses

%% use the tnoteref command within \title for footnotes;
%% use the tnotetext command for theassociated footnote;
%% use the fnref command within \author or \address for footnotes;
%% use the fntext command for theassociated footnote;
%% use the corref command within \author for corresponding author footnotes;
%% use the cortext command for theassociated footnote;
%% use the ead command for the email address,
%% and the form \ead[url] for the home page:
%% \title{Title\tnoteref{label1}}
%% \tnotetext[label1]{}

%% \ead[url]{home page}
%% \fntext[label2]{}
\cortext[cor1]{Corresponding author.}
%% \affiliation{organization={},
%%             addressline={},
%%             city={},
%%             postcode={},
%%             state={},
%%             country={}}
%% \fntext[label3]{}

\title{Frequency Response Analysis for Reset Control Systems: Application to Predict Precision of Motion Systems}

\author[inst]{Xinxin Zhang}\ead{X.Zhang-15@tudelft.nl} 
\author[inst]{Marcin B Kaczmarek}\ead{MBKaczmarek@tudelft.nl}
\author[inst]{S. Hassan HosseinNia\corref{cor1}}\ead{S.H.HosseinNiaKani@tudelft.nl}
\affiliation[inst]{organization={Department of Precision and Microsystems Engineering (PME), Delft University of Technology},%Department and Organization
            addressline={Mekelweg 2}, 
            city={Delft},
            postcode={2628CD}, 
            % state={State One},
            country={The Netherlands}}
% \affiliation[inst2]{organization={Department Two},%Department and Organization
%             addressline={Address Two}, 
%             city={City Two},
%             postcode={22222}, 
%             state={State Two},
%             country={Country Two}}

\begin{abstract}
The frequency response analysis describes the steady-state responses of a system to sinusoidal inputs at different frequencies, providing control engineers with an effective tool for designing control systems in the frequency domain. However, conducting this analysis for closed-loop reset systems is challenging due to system nonlinearity. This paper addresses this challenge through two key contributions. First, it introduces novel analysis methods for both open-loop and closed-loop reset control systems at steady states. These methods decompose the frequency responses of reset systems into base-linear and nonlinear components. Second, building upon this analysis, the paper develops closed-loop higher-order sinusoidal-input describing functions for reset control systems at steady states. These functions facilitate the analysis of frequency-domain properties, establish a connection between open-loop and closed-loop analysis, and enable the prediction of time-domain performance for reset systems. The accuracy and effectiveness of the proposed methods are successfully validated through simulations and experiments conducted on a reset Proportional-Integral-Derivative (PID) controlled precision motion system.
% Additionally, a new reset control structure within the PID framework is introduced, serving as a case study to demonstrate the contributions made in this work.
\end{abstract}

%%Graphical abstract
% \begin{graphicalabstract}
% \begin{figure}[h]
% 	\centering
% 	%	\missingfigure{CLCI}
% 	\includegraphics[width=\columnwidth]{figs/Graphic Abstract.pdf}
% 	\captionsetup{labelformat=empty}
% 	\caption*{Figure 0: New model for the RCS in open-loop and closed-loop.}
% 	\label{fig:graphicalabstract}
% \end{figure}
% \end{graphicalabstract}

%%Research highlights
% \begin{highlights}
% \item A pulse-based model for analyzing general reset control systems (RCSs) in frequency domain is proposed.
% \item This study for the first time providing the perspective that the RCS output can be segmented into linear and non-linear elements both in open-loop and closed-loop, and further enabling the loop-shaping technique applied in the RCS analysis accurately.
% \item This study proposes an analytical method to calculate all the harmonics of RCSs  in closed-loop that can distinguish the first-order harmonic of RCSs from the DF.
% \item Sensitivity functions for reset control systems are developed.
% \end{highlights}

\begin{keyword}
%% keywords here, in the form: keyword \sep keyword
Reset control system \sep Frequency response analysis \sep Steady state \sep Closed-loop \sep Higher-order sinusoidal-input describing functions\sep Precision motion 
% \sep Higher-order sinusoidal input sensitivity functions 
%% PACS codes here, in the form: \PACS code \sep code
% \PACS 02.30.Yy \sep 07.05.Dz
%% MSC codes here, in the form: \MSC code \sep code
%% or \MSC[2008] code \sep code (2000 is the default)
\MSC 93C80 \sep 93C10 \sep 70Q05
% 93C80: Frequency-response methods
\end{keyword}

\end{frontmatter}

\section{Introduction}
\label{sec:introduction}
This paper aims to develop a method for analyzing the frequency response of reset control systems. The development of the reset element starts from the Clegg integrator (CI), introduced in 1958 \cite{clegg1958nonlinear}. The CI is a linear integrator encapsulating with a reset mechanism, which enables the output of the CI to be reset to zero whenever its input crosses zero. Through the Describing Function (DF) analysis, the CI demonstrates the same gain-frequency characteristics as a linear integrator but exhibits a significant phase lead of 51.9\textdegree. This phase-frequency characteristic highlights the ability of the reset element to overcome the Bode gain-phase restriction in linear controllers \cite{chen2018beyond}. Numerous other applications of reset elements showcase their ability to enhance the system performance. These include the First-order Reset Element (FORE) \cite{krishnan1974synthesis, horowitz1975non}, the Second-order Reset Element (SORE) \cite{hazeleger2016second}, partial reset techniques \cite{beker2004fundamental}, Proportional-Integral (PI) + CI \cite{banos2007definition}, reset control systems with reset bands \cite{banos2011limit}, Fractional-order Reset Element (FrORE) \cite{saikumar2017generalized, weise2019fractional, weise2020extended}, and Constant in gain Lead in phase (CgLp) \cite{saikumar2019constant}.  

Frequency response analysis is a method used to assess the magnitude and phase properties of a control system by analyzing its steady-state responses to sinusoidal inputs across various frequencies \cite{tian2007frequency}. Engineers can shape and tune the performance of closed-loop systems based on their open-loop analysis, a technique referred to as loop shaping \cite{van2017frequency, ogata2010modern}. The frequency-domain-based loop shaping approach has proven effective for designing linear control systems, including PID controllers, in industries \cite{saikumar2019constant, deenen2017hybrid, saikumar2021loop}. Reset controllers are seamlessly integrated into the classical PID framework, thus attracting interest for their potential applications across various industries \cite{beerens2021reset}. However, the lack of effective frequency-domain analysis tools tailored for reset control systems has hindered their widespread adoption in industries.
% This analysis provides engineers with the tools to anticipate system behavior across different frequencies, facilitating the optimization of system performance. \cite{van2017frequency, ogata2010modern} In the realm of industrial applications,

The frequency response analysis includes both open-loop and closed-loop analysis. For open-loop reset controllers, the DF \cite{guo2009frequency} was first employed to analyze their frequency response, but it falls short in capturing the complete dynamics of reset control systems as it only analyzes the first harmonic of the outputs. The Higher-Order Sinusoidal Input Describing Function (HOSIDF) \cite{saikumar2021loop}, which accounts for higher-order harmonics, proves to be an effective method for analyzing open-loop reset control systems. However, in closed-loop reset control systems, the existence of high-order harmonics in the output signals leads to the generation of higher-order sub-harmonics through the feedback loop, presenting challenges for frequency response analysis. Existing tools for analyzing closed-loop reset control systems, such as pseudo-sensitivity functions in \cite{dastjerdi2022closed}, primarily rely on time-domain approaches and lack a direct connection between open-loop and closed-loop analysis of reset control systems. The current frequency-domain-based analysis method for reset control systems in \cite{saikumar2021loop} lacks precision as it overlooks certain higher-order harmonics in the closed-loop outputs. 
 
The lack of precise frequency response analysis methods for closed-loop reset systems and the disconnect between open-loop and closed-loop analysis in reset systems motivates this research. The objective of this research is to develop new frequency response analysis methods for both open-loop and closed-loop reset control systems with sinusoidal inputs. These methods aim to (1) predict the steady-state performance of the closed-loop reset control system by rectifying inaccuracies present in previous methods, and (2) establish a reliable connection between the frequency-domain analysis of the open-loop and closed-loop reset control systems.
% The ease of frequency-domain analysis for linear controllers contributes to their widespread use in industry. 
% Developing accurate frequency response analysis techniques for reset control systems is crucial to facilitate their widespread industrial use. 
% However, such techniques for reset control systems are currently lacking. 
% Therefore, the analysis and design of RC cannot be accurately incorporated into the loop-shaping technique using the DF. As a result, there still remains a bottleneck for applying the loop-shaping technique to connect the open-loop and closed-loop characteristics of reset control systems, which is one of the essential reasons that prevent reset control systems from being widely implemented in the industry.
% Thus, developing a precise analysis method for reset control systems that can effectively integrate them into the loop-shaping framework is required. 
% This study presents a pulse-based model in the frequency domain for analysing reset control systems, for the first time providing a perspective that a Single Input Single Output (SISO) reset control system can be analytically segmented into its base linear system (BLS) and a filtered-pulse signal. This model validates both in open-loop and in closed-loop. % Moreover, the new model enables the loop-shaping technique applied in the analysis of reset control systems. 
% Additionally, we developed sensitivity functions for the reset control system. The effectiveness of this method is evaluated in two reset control system examples. 

The structure of the study is organized as follows. Section \ref{sec:preliminaries} provides definitions, discusses existing frequency response analysis methods for reset control systems, and states the research problems. The main contributions are detailed in Section \ref{subsec: Open-loop Model} (for the open-loop reset system) and Section \ref{subsec: Closed-loop Model}  (for the closed-loop reset system), including:
\begin{enumerate}
    \item Theorem \ref{thm: vnl_npi} presents a new pulse-based approach for analyzing open-loop reset systems. This method decomposes the steady-state outputs of reset systems into base-linear outputs and pulse-based nonlinear signals. Building on Theorem \ref{thm: vnl_npi}, Theorem \ref{thm: Open loop model for RC} proposes an open-loop HOSIDF for the frequency response analysis of reset controllers.
    \item Building upon Theorem \ref{thm: Open loop model for RC}, Theorem \ref{thm: Pulse-based model for RCS in closed loop} introduces an analysis method for a closed-loop Single-Sinusoid-Input Single-Output (SSISO) reset control system featuring two reset instants per steady-state period (referred to as a two-reset system). This model decomposes the system's steady-state output into its base-linear and nonlinear components.
    \item Based on Theorem \ref{thm: Pulse-based model for RCS in closed loop}, a closed-loop HOSIDF is developed for the frequency response analysis of reset control systems, as detailed in Theorem \ref{thm: Method C, new HOSIDF}. This analysis connects the open-loop and closed-loop responses of reset control systems, enabling the application of loop-shaping techniques in nonlinear systems for the first time.
    % (\textcolor{red}{you need to develop the open-loop and closed-loop connection in a corollary}).
\end{enumerate}
% Section \ref{sec:spider} presents the precision positioning stage.
Section \ref{example 1} assesses the accuracy and highlights limitations of the proposed methods on a reset control system within the PID framework, demonstrated through simulations and experiments. In Section \ref{example 2}, we introduce a novel reset control structure aimed at addressing the limitations identified in Section \ref{example 1}. Simulated and experimental results on this new control system validate the effectiveness of the proposed analysis methods. Additionally, this section discusses the applicability of the methods in reset control system design. Finally, the study concludes in Section \ref{sec:conclusion}.
% In \ref{sec:results},
\section{Background and Problem Statement}
\label{sec:preliminaries}
This section begins by offering background information on reset control systems. It then discusses existing frequency response analysis methods for reset systems, highlighting their limitations. Finally, the research problems are introduced.
\subsection{The Definition of the Reset Control System}
Figure \ref{fig1:RC system} illustrates the block diagram of the Single-Input Single-Output (SISO) reset control system. This system consists of a reset controller $\mathcal{C}$, a linear controller $\mathcal{C}_\alpha$, a plant $\mathcal{P}$, and a feedback loop. The blue lines depict the resetting mechanism activated by the reset trigger signal (which is $e(t)$ in this diagram). The signals $r(t)$, $e(t)$, $v(t)$, $u(t)$, and $y(t)$ correspond to the reference input signal, the error signal, the reset output signal, the control input signal, and the output signal, respectively. Note that in the reset system under a sinusoidal input signal $r(t) = |R|\sin(\omega t)\ (\omega \in\mathbb{R}^+)$, these signals are functions of both time $t$ and input frequency $\omega$. In the time domain, they are expressed as functions of $t$ for a specific $\omega$ for simplicity. Under the condition of the existence of steady states, in the Fourier domain, $R(\omega)$, $E(\omega)$, $V(\omega)$, $U(\omega)$, and $Y(\omega)$ represent their respective Fourier transforms.
\begin{figure}[h]
	%	\missingfigure{RCsystem}
	\centerline{\includegraphics[width=\columnwidth]{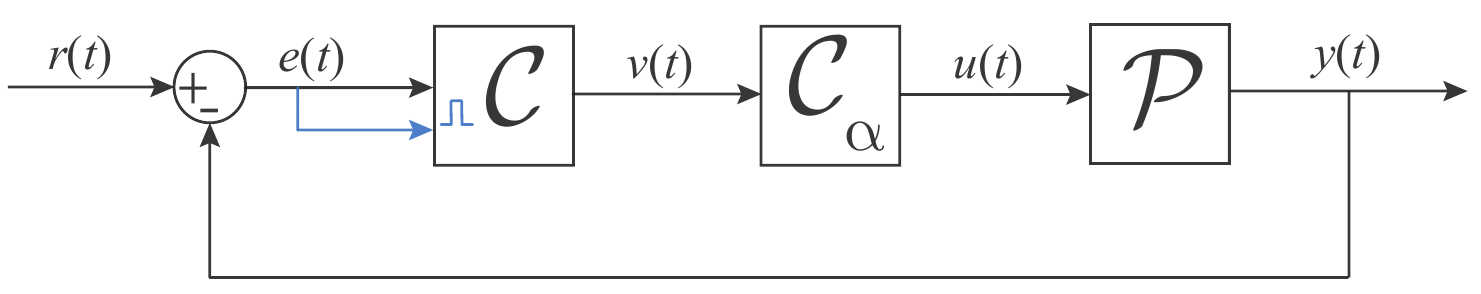}}
	\caption{The block diagram of the reset control system, where $r(t)$, $e(t)$, $v(t)$, $u(t)$, and $y(t)$ denote the reference input signal, the error signal, the reset output signal, the control input signal, and the output signal, respectively. The blue lines indicate the reset-triggered actions, with $e(t)$ serving as the reset-triggered signal in this system.}
	\label{fig1:RC system}
\end{figure}

The state-space equations of the reset controller $\mathcal{C}$ in Fig. \ref{fig1:RC system} are given by:
\begin{equation} 
	\label{eq: State-Space eq of RC} 
	\mathcal{C} = \begin{cases}
        \dot{x}_c(t) = A_Rx_c(t) + B_Re(t), &  t \notin J, \\
		x_c(t^+) = A_\rho x_c(t), & t \in J, \\
		v(t) = C_Rx_c(t) + D_Re(t).\end{cases}
\end{equation} 
In \eqref{eq: State-Space eq of RC}, $x_c(t) \in \mathbb{R}^{n_c\times 1}$ represents the state of the reset controller $\mathcal{C}$, where $n_c$ denotes the number of states. The matrices $A_R\in \mathbb{R}^{n_c\times n_c}$, $B_R\in \mathbb{R}^{n_c\times 1}$, $C_R\in \mathbb{R}^{1\times n_c}$, and $D_R\in \mathbb{R}^{1\times 1}$ describe the continuous dynamics of the base-linear controller (BLC), denoted as $\mathcal{C}_{bl}(\omega)$, given by
\begin{equation}
\label{eq: Cbl}
  \mathcal{C}_{bl}(\omega) = C_R(j\omega I-A_R)^{-1}B_R+D_R,\ (j =\sqrt{-1}).
\end{equation}
The base-linear system (BLS) of the reset control system in Fig. \ref{fig1:RC system} is the system characterized by substituting the reset controller $\mathcal{C}$ with its base-linear counterpart $\mathcal{C}_{bl}$. 

The reset controller employs the \enquote{zero-crossing law} as the reset mechanism, which enables the state $x_c(t)$ of $\mathcal{C}$ resets to a predetermined value whenever the reset-triggered signal crosses zero \cite{banos2012reset, guo2019stability}. In Fig. \ref{fig1:RC system}, the reset triggered signal is $e(t)$. The second equation in \eqref{eq: State-Space eq of RC} describes the reset action, which is an instantaneous or impulsive change of the state $(x_c(t)\to x_c(t^+))$ applied whenever $e(t_i)=0$ \cite{barreiro2014reset}. The reset instant, denoted as $t_i$, is defined as the time at which the reset condition is satisfied, i.e., $e(t_i)=0$. The set of reset instants for $\mathcal{C}$ is defined as $J := \{t_i| e(t_i)=0,\ i \in \mathbb{Z}^+\}$. The symbol $A_\rho$ represents the reset matrix, given by
\begin{equation}
\label{eq: A_rho}
%	\resizebox{.9\hsize}{!}{$
		A_\rho=\begin{bmatrix}
			\gamma & \\
			& I_{n_l}
		\end{bmatrix}\in\mathbb{R}^{n_c\times n_c},
%		$}
\end{equation}
where $\gamma = \textrm{diag}(\gamma_1, \gamma_2,\cdots, \gamma_o, \cdots, \gamma_{n_r}), \ o \in \mathbb{Z}^+$. $\gamma_o \in (-1,1)$ denotes the ratio of the after-reset state value (at $t_i^+$) to the before-reset state value (at $t_i$). The subscripts $n_r$ and $n_l$ denote the numbers of reset states and non-reset (linear) states, respectively, where $n_c = n_r + n_l$. When $\gamma = I_{n_r}$, the reset controller $\mathcal{C}$ is referred to as its BLC $\mathcal{C}_{bl}$. In this study, we specifically focus on the reset controller \(\mathcal{C}\) that employs the ``Zero-crossing Law" and involves a single reset state, where \(n_r=1\) in \eqref{eq: A_rho}. The reset controllers with \(n_r=1\) encompass common elements such as the CI, the FORE, and the higher-order reset elements like the ``Second-Order Single State Reset Element (SOSRE)" \cite{karbasizadeh2021fractional}. 

In Fig. \ref{fig1:RC system}, the linear controller $\mathcal{C}_\alpha$ combined with the plant $\mathcal{P}$ is defined as $\mathcal{P}_\alpha = \mathcal{C}_\alpha\mathcal{P}$. The state-space representation of $\mathcal{P}_\alpha$ is defined as:
\begin{equation} 
	\label{eq:P} 
	\mathcal{P}_\alpha = \begin{cases}
		\dot{x}_{\alpha}(t) = A_\alpha x_{\alpha}(t) + B_\alpha v(t),\\
		y_{\alpha}(t) = C_\alpha x_{\alpha}(t),
	\end{cases}
\end{equation} 
where $A_{\alpha}\in \mathbb{R}^{n_\alpha\times n_\alpha}$, $B_{\alpha}\in \mathbb{R}^{n_\alpha\times1}$, and $C_{\alpha}\in \mathbb{R}^{1\times n_\alpha}$ are the state-space matrices for $\mathcal{P}_\alpha$. $x_{\alpha} \in \mathbb{R}^{n_{\alpha}\times 1}$ represents the state of $\mathcal{P}_\alpha$ and $n_\alpha \in \mathbb{N}$ is the number of the state.

Combining \eqref{eq: State-Space eq of RC} and \eqref{eq:P}, the state-space representative of the reset control system without inputs is given by
\begin{equation}
	\label{eq:cl}
	\mathcal{H} = \begin{cases}
		\dot{x}(t) = A_{cl}x(t), & x \notin J_H, \\
		{x}(t^+) = A_{\rho cl}x(t), & x \in J_H, \\
		y(t) = C_{cl}x(t),
            % e(t) = r(t) - C_{cl}x(t),
	\end{cases}
\end{equation}
where $x^T = [{x_c}^T \  {x_{\alpha}}^T] \in \mathbb{R}^{n_s\times 1}$ is the state of the reset control system $\mathcal{H}$, with the number of $n_s = n_c + n_{\alpha}$. $J_H := \lbrace x \in \mathbb{R}^{n_s\times 1} \lvert C_{cl}x=0 \rbrace $ is defined to be the set of reset instants satisfying $e(t) =0$. The matrices in \eqref{eq:cl} are given by
\begin{equation}
%	\resizebox{.9\hsize}{!}{$
\begin{aligned}
A_{cl} &=
		% \begin{bmatrix}
		% 	A_R & -B_{R}C_\alpha\\
		% 	 B_\alpha C_{R} &  (A_{\alpha}-B_\alpha D_R C_\alpha)\\
		% \end{bmatrix}\in\mathbb{R}^{n_s\times n_s},\ 
            \begin{bmatrix}
			A_R & -B_{R}C_\alpha\\
			 B_\alpha C_{R} &  A_{\alpha}\\
		\end{bmatrix}\in\mathbb{R}^{n_s\times n_s},\\
  %           B_{cl} =
		% \begin{bmatrix}
		% 	B_{R}\\
  %               B_{\alpha}D_R\\
		% \end{bmatrix}\in\mathbb{R}^{n_s\times 1},\ 
		C_{cl} &= 
		\begin{bmatrix}
			 0^{1\times n_c} & C_{\alpha}              
		\end{bmatrix}\in\mathbb{R}^{1\times n_s},\\
A_{\rho cl} &= 
		\begin{bmatrix}
			A_\rho & 0 \\
			0 & I_{n_\alpha}
		\end{bmatrix}\in\mathbb{R}^{n_s\times n_s}.
\end{aligned}
%	$}
\end{equation}
\subsection{Current Frequency Response Analysis of the Open-loop Reset System}
The stability and convergence of systems are crucial for achieving a steady-state solution and enabling frequency response to sinusoidal inputs \cite{pavlov2006uniform, pavlov2007frequency}. An open-loop reset system \eqref{eq: State-Space eq of RC} with input signal of $e(t) = |E|\sin(\omega t + \angle E)$ has a globally asymptotically stable $2\pi/\omega$-periodic solution if and only if (\cite{guo2009frequency})
\begin{equation}
\label{eq:open-loop stability}
    |\lambda (D_Re^{A_R}\delta)|<1,\ \forall \delta \in \mathbb{R}^+.
\end{equation} 
Thus, we have the following assumption:
\begin{assum}
The reset system \eqref{eq: State-Space eq of RC} with input $e(t) = |E|\sin(\omega t + \angle E)$ is assumed to meet the condition in \eqref{eq:open-loop stability}.
\label{open-loop stability}   
\end{assum}
Regarding the Zeno-ness problem in the hybrid system, the outputs for the reset system are Zeno-free if the reset time interval $\sigma_i = t_{i+1}-t_i,\ i\in \mathbb{Z}^+$, between any two consecutive reset instants $(t_i,\ t_{i+1})$, is lower bounded.
\begin{equation}
\label{eq: zeno-free}
 \sigma_i>\sigma_{\text{min}},   
\end{equation}
at least in some working domain $\Omega$ \cite{barreiro2014reset}. 

For an open-loop reset controller $\mathcal{C}$ \eqref{eq: State-Space eq of RC}, with the input signal and the reset triggered signal of $e(t) = |E|\sin(\omega t)$ and satisfying Assumption \ref{open-loop stability}, there exist $n\in\mathbb{N}$ harmonics in the reset output signal $v(t)$. Utilizing the \enquote{Virtual Harmonic Generator} \cite{heinen2018frequency}, the input signal $e(t)$ generates $n$ harmonics $e_{1n}(t) = |E|\sin(n\omega t)$. The function $H_n(\omega)$, where $n$ denotes the number of harmonics involved, is defined to represent the transfer function from $e_{1n}(t)$ to the $n$-th harmonic in $v(t)$ at steady states. The expression for $H_n(\omega)$ is provided by \cite{heinen2018frequency, saikumar2021loop}:
% This function, grounded in the \enquote{Virtual Harmonics Generator} concept, \cite{nuij2006higher, nuij2008measuring, ucun2014hosidf, heinen2018frequency}
\begin{equation}
	\label{eq: HOSIDF}
	\resizebox{1\hsize}{!}{$
\begin{aligned}
    H_n(\omega)=
    \begin{cases}
			C_R(j\omega I-A_R)^{-1}(I+j\Theta _D(\omega))B_R+D_R, & \text{for}\ n=1,\\
			C_R(jn\omega I-A_R)^{-1}j\Theta _D(\omega)B_R, & \text{for odd} \ n > 1,\\
			0, & \text{for even } n \geqslant 2,
   \end{cases}
\end{aligned}
		$}
\end{equation}
with\\
\begin{equation}
	\begin{aligned}
		\Lambda(\omega) &= \omega ^2I+{A_R}^2,\\
		\Delta(\omega) &= I+e^{(\frac{\pi}{\omega}A_R)},\\
		\Delta _r(\omega) &= I+A_{\rho}e^{(\frac{\pi}{\omega}A_R)},\\
		\Gamma _r(\omega) &= \Delta _r^{-1}(\omega)A_{\rho}\Delta(\omega)\Lambda ^ {-1}(\omega),\\
		\Theta _D(\omega) &= \frac{-2\omega ^2}{\pi}\Delta(\omega)[\Gamma _r(\omega)-\Lambda ^ {-1}(\omega)].
	\end{aligned} 	
\end{equation}
Note that the expression for the first-order harmonic $H_1(\omega)$ aligns with the classical DF representation for the reset controller in \cite{guo2009frequency}. 
% \subsection{The Frequency Response Analysis of the Closed-loop Reset System}
\subsection{The Stability and Convergence Conditions of the Closed-loop Reset System}
For the frequency response analysis of closed-loop reset control systems, stability and convergence conditions are also necessary. The closed-loop reset control system \eqref{eq:cl} is quadratically stable if and only if it satisfies the well-known $H_{\beta}$ condition \cite{beker2004fundamental, carrasco2008stability}, i.e., there exists a $\beta \in \mathbb{R}^{n_r\times 1}$ and a positive definite matrix $P_{n_r} \in \mathbb{R}^{{n_r} \times {n_r}}$ such that the transfer function
\begin{equation}
\label{Hbeta1}
H_{\beta}(s) \overset{\Delta}{=} 
\begin{bmatrix}
P_{n_r}  &	0_{n_r \times n_l} & \beta C_{\alpha}
\end{bmatrix}(sI-A_{cl})^{-1}
\begin{bmatrix}
I_{n_r}\\
0_{n_l \times n_r}\\
0_{n_\alpha \times n_r}
\end{bmatrix}
\end{equation}\\
is strictly positive real and additionally a non-zero reset matrix $A_{\rho r}$ satisfies the condition	
\begin{equation}
\label{Hbeta2}
A^T_{\rho r}P_{n_r}A_{\rho r} - P_{n_r} \leqslant 0,
\end{equation}
where $I_{n_r}$ is an identity matrix of size $n_r \times n_r$.

% If the $H_\beta$ condition holds, the reset control system \eqref{eq:cl} has the uniform bounded-input bounded-state (UBIBS) property and the reset instants have the well-posedness property. \cite{dastjerdi2023frequency}. 
Literature \cite{dastjerdi2022closed} proposed that the closed-loop reset control system \eqref{eq:cl} is uniformly exponentially convergent if the initial condition of the reset controller is zero, there are infinitely many reset instants $t_i$ with $\lim_{t_i \to \infty} = \infty$, the input signal $r(t)$ is a Bohl function \cite{barabanov2001bohl}, and the $H_\beta$ condition is satisfied \cite{dastjerdi2022closed, hollot2001establishing}. Additionally, when these convergent conditions are met, the closed-loop reset control system  \eqref{eq:cl} under sinusoidal input signal $r(t) = |R|\sin(\omega t)$ exhibits a periodic steady-state solution. This solution can be represented as $x(t) = \mathcal{S}(\sin(\omega t), \cos(\omega t),\omega)$ for some function $\mathcal{S}: \mathbb{R}^3\to \mathbb{R}^{n_c+n_\alpha}$ \cite{dastjerdi2022closed}.

Considering the stability, convergence, and the existence of steady-state is needed for the frequency response analysis for the closed-loop system, we have the following Assumption \ref{assum: stable}.
\begin{assum}
\label{assum: stable}
The closed-loop reset control system \eqref{eq:cl} is assumed to satisfy the following conditions: the initial condition of the reset controller $\mathcal{C}$ is zero, there are infinitely many reset instants $t_i$ with $\lim_{t_i \to \infty} = \infty$, the input signal is a Bohl function, \cite{barabanov2001bohl}, there is no Zenoness behaviour, and the $H_\beta$ condition (in \eqref{Hbeta1} and \eqref{Hbeta2}) is satisfied.
\end{assum}
Note that Assumption \ref{assum: stable} can be met through appropriate design considerations, see \cite{saikumar2021loop, banos2012reset, samad2019ifac}.

\subsection{Problem Statement}
\label{subsec: PS}
% Under Assumption \ref{assum: stable},  In the first steady-state period, the time instant marked by $r(t_0) = 0 \ \& \ \dot r(t_0) > 0$ is designated as the initiation time $t_0=0$.
Under Assumption \ref{assum: stable}, in a SISO closed-loop reset control system with a sinusoidal reference input signal $r(t) = |R|\sin(\omega t)$ (in Fig. \ref{fig1:RC system}), the steady-state signals $e(t)$, $v(t)$, $u(t)$, and $y(t)$ are periodic and nonlinear. The Fourier transformations of these signals are define as $E(\omega)$, $V(\omega)$, $U(\omega)$, and $Y(\omega)$, respectively. These signals encompass infinite harmonics and share the fundamental frequency of $r(t)$ \cite{pavlov2006uniform}. They can be expressed as follows:
\begin{equation}
\label{eq: e,y,u}
    \begin{aligned}
        e(t) &= \sum\nolimits_{n=1}^{\infty}e_n(t) = \sum\nolimits_{n=1}^{\infty} |E_n|\sin(n\omega t+\angle E_n),\\
        v(t) &= \sum\nolimits_{n=1}^{\infty}v_n(t) = \sum\nolimits_{n=1}^{\infty} |V_n|\sin(n\omega t+\angle V_n),\\
        u(t) &= \sum\nolimits_{n=1}^{\infty}u_n(t) = \sum\nolimits_{n=1}^{\infty} |U_n|\sin(n\omega t+\angle U_n),\\
        y(t) &= \sum\nolimits_{n=1}^{\infty}y_n(t) = \sum\nolimits_{n=1}^{\infty} |Y_n|\sin(n\omega t+\angle Y_n), 
    \end{aligned}
\end{equation}
where $\angle E_n$, $\angle V_n$, $\angle U_n$, $\angle Y_n \in (-\pi, \pi]$. The signals $e_n(t)$, $v_n(t)$, $u_n(t)$ and $y_n(t)$ are the n$^{\text{th}}$ harmonics of $e(t)$, $v(t)$, $u(t)$ and $y(t)$, respectively. The Fourier transformations of these signals are define as $E_{n}(\omega)$, $V_{n}(\omega))$, $U_{n}(\omega)$, and $Y_{n}(\omega)$, respectively.

% Conducting a frequency response analysis, which involves analyzing the gain-phase responses of closed-loop reset control systems to inputs across different frequencies, is an effective tool for understanding the system's dynamics in practice \cite{saikumar2021loop}. Additionally, the frequency response analysis can be used to predict the output signals in \eqref{eq: e,y,u} of the closed-loop RCS, which is crucial for evaluating the system's performance. 
Conducting a frequency response analysis serves as an effective tool not only for understanding the frequency-domain dynamics of the system \cite{saikumar2021loop}, but also for predicting the performance of the closed-loop reset control system. For instance, the steady-state error $e(t)$ in \eqref{eq: e,y,u} serves as an important metric in assessing the tracking precision of the system. Currently, two frequency response analysis methods for closed-loop reset control systems are available, denoted as Method A and Method B.
% However, conducting the frequency response analysis for closed-loop reset control systems is challenging due to the presence of infinite harmonics in the system outputs.

Method A (\cite{guo2009frequency}): For a SISO reset system with a reference input signal $r(t) = \left|R\right|\sin(\omega t)$ under Assumption \ref{assum: stable}, as shown in Fig. \ref{fig1:RC system}, the sensitivity function $\mathcal{S}_{DF}(\omega)$, complementary sensitivity function $\mathcal{T}_{DF}(\omega)$, and control sensitivity function $\mathcal{CS}_{DF}(\omega)$ based on the steady-state DF analysis are defined as
    \begin{equation}
    \begin{aligned}
    \mathcal{S}_{DF}(\omega) &= \frac{E(\omega)}{R(\omega)} = \frac{1}{1+H_1(\omega)\mathcal{C}_\alpha(\omega)\mathcal{P}(\omega)},\\
    \mathcal{T}_{DF}(\omega) &= \frac{Y(\omega)}{R(\omega)} = \frac{H_1(\omega)\mathcal{P}(\omega)}{1+H_1(\omega)\mathcal{C}_\alpha(\omega)\mathcal{P}(\omega)},\\
    \mathcal{CS}_{DF}(\omega) &= \frac{U(\omega)}{R(\omega)}= \frac{H_1(\omega)\mathcal{C}_\alpha(\omega)}{1+H_1(\omega)\mathcal{C}_\alpha(\omega)\mathcal{P}(\omega)},
    \end{aligned}
    \end{equation}
where $H_1(\omega)$ represents the first-order harmonic transfer function of $\mathcal{C}$, as defined in \eqref{eq: HOSIDF}. However, Method A is inaccurate for predicting the performance of closed-loop reset control systems since it only considers the first-order harmonic of the reset control system, thus being valid only when $e_n(t) = 0$ for $n > 1$ in \eqref{eq: e,y,u}. In contrast, the following Method B incorporates the higher-order harmonics, which is more accurate.

Method B (\cite{saikumar2021loop}): For a SISO reset control system  in Fig. \ref{fig1:RC system} with a reference input signal $r(t) = |R|\sin(\omega t)$, under three assumptions: 1) Assumption \ref{assum: stable}, 2) the reset triggered signal is $e_1(t)$ which results in two reset instants occurring $\pi/\omega$ apart per cycle, and 3) the error $e_n(t)$ for $n>1$ does not undergo reset actions, the $n^{\text{th}}\ (n\in \mathbb{N})$ steady-state sensitivity function, complementary sensitivity function, and control sensitivity function denoted as $\mathcal{S}_n(\omega)$, $\mathcal{T}_n(\omega)$, and $\mathcal{CS}_n(\omega)$ are given by
    \begin{equation}
    \resizebox{1\columnwidth}{!}{$
    \begin{aligned}
        \mathcal{S}_n(\omega) &= \frac{E_n(n\omega)}{R_n(\omega)}\\
        &=\begin{cases}
            \mathcal{S}_{l1}(\omega), & \text{for}\ n=1,\\
            -\mathcal{L}_{n}(\omega)\mathcal{S}_{bl}(n\omega)(\left|\mathcal{S}_{l1}(\omega)\right|\angle(n\angle\mathcal{S}_{l1}(\omega))), & \text{for odd} \ n > 2,\\
            0, & \text{for even} \ n \geqslant 2,\\
        \end{cases}\\
        \mathcal{T}_n(\omega) &= \frac{Y_n(n\omega)}{R_n(\omega)}\\
        &=\begin{cases}
            1-\mathcal{S}_{l1}(\omega), & \text{for}\ n=1,\\
            \mathcal{L}_{n}(\omega)\mathcal{S}_{bl}(n\omega)(\left|\mathcal{S}_{l1}(\omega)\right|\angle(n\angle\mathcal{S}_{l1}(\omega))), & \text{for odd} \ n > 2,\\
            0, & \text{for even} \ n \geqslant 2,\\
        \end{cases}\\
        \mathcal{CS}_n(\omega) &= \frac{U_n(n\omega)}{R_n(\omega)}= \mathcal{T}_n(\omega)/\mathcal{P}(n\omega),
    \end{aligned}
    $}
    \end{equation}
    where
    \begin{equation}
        \begin{aligned}
            \mathcal{S}_{ln}(\omega) &= {1}/{(1+\mathcal{L}_{n}(\omega))},\\
            R_n(\omega) &= |R|\mathscr{F}[\sin(n\omega t)],\\
            \mathcal{S}_{bl}(n\omega) &= {1}/{(1+\mathcal{L}_{bl}(n\omega))}, \\ 
            \mathcal{L}_{n}(\omega) &= H_n(\omega)\mathcal{C}_\alpha(n\omega)\mathcal{P}(n\omega),\\
            \mathcal{L}_{bl}(n\omega) &= \mathcal{C}_{bl}(n\omega)\mathcal{C}_\alpha(n\omega)\mathcal{P}(n\omega).
        \end{aligned}
    \end{equation}
% Due to the nonlinearity, it is difficult to analyse reset control systems accurately in the closed loop, especially in the frequency domain.

Note that the first-order harmonic in Method B is identical to Method A. 
% In closed-loop reset control systems, the error signal \(e(t)\) incorporates numerous harmonics \(e_n(t)\), as illustrated in \eqref{eq: e,y,u}. Each harmonic \(e_n(t)\) contributes to the generation of reset signals. 
Although Method B takes the higher-order harmonics (for $n > 1$) into consideration, the Assumption 3) overlooks the higher-order harmonics generated by \(e_n(t) (n>1)\) in \eqref{eq: e,y,u}. This oversight will result in analysis inaccuracies, motivating the contributions made in this paper.

The main idea of this paper is to introduce a new method for analyzing the frequency response of closed-loop SISO reset control systems. It begins with open-loop analysis of reset controllers. The open-loop analysis decomposes the steady-state outputs of reset controllers into base-linear and pulse-based nonlinear elements. This decomposition approach facilitates the development of frequency responses for closed-loop reset control systems, correcting inaccurate assumptions Assumption 3) in Method B \cite{saikumar2021loop}. The proposed methods serve as tools for conducting frequency-domain analysis and predicting time-domain performance for reset systems.

\section{The Frequency Response Analysis for the Open-loop Reset System}%
\label{subsec: Open-loop Model}
%\subsection{Model for Clegg Integrator}%
%\textcolor{red}{Before all the theorems, I need to bring the convergence proof in Ali's paper.}\\
This section introduces a pulse-based analysis model for the open-loop reset controller at syeady states. The model separates the output of a SSISO reset controller into its base-linear sinusoidal output and a filtered pulse signal. The term \enquote{filtered pulse signal} refers to a signal obtained by filtering a normolized pulse signal through a finite-dimensional transfer function.

The reset controller $\mathcal{C}$ \eqref{eq: State-Space eq of RC} is built on the primitive CI \cite{banos2012reset}. To establish the analysis model for the general reset controller, we begin with analyzing the CI. The Generalized CI (GCI) is defined as the reset controller $\mathcal{C}$ \eqref{eq: State-Space eq of RC} with $A_R=0$, $B_R = 1$, $C_R = 1$, $D_R = 0$, and $A_\rho = \gamma \in(-1,1)$, under Assumption \ref{open-loop stability} and the Zeno-free condition in \eqref{eq: zeno-free}. Lemma \ref{lem: CI MODEL} illustrates that the GCI's output comprises the summation of its base-linear output and a square wave component.
% Note that uppercase letters are used to indicate the frequency-domain components, while lowercases denote time-domain ones as per convention.
\begin{lem}(The pulse-based model for the open-loop GCI)
	\label{lem: CI MODEL} 
For a GCI subjected to a sinusoidal input signal $e(t) = |E_1|\sin (\omega t)$, its steady-state output signal denoted by $u_{ci}(t)$ consists of two components: one is its base-linear output $u_{i}(t)$ and another is a square wave represented as $q_i(t)$, expressed by:
	\begin{equation}
		\label{eq: ci to square wave}
		u_{ci}(t) = u_{i}(t) + q_i(t),
	\end{equation}
    where $u_i(t) = -|E_1/\omega|[\cos(\omega t)-1]$, and $q_i(t)$ is a $2\pi/\omega$-periodical square wave given by
	\begin{equation}
	\label{eq: q(t)0}
	q_i(t) = \begin{cases}
		-{2|E_1|\gamma (\gamma + 1)^{-1}}/{\omega}, & \text{for}\ t \in [2k, 2k+1)\cdot\pi/\omega,\\
		-{2|E_1| (\gamma + 1)^{-1}}/{\omega}, &  \text{for}\ t \in [2k+1, 2k+2)\cdot\pi/\omega,
	\end{cases}
	\end{equation} 		
 where $k\in\mathbb{N}$.
\end{lem}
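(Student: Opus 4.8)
The plan is to work directly with the hybrid dynamics of the GCI and track the state $x_c(t)$ over one steady-state period, exploiting the fact that under the ``Zero-crossing Law'' the reset instants coincide with the zero crossings of $e(t)=|E_1|\sin(\omega t)$, i.e. at $t_i = i\pi/\omega$. First I would integrate the flow equation $\dot{x}_c(t) = e(t) = |E_1|\sin(\omega t)$ between consecutive reset instants, which gives $x_c(t) = x_c(t_i^+) - (|E_1|/\omega)[\cos(\omega t) - \cos(\omega t_i)]$ on $t\in(t_i, t_{i+1}]$. Since $\cos(\omega t_i) = \cos(i\pi) = (-1)^i$, the base-linear part $u_i(t) = -(|E_1|/\omega)[\cos(\omega t) - 1]$ emerges naturally as the ``$A_R = 0$'' free response plus the forced cosine term; what remains, $q_i(t) := x_c(t) - u_i(t)$, is piecewise constant on each interval $[t_i, t_{i+1})$, so I only need to pin down its two alternating values.

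The core step is to set up and solve the fixed-point (periodicity) condition on the after-reset state. Let $a := x_c(t_{2k}^+)$ (the value just after a reset at an even instant) and $b := x_c(t_{2k+1}^+)$ (just after an odd instant). Using the flow solution above, the before-reset value at $t_{2k+1}$ is $a - (|E_1|/\omega)[\cos\pi - \cos 0] = a + 2|E_1|/\omega$, and the reset map multiplies it by $\gamma$: $b = \gamma\,(a + 2|E_1|/\omega)$. Similarly the before-reset value at $t_{2k+2}$ is $b - (|E_1|/\omega)[\cos 2\pi - \cos\pi] = b - 2|E_1|/\omega$, and steady-state periodicity forces $\gamma\,(b - 2|E_1|/\omega) = a$. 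Solving this $2\times 2$ linear system yields $a = -2|E_1|\gamma/(\omega(\gamma+1))$ and $b = -2|E_1|/(\omega(\gamma+1))$ — note Assumption~\ref{open-loop stability} (which for the GCI reduces to $|\gamma| < 1$, so $\gamma + 1 \neq 0$) guarantees the system is invertible and the steady state unique. Translating back: on $t\in[t_{2k}, t_{2k+1})$, $q_i(t) = x_c(t) - u_i(t) = a = -2|E_1|\gamma/(\omega(\gamma+1))$, and on $t\in[t_{2k+1}, t_{2k+2})$, $q_i(t) = b = -2|E_1|/(\omega(\gamma+1))$, which is exactly \eqref{eq: q(t)0}. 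Finally, since $D_R = 0$ and $C_R = 1$, the output is $u_{ci}(t) = x_c(t) = u_i(t) + q_i(t)$, giving \eqref{eq: ci to square wave}.

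The one point requiring a little care — the main ``obstacle,'' though it is minor — is justifying that the constructed periodic trajectory is the actual steady-state solution rather than merely a periodic candidate: I would invoke Assumption~\ref{open-loop stability} together with the result cited around \eqref{eq:open-loop stability} (global asymptotic stability of the $2\pi/\omega$-periodic solution), so that the trajectory satisfying the fixed-point condition is unique and globally attracting. A secondary subtlety is the convention at reset instants themselves: $q_i(t)$ is defined using half-open intervals $[2k, 2k+1)\cdot\pi/\omega$, consistent with $x_c(t^+)$ being the value the state jumps to, so the left endpoints carry the after-reset value; I would state this explicitly to match \eqref{eq: q(t)0}. The rest is the routine integration and linear-algebra bookkeeping sketched above, plus noting that $q_i$ is manifestly $2\pi/\omega$-periodic by construction since its two values repeat every two intervals of length $\pi/\omega$.
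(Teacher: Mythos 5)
Your overall strategy is the same as the paper's: locate the reset instants at $t_i=i\pi/\omega$, integrate the flow between resets, and solve a two-point fixed-point (periodicity) condition for the alternating after-reset values. (The paper does this in the variable $q_i=u_{ci}-u_i$ directly, using the induced reset map $q_i(t_i^+)=\gamma q_i(t_i)+(\gamma-1)u_i(t_i)$, whereas you work in the state $x_c$ and subtract $u_i$ at the end; that is an immaterial change of coordinates.)

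However, there is a concrete error in the execution, consisting of two mistakes that happen to cancel. First, the stated solution of your $2\times2$ system is wrong: from $b=\gamma(a+2|E_1|/\omega)$ and $\gamma(b-2|E_1|/\omega)=a$ one gets $a+b=\gamma(a+b)$, hence $b=-a$, so $a=-2\gamma|E_1|/(\omega(\gamma+1))$ but $b=+2\gamma|E_1|/(\omega(\gamma+1))$, not $-2|E_1|/(\omega(\gamma+1))$ as you claim (your claimed pair satisfies your equations only for $\gamma=-1$). Second, on the odd half-period $q_i$ is not equal to $b$: since $x_c(t)=b-(|E_1|/\omega)[\cos\omega t-\cos\pi]$ there while $u_i(t)=-(|E_1|/\omega)[\cos\omega t-1]$, you have $q_i(t)=b-2|E_1|/\omega$ (equivalently, $u_i(t_{2k+1})=2|E_1|/\omega\neq0$, so the after-reset state and the after-reset value of $q_i$ differ on odd instants). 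Substituting the correct $b$ gives $q_i=2\gamma|E_1|/(\omega(\gamma+1))-2|E_1|/\omega=-2|E_1|/(\omega(\gamma+1))$, which recovers \eqref{eq: q(t)0}; so the lemma does follow from your setup, but only after correcting both steps. The rest — the half-open interval convention, the use of Assumption~\ref{open-loop stability} to guarantee that the periodic candidate is the unique globally attracting steady state, and the identification $u_{ci}=x_c$ from $C_R=1$, $D_R=0$ — is fine and consistent with the paper.
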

\begin{proof} 
The proof can be found in \ref{appen: proof for lemma1}.
\end{proof}

Figure \ref{fig: ci_qi_example} displays the simulation results of an open-loop CI with the input signal $e(t)=\sin(\omega t)$ ($\omega = \pi$ rad/s and $\gamma = 0$). In this case, $q_i(t)$ is a square wave with a period of \SI{2}{\second} and amplitudes of 0 and 0.64, as calculated by \eqref{eq: q(t)0}.
\begin{figure}[htp]
	\centering
	% \missingfigure{Block diagram of FORE}
	\centerline{\includegraphics[width=0.92\columnwidth]{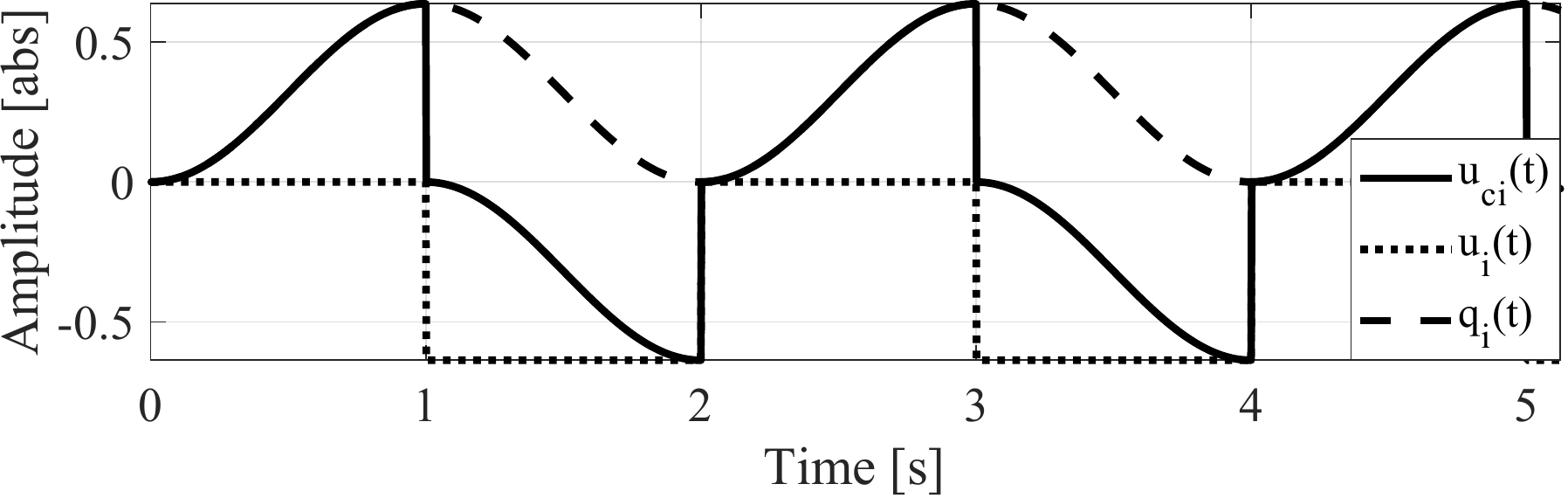}}
	\caption{$u_{ci}(t)$ (solid line), $u_i(t)$ (dotted line), and $q_i(t)$ (dashed line) of open-loop CI.}
	\label{fig: ci_qi_example}
\end{figure}

Theorem \ref{thm: vnl_npi} extends the pulse-based analysis model from the GCI to a open-loop reset controller $\mathcal{C}$ \eqref{eq: State-Space eq of RC} under different input and reset-triggered signals.
\begin{thm} (The pulse-based analysis model for the open-loop reset controller)
\label{thm: vnl_npi}
Consider a reset controller $\mathcal{C}$ described by \eqref{eq: State-Space eq of RC}, where $n_r=1$, subject to a sinusoidal input signal $e(t) = |E_n|\sin(n\omega t + \angle E_n)$ with $\angle E_n \in (-\pi,\pi]$ and $n = 2k+1$ for $k \in \mathbb{N}$, along with a $2\pi/\omega$-periodic reset triggered signal denoted by $e_s(t) = |E_s|\sin(\omega t + \angle E_s)$, where $\angle E_s \in (-\pi,\pi]$, under Assumption \ref{open-loop stability} and adhere to the Zeno-free condition outlined in \eqref{eq: zeno-free}. The input signal $e(t) = |E_n|\sin(n\omega t + \angle E_n)$ varies as a function of time $t$, while the parameters $\omega$ and $n$ remain constant. The steady-state reset output $v(t)$ is expressed as:
\begin{equation}
    v(t) = v_{bl}(t) + v_{nl}(t),
\end{equation}
where $v_{bl}(t)$ is the steady-state base-linear output given by 
\begin{equation}
\label{eq:cor:vbln}
    v_{bl}(t) = |E_n\mathcal{C}_{bl}(n\omega)| \sin(n\omega t + \angle E_n + \angle \mathcal{C}_{bl}(n\omega)).
\end{equation}
The nonlinear signal $v_{nl}(t)$ shares the same phase and period as the reset-triggered signal $e_s(t)$, as obtained by
\begin{equation}
\label{eq:Vnl(t)npi1}
\begin{aligned}
v_{nl}(t) &= \sum\nolimits_{\mu=1}^{\infty}\mathscr{F}^{-1}[\Delta_x(\mu\omega)Q^\mu(\omega)],\ \mu=2k+1,\ k\in\mathbb{N},
\end{aligned}
\end{equation}
where 
\begin{equation}
\label{eq:Vnl(t)npi2}
\begin{aligned}
\Delta_l(n\omega) &= (jn\omega I-A_R)^{-1}B_R,\\
\Delta_x(\mu\omega) &= C_R(j\mu\omega I-A_R)^{-1}j\mu\omega I,\\
\Delta_c^n(\omega) &= |\Delta_l(n\omega)|\sin(\angle\Delta_l(n\omega)+ \angle E_n-n\angle E_s),\\   
Q^\mu(\omega) &= 2|E_n|\Delta_q^n(\omega) \mathscr{F}[\sin(\mu\omega t + \mu\angle E_s)]/(\mu\pi),\\
\Delta_q^n(\omega) &= (I+e^{A_R\pi/\omega})(A_\rho e^{A_R\pi/\omega}-I)^{-1}(I-A_\rho)\Delta_c^n(\omega).
\end{aligned}
\end{equation}
% The amplitude of the nonlinear element $v_{nl}(t)$ is proportional to $|E_n|\Delta_c^n(\omega)$.
\end{thm}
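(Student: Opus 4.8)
The plan is to reduce the general reset controller $\mathcal{C}$ with $n_r=1$ to the GCI case treated in Lemma~\ref{lem: CI MODEL} by a state-space decomposition, and then push the resulting square-wave nonlinearity through the (finite-dimensional, linear) remaining dynamics. First I would exploit linearity of the continuous flow: the state equation $\dot{x}_c = A_R x_c + B_R e$ is linear, so the only nonlinearity in \eqref{eq: State-Space eq of RC} is the jump $x_c(t^+) = A_\rho x_c(t)$ at the reset instants $t_i \in J$. Since the reset-triggered signal is $e_s(t) = |E_s|\sin(\omega t + \angle E_s)$, the reset instants are exactly the zero crossings of a sinusoid, i.e.\ $t_i$ spaced $\pi/\omega$ apart, which is the structural fact that makes the steady-state analysis tractable and which drives the $2\pi/\omega$-periodicity claimed for $v_{nl}$. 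I would write $x_c(t) = x_{bl}(t) + x_{nl}(t)$, where $x_{bl}$ is the forced steady-state response of the base-linear controller $\mathcal{C}_{bl}$ to $e(t) = |E_n|\sin(n\omega t + \angle E_n)$ (giving $v_{bl}(t)$ in \eqref{eq:cor:vbln} directly from \eqref{eq: Cbl} evaluated at $n\omega$, since $e$ has a single harmonic at frequency $n\omega$), and $x_{nl}$ carries the effect of the jumps. Because $n = 2k+1$ is odd, $e(t)$ is anti-periodic over the half-period $\pi/\omega$, which is what lets the reset jumps settle into a periodic pattern commensurate with $e_s$.

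Second, I would set up the steady-state jump bookkeeping. Between consecutive resets the error-free homogeneous evolution is $e^{A_R(t-t_i)}$, and at each $t_i$ the state is multiplied by $A_\rho$. Imposing $2\pi/\omega$-periodicity on $x_{nl}$ over the two reset instants per period gives a linear fixed-point equation for the after-reset value of $x_{nl}$; solving it produces the factor $(I + e^{A_R\pi/\omega})(A_\rho e^{A_R\pi/\omega} - I)^{-1}(I - A_\rho)$ that appears in $\Delta_q^n(\omega)$ — this is the same algebra that yields $\Delta_r$, $\Delta$ in \eqref{eq: HOSIDF}, specialized to $n_r=1$ and rewritten. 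The amount ``jumped away'' at a reset instant is proportional to the value of the base-linear state there, which evaluated via $\Delta_l(n\omega) = (jn\omega I - A_R)^{-1}B_R$ and projected onto the reset coordinate gives $\Delta_c^n(\omega) = |\Delta_l(n\omega)|\sin(\angle\Delta_l(n\omega) + \angle E_n - n\angle E_s)$; the subtraction of $n\angle E_s$ is precisely the phase of the base-linear state sampled at the zero crossings of $e_s$. The net effect is that $x_{nl}$ is the response of the linear system $(A_R, \cdot)$ to a train of impulses located at the reset instants — equivalently, $x_{nl}$ is, up to the finite-dimensional filter $C_R(j\mu\omega I - A_R)^{-1} j\mu\omega I = \Delta_x(\mu\omega)$, a square wave of half-period $\pi/\omega$ and phase $\angle E_s$. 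Expanding that square wave in its Fourier series over odd harmonics $\mu = 2k+1$ with coefficients $2/(\mu\pi)$ yields $Q^\mu(\omega)$ and hence \eqref{eq:Vnl(t)npi1}.

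Concretely, the steps in order are: (i) linearity $\Rightarrow$ split $v = v_{bl} + v_{nl}$ with $v_{bl}$ from $\mathcal{C}_{bl}(n\omega)$; (ii) identify reset instants as zero crossings of $e_s$, hence $\pi/\omega$-spacing, two per $2\pi/\omega$ period; (iii) evaluate the base-linear reset-coordinate state at those instants to get $\Delta_c^n(\omega)$; (iv) solve the steady-state periodicity fixed point for the jump contribution, producing $\Delta_q^n(\omega)$; (v) recognize $x_{nl}$ (reset coordinate) as a half-period square wave filtered through the linear dynamics, expand in odd-harmonic Fourier series, and apply the output map to obtain $\Delta_x(\mu\omega)$ and $v_{nl}$. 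For $n_r=1$ the reset matrix acts on a scalar, so all the matrix inverses in $\Delta_q^n$ are really operations on the single reset state, matching Lemma~\ref{lem: CI MODEL} where $\gamma + 1$ in the denominators is exactly $(A_\rho e^{A_R\pi/\omega} - I)$ evaluated at $A_R = 0$, $A_\rho = \gamma$.

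\textbf{Main obstacle.} The delicate part is establishing that the square-wave structure of the nonlinear component is \emph{exact} in steady state — i.e.\ that once transients have died out (which Assumption~\ref{open-loop stability} guarantees, via the spectral radius condition on $D_R e^{A_R\delta}$, here the scalar $\gamma e^{A_R\delta}$-type bound) the jump at every reset instant has the same magnitude (alternating only between the two half-period slots because of the odd harmonic $n$ and the $\gamma$ scaling), so that $x_{nl}$ is genuinely a pure periodic square wave rather than an asymptotic approximation. This requires carefully tracking that the base-linear state $x_{bl}$, being a single sinusoid at $n\omega$, takes values at the $e_s$-zero-crossings that repeat with the alternating pattern $(+,-)$ over $\pi/\omega$, and that the homogeneous decay between resets does not contaminate the steady-state pattern — i.e.\ verifying the fixed-point equation in step (iv) actually has the claimed solution and that it is attracting. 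A secondary nuisance is the bookkeeping of phases: making sure the $-n\angle E_s$ term and the $\mu\angle E_s$ term in $Q^\mu$ are consistent with evaluating everything relative to the zero crossings of $e_s$ rather than of $e$, which is just careful trigonometric substitution but easy to get wrong by a sign or a factor.
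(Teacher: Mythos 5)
Your proposal follows essentially the same route as the paper's proof: the decomposition $x_c = x_{bl} + x_{nl}$ with the jump dynamics isolated in $x_{nl}$, the evaluation of the base-linear state at the $\pi/\omega$-spaced zero crossings of $e_s$ to obtain $\Delta_c^n(\omega)$, the periodic fixed-point equation yielding the factor $(A_\rho e^{A_R\pi/\omega}-I)^{-1}(I-A_\rho)$ and hence $\Delta_q^n(\omega)$, and the identification of the nonlinear part as an odd-harmonic square wave of phase $\angle E_s$ passed through the filter $\Delta_x(\mu\omega)$ — which is exactly the paper's auxiliary signal $Q(s)=(sI)^{-1}(sI-A_R)X_{nl}(s)$. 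The obstacle you flag (exactness of the alternating jump pattern in steady state and the phase bookkeeping relative to $e_s$) is precisely where the paper spends its effort, so the proposal is a faithful outline of the published argument.
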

\begin{proof}
The proof is provided in \ref{appen: proof for thm1}.
\end{proof}

The reset controller $\mathcal{C}$ in Fig. \ref{fig1:RC system} with the same input signal and reset-triggered signal $e(t) = |E_1|\sin(\omega t+\angle E_1)$, corresponds to the reset controller discussed in Theorem \ref{thm: vnl_npi} when $e_s(t)=e(t)$. Based on Theorem \ref{thm: vnl_npi}, Theorem \ref{thm: Open loop model for RC} presents the Higher-order Sinusoidal Input Describing Function (HOSIDF) for the open-loop reset controller $\mathcal{C}$ in Fig. \ref{fig1:RC system}. The block diagram for $\mathcal{C}$ based on the HOSIDF analysis is depicted in Fig. \ref{fig: State Space of Reset Controller}.
\begin{thm}(The HOSIDF for the open-loop reset controller) 
	\label{thm: Open loop model for RC}
Consider a reset controller $\mathcal{C}$ \eqref{eq: State-Space eq of RC} with one reset state where $n_r = 1$ in response to the input signal and reset-triggered signal $e(t) = |E_1|\sin(\omega t+\angle E_1),\ (\angle E_1\in(-\pi,\pi])$, operating under Assumption \ref{open-loop stability} and satisfying the Zeno-free condition in \eqref{eq: zeno-free}. The steady-state output signal $v(t)$ comprises $n\in\mathbb{N}$ harmonics, expressed as $v(t) = \sum\nolimits_{n=1}^{\infty} v_n(t)$, with the Fourier transform defined as $V(\omega) = \sum\nolimits_{n=1}^{\infty} V_n(\omega)$. In Fig. \ref{fig: State Space of Reset Controller}, the application of the \enquote{Virtual Harmonic Generator} \cite{saikumar2021loop, nuij2006higher} introduces $e_{1n}(t) = |E_1|\sin(n\omega t+n\angle E_1)$, with Fourier transform denoted as $E_{1n}(\omega)$. The $n$-th steady-state transfer function of $\mathcal{C}$, denoted as $\mathcal{C}_n(\omega)$, represents the ratio of $V_n(\omega)$ to $E_{1n}(\omega)$ and is given by:
\begin{equation}
\label{eq: u_ol(t)}
    \mathcal{C}_n(\omega) = \frac{V_n(\omega)}{E_{1n}(\omega)}=
    \begin{cases}
    \mathcal{C}_{bl}(\omega) + \mathcal{C}_{nl}(\omega),\ &\text{for }n=1,\\
    \mathcal{C}_{nl}(n\omega),\ &\text{for odd } n>1,\\    
    0,\ &\text{for even } n\geq2.
    \end{cases}
\end{equation}
where $\mathcal{C}_{bl}(\omega)$ is the base-linear transfer function given in \eqref{eq: Cbl} and $\mathcal{C}_{nl}(n\omega)$ is derived by
\begin{equation}
\label{Cnl_final}
\begin{aligned}
 \Delta_l(\omega) &=  (j\omega I-A_R)^{-1}B_R,\\
 \Delta_q(\omega) &=  (I+e^{A_R \pi/\omega})\Delta_v(\omega),\\
 \Delta_c(\omega)&=\left|\Delta_l(\omega)\right| \sin (\angle \Delta_l(\omega)),\\
 \mathcal{C}_{nl}(n\omega)&= {2}\Delta_x (n\omega)\Delta_q(\omega) /{(n\pi)},\\
 \Delta_x (n\omega) &= C_R(jn\omega I-A_R)^{-1}jn\omega I,\\ 
 \Delta_v(\omega) &= (A_\rho e^{A_R\pi/\omega}-I)^{-1}(I-A_\rho)\Delta_c(\omega).
 % , (T_R = [1, 0,...,0]^T \in \mathbb{N}^{n_c\times 1}).
\end{aligned}	
\end{equation}
\end{thm}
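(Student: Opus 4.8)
The plan is to obtain Theorem \ref{thm: Open loop model for RC} as a direct specialization of Theorem \ref{thm: vnl_npi}, in which the reset-triggered signal $e_s(t)$ and the input signal $e(t)$ coincide, and then to repackage the resulting time-domain decomposition as a harmonic-by-harmonic transfer function via the Virtual Harmonic Generator. First I would set $n=1$ and $e_s(t)=e(t)=|E_1|\sin(\omega t+\angle E_1)$ in Theorem \ref{thm: vnl_npi}. The base-linear part \eqref{eq:cor:vbln} then reads $v_{bl}(t)=|E_1\mathcal{C}_{bl}(\omega)|\sin(\omega t+\angle E_1+\angle\mathcal{C}_{bl}(\omega))$, which is exactly the output of $\mathcal{C}_{bl}(\omega)$ acting on the fundamental $e_{11}(t)=e(t)$; this identifies the $\mathcal{C}_{bl}(\omega)$ term in \eqref{eq: u_ol(t)} for $n=1$. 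Because $\mathcal{C}_{bl}$ is linear, it contributes nothing to harmonics $n>1$, so those come entirely from the nonlinear part.

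Next I would simplify the nonlinear signal $v_{nl}(t)$ from \eqref{eq:Vnl(t)npi1}–\eqref{eq:Vnl(t)npi2} under the substitution $n=1$, $\angle E_s=\angle E_1$. With $n=1$, the phase argument in $\Delta_c^n$ collapses: $\angle\Delta_l(\omega)+\angle E_1-1\cdot\angle E_1=\angle\Delta_l(\omega)$, so $\Delta_c^1(\omega)=|\Delta_l(\omega)|\sin(\angle\Delta_l(\omega))=\Delta_c(\omega)$ in the notation of \eqref{Cnl_final}. Likewise $\Delta_q^1(\omega)=(I+e^{A_R\pi/\omega})(A_\rho e^{A_R\pi/\omega}-I)^{-1}(I-A_\rho)\Delta_c(\omega)=\Delta_q(\omega)$, matching $\Delta_q(\omega)=(I+e^{A_R\pi/\omega})\Delta_v(\omega)$ with $\Delta_v(\omega)=(A_\rho e^{A_R\pi/\omega}-I)^{-1}(I-A_\rho)\Delta_c(\omega)$. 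The harmonic content $Q^\mu(\omega)=2|E_1|\Delta_q(\omega)\,\mathscr{F}[\sin(\mu\omega t+\mu\angle E_1)]/(\mu\pi)$ is then a pure $\mu$-th harmonic of unit-amplitude form with phase $\mu\angle E_1$; filtering it by $\Delta_x(\mu\omega)=C_R(j\mu\omega I-A_R)^{-1}j\mu\omega I$ and summing over odd $\mu$ gives $v_{nl}(t)=\sum_{\mu\ \mathrm{odd}}\mathscr{F}^{-1}[\Delta_x(\mu\omega)Q^\mu(\omega)]$.

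The final step is the change of viewpoint to describing functions. The Virtual Harmonic Generator feeds the block the signal $e_{1n}(t)=|E_1|\sin(n\omega t+n\angle E_1)$, whose Fourier transform $E_{1n}(\omega)$ has magnitude $|E_1|$ and phase $n\angle E_1$ at frequency $n\omega$ — precisely the normalization appearing in $Q^n(\omega)$. Hence for odd $n>1$ the $n$-th harmonic of $v(t)$ is $V_n(\omega)=\Delta_x(n\omega)Q^n(\omega)=[2\Delta_x(n\omega)\Delta_q(\omega)/(n\pi)]E_{1n}(\omega)$, so $\mathcal{C}_n(\omega)=V_n(\omega)/E_{1n}(\omega)=2\Delta_x(n\omega)\Delta_q(\omega)/(n\pi)=\mathcal{C}_{nl}(n\omega)$; for $n=1$ one adds the base-linear contribution to get $\mathcal{C}_{bl}(\omega)+\mathcal{C}_{nl}(\omega)$; and even harmonics vanish because $v_{nl}$ is a half-wave (anti)symmetric waveform built from a square-wave-type pulse synchronized at $\pi/\omega$, exactly as in Lemma \ref{lem: CI MODEL}. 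I would close by noting consistency with \eqref{eq: HOSIDF}: one checks $\mathcal{C}_{bl}(\omega)+\mathcal{C}_{nl}(\omega)=H_1(\omega)$ and $\mathcal{C}_{nl}(n\omega)=H_n(\omega)$ for odd $n>1$ by algebraic manipulation of $\Theta_D$, $\Gamma_r$, $\Lambda$. The main obstacle is bookkeeping rather than conceptual: carefully tracking the complex phases through the $\mathscr{F}$/$\mathscr{F}^{-1}$ pair so that the per-harmonic normalization matches $E_{1n}(\omega)$ exactly, and verifying that the matrix identity relating $(\Delta_q,\Delta_v,\Delta_c)$ here to $(\Theta_D,\Gamma_r,\Delta_r,\Lambda)$ in \eqref{eq: HOSIDF} is an identity and not merely an approximation.
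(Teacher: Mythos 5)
Your proposal is correct and follows essentially the same route as the paper's own proof: specialize Theorem \ref{thm: vnl_npi} to $e_s(t)=e(t)$ (so $\Delta_c^1=\Delta_c$, $\Delta_q^1=\Delta_q$), note that the base-linear part feeds only the fundamental while $V_q^\mu(\omega)=\Delta_x(\mu\omega)Q^\mu(\omega)$ supplies the odd harmonics with exactly the $E_{1n}(\omega)$ normalization, and divide harmonic by harmonic. Your added phase bookkeeping and the consistency check against $H_n(\omega)$ in \eqref{eq: HOSIDF} are more explicit than the paper's two-line appendix but not a different argument.
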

\begin{figure}[htp]
	\centering
	%	\missingfigure{Block diagram of FORE}
	\centerline{\includegraphics[width=0.92\columnwidth]{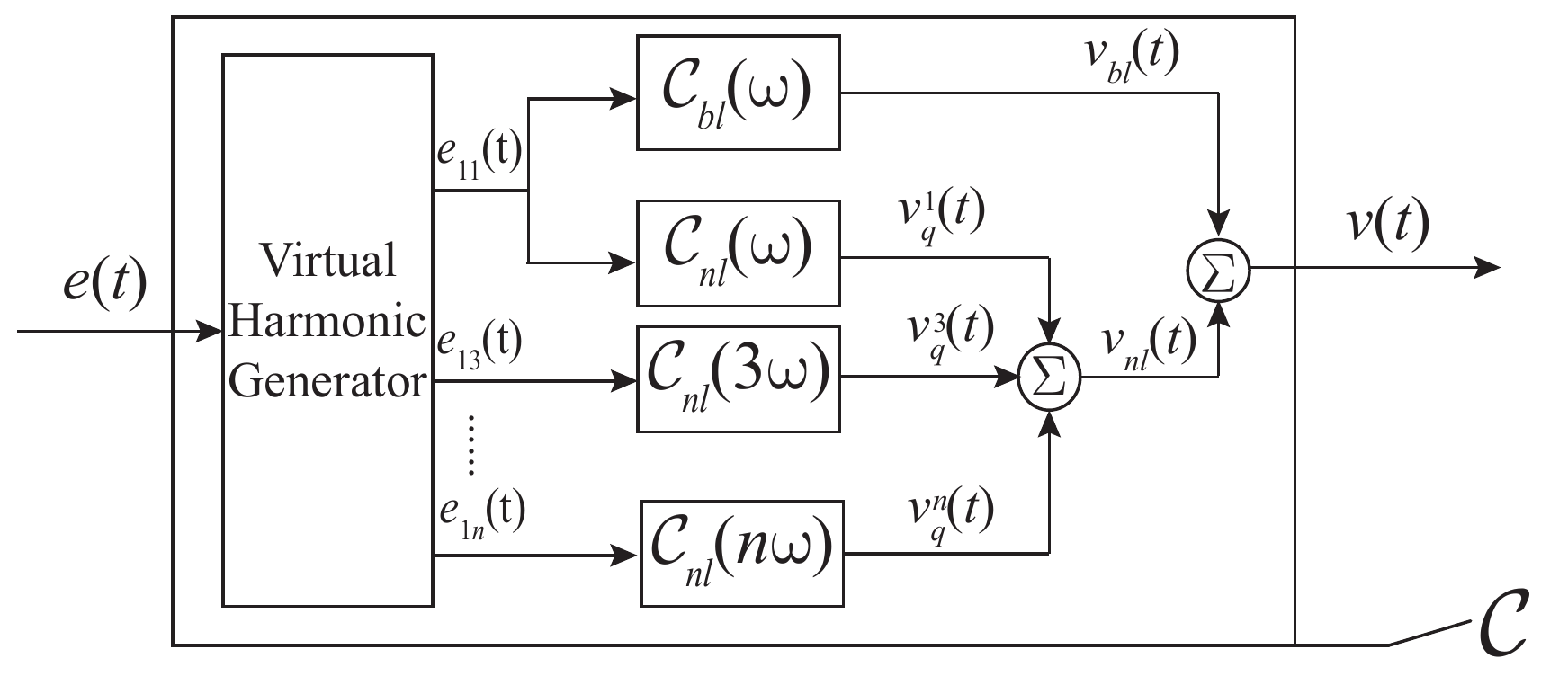}}
	\caption{The new block diagram of an open-loop reset controller $\mathcal{C}$.}
	\label{fig: State Space of Reset Controller}
 \end{figure}	
\begin{proof}
The proof can be found in \ref{appen: proof for thm2}.
\end{proof}
\begin{cor}
\label{cor: v,vnl}
Consider a reset controller $\mathcal{C}$ \eqref{eq: State-Space eq of RC} with one reset state where $n_r = 1$ in response to the input signal and reset-triggered signal $e(t) = |E_1|\sin(\omega t+\angle E_1),\ (\angle E_1\in (-\pi,\pi])$. The reset controller $\mathcal{C}$ \eqref{eq: State-Space eq of RC} operates under Assumption \ref{open-loop stability} and satisfies the Zeno-free condition in \eqref{eq: zeno-free}. The steady-state reset output signal $v(t)$ is expressed as:
\begin{equation}
\label{eq:vt=vbl+vnl}
    v(t) = v_{bl}(t) +v_{nl}(t),
\end{equation}
where $v_{bl}(t)$ is the steady-state base-linear output given by
\begin{equation}
\label{eq:vbl-11}
    v_{bl}(t) = |E_1\mathcal{C}_{bl}(\omega)|\sin(\omega t + \angle \mathcal{C}_{bl}(\omega)).
\end{equation}
The nonlinear signal $v_{nl}(t)$ is given by
\begin{equation}
\label{vnlt_rem}
    v_{nl}(t) = \sum\nolimits_{n=1}^{\infty}\mathscr{F}^{-1}[E_{1n}(\omega)\mathcal{C}_{nl}(n\omega)].
\end{equation}
\end{cor}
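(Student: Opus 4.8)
The plan is to obtain the time-domain decomposition \eqref{eq:vt=vbl+vnl} as a direct consequence of the harmonic-wise identity established in Theorem \ref{thm: Open loop model for RC}. The hypotheses of Corollary \ref{cor: v,vnl} coincide exactly with those of Theorem \ref{thm: Open loop model for RC} (one reset state with $n_r=1$, sinusoidal input equal to the reset-triggered signal, Assumption \ref{open-loop stability}, and the Zeno-free condition \eqref{eq: zeno-free}), so the steady-state reset output admits the convergent harmonic expansion $v(t) = \sum_{n=1}^{\infty} v_n(t)$ with $V(\omega) = \sum_{n=1}^{\infty} V_n(\omega)$ and $V_n(\omega) = \mathcal{C}_n(\omega)\,E_{1n}(\omega)$ for every $n$, where $\mathcal{C}_n(\omega)$ is the three-case expression \eqref{eq: u_ol(t)} and $E_{1n}(\omega)$ is the Fourier transform of the virtual harmonic $e_{1n}(t)=|E_1|\sin(n\omega t + n\angle E_1)$.

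First I would apply $\mathscr{F}^{-1}$ termwise and sum, writing $v(t) = \sum_{n=1}^{\infty}\mathscr{F}^{-1}[\mathcal{C}_n(\omega)E_{1n}(\omega)]$; the termwise inversion and regrouping are legitimate because the steady state exists and the associated Fourier series converges under Assumption \ref{open-loop stability} and \eqref{eq: zeno-free}. Next I would isolate the $n=1$ term and substitute $\mathcal{C}_1(\omega)=\mathcal{C}_{bl}(\omega)+\mathcal{C}_{nl}(\omega)$ from \eqref{eq: u_ol(t)}. The contribution $\mathscr{F}^{-1}[\mathcal{C}_{bl}(\omega)E_{11}(\omega)]$ is simply the LTI steady-state response of $\mathcal{C}_{bl}$ to the sinusoid $e_{11}(t)=e(t)$, which evaluates to the single sinusoid $v_{bl}(t)$ in \eqref{eq:vbl-11}. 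Regrouping what remains — the $\mathcal{C}_{nl}(\omega)$ part of the $n=1$ term together with $\mathscr{F}^{-1}[\mathcal{C}_n(\omega)E_{1n}(\omega)]$ for $n\ge 2$, and using $\mathcal{C}_n(\omega)=\mathcal{C}_{nl}(n\omega)$ for odd $n>1$ and $\mathcal{C}_n(\omega)=0$ for even $n$ — collapses exactly to $v_{nl}(t)=\sum_{n=1}^{\infty}\mathscr{F}^{-1}[E_{1n}(\omega)\mathcal{C}_{nl}(n\omega)]$ of \eqref{vnlt_rem}. Adding the two pieces yields $v(t)=v_{bl}(t)+v_{nl}(t)$.

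Since the argument is essentially bookkeeping on an already-proven identity, I do not anticipate a substantive obstacle; the only points needing care are (i) the termwise manipulation of the infinite Fourier sum, which is handled by the existence and convergence of the steady-state solution, and (ii) the index convention in \eqref{vnlt_rem}: because $\mathcal{C}_n(\omega)=0$ for even $n$, the even harmonics are absent from $v(t)$, so the sum written over all $n$ is to be read as ranging over odd $n$ only, consistent with the $\mu=2k+1$ structure of $v_{nl}$ in Theorem \ref{thm: vnl_npi}. As a consistency check I would also verify that specializing Theorem \ref{thm: vnl_npi} to $e_s(t)=e(t)$ (i.e. $|E_s|=|E_1|$, $\angle E_s=\angle E_1$, $n=1$) reproduces the same $v_{nl}(t)$, by matching $E_{1\mu}(\omega)\mathcal{C}_{nl}(\mu\omega)$ with $\Delta_x(\mu\omega)Q^{\mu}(\omega)$ harmonic by harmonic.
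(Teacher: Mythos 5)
Your proposal is correct and follows essentially the same route as the paper: the paper's proof likewise obtains \eqref{vnlt_rem} by identifying $V_q^n(\omega)=E_{1n}(\omega)\mathcal{C}_{nl}(n\omega)$ from the definition of $\mathcal{C}_{nl}$ in the proof of Theorem \ref{thm: Open loop model for RC}, summing over $n$, and invoking the already-established split $V(\omega)=V_{bl}(\omega)+V_{nl}(\omega)$ before inverse-transforming. Your extra care about termwise inversion and the odd-$n$ index convention is consistent with the paper's $\mu=2k+1$ structure and does not change the argument.
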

\begin{proof}    
The proof is provided in \ref{appen: proof for cor1}.
\end{proof}
Note that our earlier work \cite{kaczmarek2022steady} conceptualized the pulse-based model for the open-loop reset controller but did not conclude with a frequency-domain analysis for this controller. Theorem \ref{thm: Open loop model for RC} completes this research and presents a new HOSIDF for analyzing the open-loop reset controller. 

Applying Theorem \ref{thm: Open loop model for RC}, Remark \ref{rem: open-loop RCS} provides the analysis for the open-loop reset control system in Fig. \ref{fig1:RC system}.
\begin{rem} 
\label{rem: open-loop RCS}
For an open-loop reset control system in Fig. \ref{fig1:RC system} with the sinusoidal input signal $e(t) = |E_1|\sin(\omega t + \angle E_1)$ and under Assumption \ref{open-loop stability}, the transfer function $\mathcal{L}_n(\omega), \ n\in\mathbb{N}$ from the input $e(t)$ to the steady-state output $y(t)$ is composed of a linear transfer function $\mathcal{L}_{bl}(\omega)$ and nonlinear transfer functions $\mathcal{L}_{nl}(n\omega)$, given by
\begin{equation}
		\label{eq: OL MODEL FOR RCS}
		\begin{aligned}
		\mathcal{L}_n(\omega) = \begin{cases}
			\mathcal{L}_{bl}(\omega) + \mathcal{L}_{nl}(\omega), & \text{for } n=1\\
			\mathcal{L}_{nl}(n\omega), & \text{for odd } n >1\\
			0, & \text{for even } n \geqslant 2 
		\end{cases}
		\end{aligned}
	\end{equation}
	with
	\begin{equation}
		\label{eq: L, LNL}
		\begin{aligned}
			\mathcal{L}_{bl}(n\omega) &= \mathcal{C}_{bl}(n\omega)\mathcal{C}_{\alpha}(n\omega)\mathcal{P}(n\omega),\\
			\mathcal{L}_{nl}(n\omega) &= \mathcal{C}_{nl}(n\omega)\mathcal{C}_{\alpha}(n\omega)\mathcal{P}(n\omega),
		\end{aligned} 	
	\end{equation} 
 where $\mathcal{C}_{bl}(n\omega)$ and $\mathcal{C}_{nl}(n\omega)$ are given in \eqref{eq: Cbl} and \eqref{Cnl_final}, respectively.
 
The steady-state output signal $y(t)$ of the open-loop reset system is given by
\begin{equation}
\label{eq: ol-y(t)}
\begin{aligned}
    y(t) &= \mathscr{F}^{-1}[E_{1n}(\omega)\mathcal{L}_n(\omega)],\\
    E_{1n}(\omega) &= |E_1|\sin(n\omega t + n\angle E_1).
\end{aligned}
\end{equation}    
\end{rem}
To validate the accuracy of Theorem \ref{thm: Open loop model for RC} and Remark \ref{rem: open-loop RCS}, Fig. \ref{Open_loop_validate} compares the simulated and equation \eqref{eq: ol-y(t)}-predicted output signals $y(t)$ in an reset control system with input signal of $e(t) = \sin (400\pi t)$. The parameters of the system are given as: $\mathcal{C}_{bl}(s) = 1/(s/(300\pi)+1)$, $\mathcal{C}_{\alpha}(s) = (s/(75\pi)+1)/(s/(1200\pi)+1)$, $\mathcal{P}(s) = 1$, and $\gamma =0$. 
\begin{figure}[h]
	%	\missingfigure{RCsystem}
	\centerline{\includegraphics[width=0.92\columnwidth]{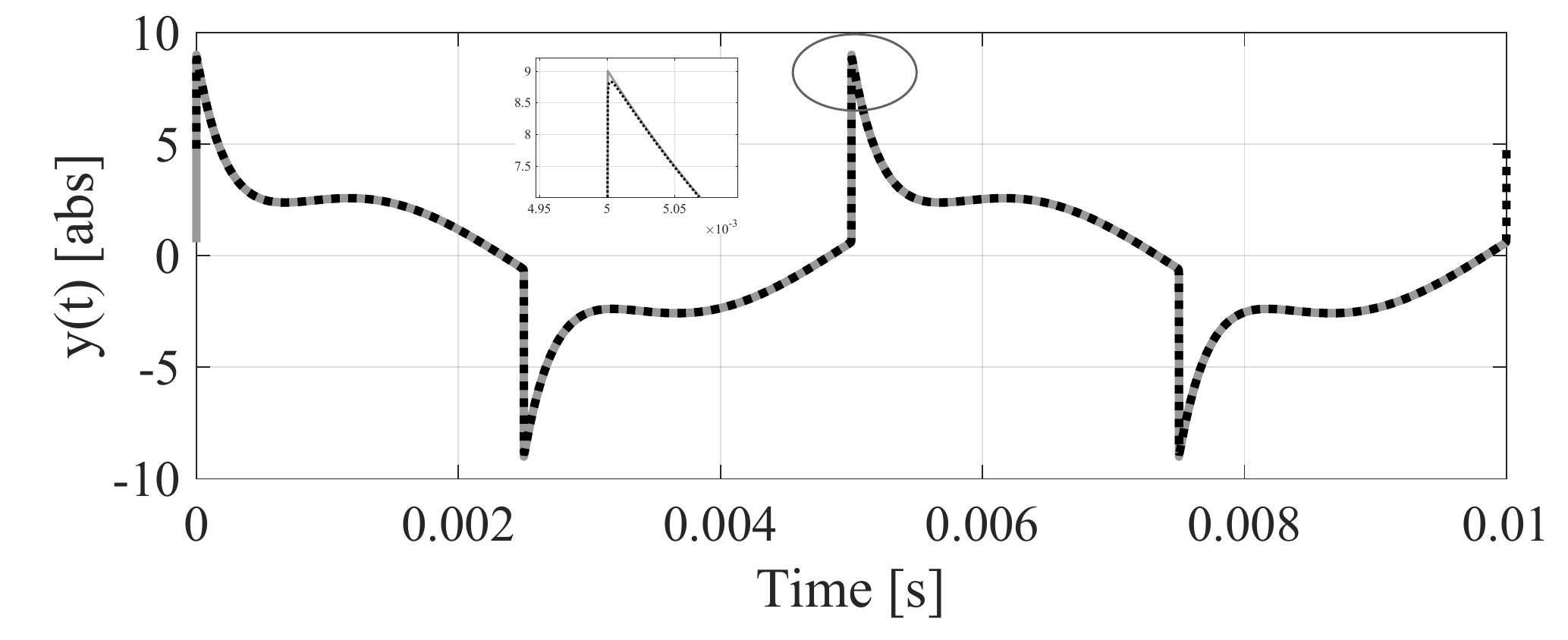}}
	\caption{The comparison between the simulated and the Theorem \ref{thm: Open loop model for RC}-predicted output signals in a reset control system.}
	\label{Open_loop_validate}
\end{figure}

The results demonstrate the accuracy of Theorem \ref{thm: Open loop model for RC} and Remark \ref{rem: open-loop RCS}. Note that that in the zoom-in plot, the slight discrepancy at the signal edges between the two plots arises from the consideration of 5000 harmonics during the calculation process, whereas the actual output comprises an infinite number of harmonics. The choice of the number of harmonics considered in the calculation allows readers to balance computation time and prediction precision.
%----------------------------------------------------

%----------------------------------------------------
% \begin{rem}
The newly introduced HOSIDF $\mathcal{C}_n(\omega)$ \eqref{eq: u_ol(t)} is mathematically equivalent to the function $H_n(\omega)$ \eqref{eq: HOSIDF} when the input signal and reset-triggered signal for the reset controller are the same, denoted as $e(t) = |E_1|\sin(\omega t +\angle E_1)$. However, the new HOSIDF offers several new insights: (1) It facilitates the separation of the output $v(t)$ of the reset controller into linear $v_{bl}(t)$ \eqref{eq:vbl-11} and nonlinear $v_{nl}(t)$ \eqref{vnlt_rem} components. Specifically, $v_{bl}(t)$ is derived from the base-linear transfer function $\mathcal{C}_{bl}(\omega)$ \eqref{eq: Cbl}, while $v_{nl}(t)$ is obtained from $\mathcal{C}_{nl}(n\omega)$ \eqref{Cnl_final}. (2) The nonlinear signal $v_{nl}(t)$ is a filtered pulse signal that shares the same phase and period as the reset-triggered signal $e_s(t)$. (3) The magnitude of the signal $v_{nl}(t)$ is determined by $\mathcal{C}_{nl}(n\omega)$. These insights enable the connection between open-loop analysis and closed-loop frequency analysis, as discussed in Section \ref{subsec: Closed-loop Model}.
\section{The Frequency Response Analysis for the Closed-loop Reset System}
\label{subsec: Closed-loop Model}
Based on the open-loop analysis in Theorem \ref{thm: Open loop model for RC}, this section develops the closed-loop frequency response analysis for the reset control depicted in Fig. \ref{fig1:RC system}.

In a closed-loop reset control system under a sinusoidal input signal \(r(t) = |R|\sin (\omega t)\) at steady states, a ``two-reset system" is defined as a reset system having two reset instants per steady-state cycle, while a ``multiple-reset system" involves more than two reset instants per steady-state cycle. In the two-reset system, the dominated component of the error signal $e(t)$ is the first-order harmonic $e_1(t)$ in \eqref{eq: e,y,u}. Multiple-reset actions, such as those implemented in the PI+CI control system, introduce excessive higher-order harmonics compared to two-reset actions \cite{banos2007definition}. This issue can be mitigated through careful design considerations \cite{saikumar2019constant, karbasizadeh2020band}. Conditions for achieving periodic output in a multiple-reset system, where the interval between successive resets is not constant, are discussed in \cite{beker2001analysis}. Classical DF also assumes the $e_1(t)$ results in reset actions. Moreover, based on the authors' best knowledge, most practical reset control systems in the literature are designed to take advantage of two-reset systems \cite{banos2012reset}. Given these insights, it is essential to explore frequency-domain analysis methods tailored to two-reset systems.  Hence, we introduce the following assumption. 
\begin{assum}
\label{assum:2reset}
     There are two reset instants in a SISO closed-loop reset control system with a sinusoidal reference input signal $r(t) = |R|\sin(\omega t)$ at steady states, where the reset-triggered signal is $e_1(t)$. 
\end{assum}
Note that achieving Assumption \ref{assum:2reset} is feasible through practical reset control design. For instance, the CgLp reset element introduced in \cite{saikumar2019constant} enables the realization of wide-band two-reset systems. The analysis of multiple-reset systems extends beyond the current scope and will be the focus of our future research.

Under Assumptions \ref{assum: stable} and \ref{assum:2reset}, the reset actions in the closed-loop SISO reset control system occur when $e_{1}(t) = |E_1|\sin(\omega t + \angle E_1)= 0$, where $\angle E_1\in(-\pi,\pi]$. The set of reset instants for this closed-loop reset system is denoted as $J_m := \{t_m = (m\pi - \angle E_1)/\omega | m \in \mathbb{Z}^+\}$. Since the reset interval $\sigma_m = t_{m+1}-t_m = \pi/\omega> \delta_{\text{min}}$ \cite{barreiro2014reset}, the trajectories for the reset system are Zeno-free.
% \end{defn}
\begin{figure}[htp]
	\centering
	%	\missingfigure{CLCI}
	\includegraphics[width=\columnwidth]{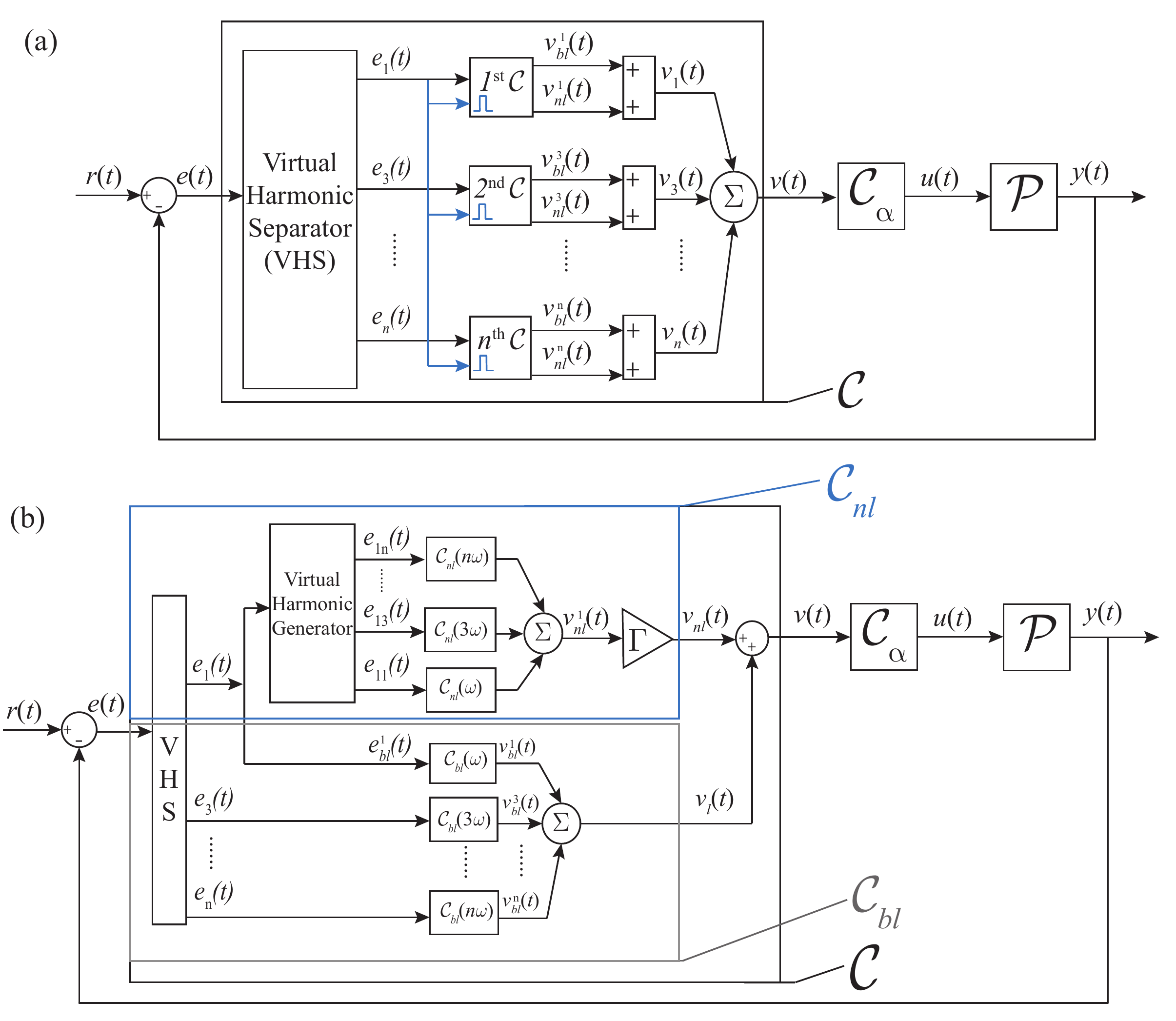}
	\caption{Block diagrams for the closed-loop RCS, wherein (a) the resetting actions are indicated by the blue lines. In (b), the reset controller is decomposed into two components: the linear part $\mathcal{C}_{bl}$ within the grey box and the nonlinear part $\mathcal{C}_{nl}$ contained within the blue box.}
	\label{fig: Model for Closed-loop RCS}
\end{figure}

Figure \ref{fig: Model for Closed-loop RCS}(a) constructs the first block diagram for the closed-loop reset control system under Assumptions \ref{assum: stable} and \ref{assum:2reset}. In this block diagram, first, utilizing the \enquote{Virtual Harmonic Separator} \cite{nuij2006higher}, the error signal $e(t)$ is decomposed into its harmonics, $e_n(t)$, as defined in \eqref{eq: e,y,u}. Next, each $e_n(t)$ is filtered by the reset controller $\mathcal{C}$, resulting in a response denoted as $v_n(t)$. By summing up $v_n(t)$, the reset output signal $v(t)$ is obtained. In this context, we refer to $\mathcal{C}$ with the input signal $e_n(t)$ as the $n$-th reset controller. Theorem \ref{thm: vnl_npi} illustrates that the steady-state output of a reset controller $\mathcal{C}$ under a sinusoidal input signal is composed of base-linear $v_{bl}(t)$ and nonlinear $v_{nl}(t)$ components. Let $v_{bl}^{n}(t)$ and $v_{nl}^{n}(t)$ represent the steady-state base-linear and nonlinear output signals for the $n$-th $\mathcal{C}$. Then, the steady-state reset output signal $v(t)$ in the closed-loop reset system is given by:
\begin{equation}
\label{eq: ut1}
\begin{aligned}
    v(t) &= \sum\nolimits_{n=1}^{\infty}v_{n}(t),\\
    v_n(t) &= v_{bl}^{n}(t) + v_{nl}^{n}(t).    
\end{aligned}
\end{equation}
From \eqref{eq: ut1}, by defining
\begin{equation}
\label{vl,vnl}
    \begin{aligned}
       v_{l}(t) &=  \sum\nolimits_{n=1}^{\infty}v_{bl}^{n}(t),\\
       v_{nl}(t) &=  \sum\nolimits_{n=1}^{\infty}v_{nl}^{n}(t),
    \end{aligned}
\end{equation}
$v(t)$ can be written as
\begin{equation}
\label{vt,vl,vnl}
    v(t) = v_{l}(t) + v_{nl}(t).
\end{equation}
In the Fourier domain, equation \eqref{eq: ut1} is expressed as
\begin{equation}
\label{eq: uw2}
\begin{aligned}
    V(\omega) &= \sum\nolimits_{n=1}^{\infty} V_n(\omega),\\
    V_n(\omega) &= V_{bl}^{n}(\omega) + V_{nl}^{n}(\omega).
\end{aligned}
\end{equation}
Derived from \eqref{eq:cor:vbln}, $ V_{bl}^{n}(\omega)$ is given by
\begin{equation}
\label{eq: vl}
    V_{bl}^{n}(\omega)=E_n(\omega)\mathcal{C}_{bl}(n\omega).
\end{equation}

Based on the block diagram for the closed-loop reset system shown in Fig. \ref{fig: Model for Closed-loop RCS}(a), Theorem \ref{thm: Pulse-based model for RCS in closed loop} concludes the development of the pulse-based model for the closed-loop reset control system, visually represented in Fig. \ref{fig: Model for Closed-loop RCS}(b).
\begin{thm} (The pulse-based analysis model for the closed-loop reset system)
\label{thm: Pulse-based model for RCS in closed loop}
In a closed-loop reset control system (with reset controller $\mathcal{C}$ \eqref{eq: State-Space eq of RC} where $n_r = 1$), as depicted in Fig. \ref{fig1:RC system} with a sinusoidal reference input signal $r(t) = |R|\sin(\omega t)$ and under Assumptions \ref{assum: stable} and \ref{assum:2reset}, the steady-state reset output signal $v(t)$ is expressed as:
\begin{equation}
    \label{eq:u_sum}
    \begin{aligned}
        v(t) &= v_{l}(t) + v_{nl}(t),\\
        v_{l}(t) &= \sum\nolimits_{n=1}^{\infty}v_{bl}^{n}(t),\\
        v_{nl}(t) &= \Gamma(\omega)\sum\nolimits_{n=1}^{\infty}v_{nl}^{1}(t),\\
        v_{bl}^{n}(t) &= \mathscr{F}^{-1}[E_n(\omega)\mathcal{C}_{bl}(n\omega)],\\
        v_{nl}^{1}(t) &= \sum\nolimits_{n=1}^{\infty}\mathscr{F}^{-1}[E_{1n}(\omega)\mathcal{C}_{nl}(n\omega)],
    \end{aligned}
\end{equation}
    where 
    \begin{equation}
    \label{eq: Gamma_thm}
        \begin{aligned}
            E_{1n}(\omega) &= |E_1|\mathscr{F}^{-1}[\sin(n\omega t+n\angle E_1)],\\
            \Delta_c^1(\omega) &=|\Delta_l(\omega)| \sin(\angle \Delta_l(\omega)),\\
            \Gamma(\omega) &= 1/(1-{\sum\nolimits_{n=3}^{\infty}\Psi_n(\omega)\Delta_c^n(\omega)}{/\Delta_c^1(\omega)}),\\
            \Psi_n(\omega) &= {|\mathcal{L}_{nl}(n\omega)|}/{|1+\mathcal{L}_{bl}(n\omega)|},\ n=2k+1,\ k\in\mathbb{N},\\
        \Delta_c^n(\omega) &= -|\Delta_l(n\omega)| \sin(\angle\mathcal{L}_{nl}(n\omega) - \angle (1+\mathcal{L}_{bl}(n\omega)) +\\
        &\indent \indent\indent\indent\indent\indent\indent\indent\indent\indent\angle \Delta_l(n\omega)),\text{ for } n>1.
        \end{aligned}
    \end{equation}
% \end{enumerate}   
Functions $\mathcal{C}_{bl}(\omega)$, $\mathcal{C}_{nl}(\omega)$, $\Delta_l(n\omega)$, $\mathcal{L}_{bl}(n\omega)$ and $\mathcal{L}_{nl}(n\omega)$ can be found in \eqref{eq: Cbl}, \eqref{Cnl_final}, \eqref{eq:Vnl(t)npi2} and \eqref{eq: L, LNL}, respectively.
\end{thm}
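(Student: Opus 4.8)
The plan is to close the feedback loop around the open-loop decomposition of Theorem \ref{thm: Open loop model for RC} and solve the resulting fixed-point relation for the nonlinear part of the reset output. First I would recall from Corollary \ref{cor: v,vnl} and equations \eqref{eq: ut1}--\eqref{eq: vl} that for the $n$-th reset controller (the controller $\mathcal{C}$ excited by the harmonic $e_n(t)$) the steady-state output splits as $v_n(t) = v_{bl}^n(t) + v_{nl}^n(t)$ with $V_{bl}^n(\omega) = E_n(\omega)\mathcal{C}_{bl}(n\omega)$, and — crucially — Theorem \ref{thm: vnl_npi} tells us that $v_{nl}^n(t)$ is a \emph{filtered pulse} whose shape and phase are fixed by the reset-triggered signal $e_1(t)$ (so it lives at harmonics $\mu = 1,3,5,\dots$ of $\omega$, with phase locked to $\angle E_1$), while its amplitude is carried entirely by the scalar $\Delta_c^n(\omega)$ in \eqref{eq:Vnl(t)npi2}. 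Thus, up to the single real scaling factor $\Delta_c^n(\omega)/\Delta_c^1(\omega)$, every $v_{nl}^n(t)$ is a copy of $v_{nl}^1(t)$; this is the structural observation that makes the closed-loop sum collapse to $\Gamma(\omega)\sum_n v_{nl}^1(t)$.

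Next I would set up the self-consistency equations harmonic by harmonic. For each odd $n>1$, the $n$-th harmonic of the error obeys $E_n(\omega) = -\mathcal{L}_{bl}(n\omega)\,[\text{contribution of }v_{bl}^n] - (\text{contribution of the }n\text{-th harmonic of }v_{nl})$, routed through $\mathcal{C}_\alpha(n\omega)\mathcal{P}(n\omega)$ and the feedback; solving the base-linear part gives the factor $\mathcal{S}_{bl}(n\omega) = 1/(1+\mathcal{L}_{bl}(n\omega))$, so that the $n$-th-harmonic amplitude seen by the reset element is $|\mathcal{L}_{nl}(n\omega)|/|1+\mathcal{L}_{bl}(n\omega)| = \Psi_n(\omega)$ times the amplitude of the pulse at that harmonic. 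Because the pulse amplitude at harmonic $n$ is proportional to the very quantity $\Delta_c^n(\omega)$ that we are trying to determine, and because $v_{nl}^n$ is itself proportional to $v_{nl}^1$ with ratio $\Delta_c^n/\Delta_c^1$, substituting back produces a scalar linear equation: $v_{nl} = v_{nl}^1 + \sum_{n\geq 3}\Psi_n(\omega)\,(\Delta_c^n(\omega)/\Delta_c^1(\omega))\,v_{nl}$, i.e. the contributions of the higher harmonics feed back into the pulse amplitude. Solving for $v_{nl}$ yields exactly the geometric-series factor $\Gamma(\omega) = 1/\bigl(1 - \sum_{n=3}^\infty \Psi_n(\omega)\Delta_c^n(\omega)/\Delta_c^1(\omega)\bigr)$, and the sign convention in $\Delta_c^n(\omega)$ of \eqref{eq: Gamma_thm} is precisely what bookkeeps the phase $\angle\mathcal{L}_{nl}(n\omega)-\angle(1+\mathcal{L}_{bl}(n\omega))$ picked up in one pass around the loop. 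The base-linear part $v_l(t) = \sum_n v_{bl}^n(t)$ with $V_{bl}^n(\omega) = E_n(\omega)\mathcal{C}_{bl}(n\omega)$ then drops out directly from \eqref{eq: vl}, and $v_{nl}^1(t)$ is given by Corollary \ref{cor: v,vnl} applied to the first-harmonic error $e_1(t)$, i.e. \eqref{vnlt_rem} with $|E_1|$ in place of the amplitude — giving the last line of \eqref{eq:u_sum}.

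I would then verify convergence of the series defining $\Gamma(\omega)$: under Assumptions \ref{assum: stable} and \ref{assum:2reset} the $H_\beta$ stability guarantees the steady-state solution exists and the harmonic amplitudes $|E_n|$ decay, while $\Psi_n(\omega)\to 0$ as $n\to\infty$ because $\mathcal{C}_{nl}(n\omega)$ carries a $1/n$ factor (from \eqref{Cnl_final}) together with the roll-off of $\mathcal{C}_\alpha(n\omega)\mathcal{P}(n\omega)$; this ensures the denominator of $\Gamma(\omega)$ is well defined and, generically, nonzero. Finally I would reassemble $v(t) = v_l(t) + v_{nl}(t)$ and check that when all higher harmonics are switched off ($\Psi_n\equiv 0$) the formula reduces to $\Gamma(\omega)=1$ and hence to the open-loop decomposition of Corollary \ref{cor: v,vnl}, which is a consistency sanity check and also recovers the first-order term shared with Methods A and B.

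The main obstacle, and the step deserving the most care, is justifying that \emph{all} the nonlinear sub-harmonics generated through the feedback loop are still filtered pulses locked to the \emph{same} phase and period as $e_1(t)$, so that the infinite family $\{v_{nl}^n\}_{n \text{ odd}}$ really is a one-parameter family of rescalings of $v_{nl}^1$. This is exactly where Method B of \cite{saikumar2021loop} was incomplete — it discarded the reset contributions of $e_n(t)$ for $n>1$. Here one must invoke Theorem \ref{thm: vnl_npi} with the reset-triggered signal taken to be $e_s(t)=e_1(t)$ (period $2\pi/\omega$, phase $\angle E_1$) while the \emph{input} to the reset element is the higher harmonic $e_n(t)$: the theorem's conclusion \eqref{eq:Vnl(t)npi1}--\eqref{eq:Vnl(t)npi2} then guarantees the pulse is triggered at the instants of $e_1$ and that its amplitude is the scalar $\Delta_c^n(\omega)$, so the only freedom is that single scalar. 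Getting the scalar fixed-point equation to close — tracking that the $n$-th harmonic of the \emph{total} pulse $v_{nl}$, after one loop transit, contributes to the amplitude of the \emph{whole} pulse rather than merely to its $n$-th harmonic — is the delicate bookkeeping, and it is what produces the pooled factor $\Gamma(\omega)$ multiplying $\sum_n v_{nl}^1(t)$ rather than a term-by-term modification.
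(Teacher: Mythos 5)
Your proposal is correct and follows essentially the same route as the paper's proof in Appendix E: invoke Theorem \ref{thm: vnl_npi} with input $e_n(t)$ and reset-triggered signal $e_1(t)$ to show each $v_{nl}^n$ is a phase-locked rescaling of $v_{nl}^1$ with amplitude carried by $|E_n|\Delta_c^n(\omega)$, use the loop relation to obtain $|E_n|=\Gamma(\omega)\Psi_n(\omega)|E_1|$ and the phase shift entering $\Delta_c^n(\omega)$, and solve the resulting scalar linear equation $\Gamma = 1 + \Gamma\sum_{n\geq 3}\Psi_n\Delta_c^n/\Delta_c^1$. Your added remarks on series convergence and the $\Gamma\to 1$ consistency check are sensible supplements not present in the paper, but they do not change the argument.
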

\begin{proof}
The proof is provided in \ref{appen: proof for thm3}. 
\end{proof}
\begin{rem}
\label{rem:Gamma - Vnl}
From \eqref{vl,vnl} and \eqref{eq:u_sum}, $\Gamma(\omega)$ in \eqref{eq: Gamma_thm} represents the ratio of $v_{nl}(t)=1+\sum_{n=3}^{\infty}v_{nl}^n(t)$ to $v_{nl}^1(t)$ at the input frequency $\omega$. It serves as an indicator of the relative magnitude of higher-order harmonics $\sum_{n=3}^{\infty} v_{nl}^n(t)$ compared to the first-order harmonic $v_{nl}^1(t)$. A larger value of $\Gamma(\omega)$ indicates a relatively larger magnitude of higher-order harmonics in closed-loop reset systems. 
\end{rem}
% \begin{rem}
% \label{rem: Gamma_En_E1}
% Based on \eqref{eq:G(w)1}, $|\Gamma(\omega) -1| \propto \frac{\sum\nolimits_{n=1}^{\infty}|E_n| }{|E_1|}$.
% \end{rem}

Building upon the analytical model introduced in Theorem \ref{thm: Pulse-based model for RCS in closed loop}, we propose a new Higher-Order Sinusoidal Input Describing Function (HOSIDF) for the frequency response analysis of closed-loop reset control systems. The detailed formulation is presented in Theorem \ref{thm: Method C, new HOSIDF}.
\begin{thm} 
\label{thm: Method C, new HOSIDF}
(The closed-loop HOSIDF for SISO reset control systems) Consider a closed-loop SISO reset control system in Fig. \ref{fig1:RC system} with a reset controller $\mathcal{C}$ \eqref{eq: State-Space eq of RC} (where $n_r=1$) and to a sinusoidal reference input signal $r(t) = |R|\sin(\omega t)$. This system complies with Assumptions \ref{assum: stable} and \ref{assum:2reset}. By employing the \enquote{Virtual Harmonics Generator} to the input signal $r(t)$, the signal $r_n(t) = |R|\sin (n\omega t),\ n\in\mathbb{N}$ is introduced, along with its Fourier transform $R_n(\omega)$. The transfer functions $\mathcal{S}_n(\omega)$, $\mathcal{T}_n(\omega)$, and $\mathcal{CS}_n(\omega)$ of the closed-loop reset system at steady states are defined as follows:
	 \begin{equation}
	 	\label{eq: sensitivity functions in CL}
    \resizebox{1\hsize}{!}{$
%	 		$
	 		\mathcal{S}_n(\omega)= \frac{E_n(\omega)}{R_n(\omega)}=
	 		\begin{cases}
	 			\frac{1}{1+\mathcal{L}_{o}(\omega)} , & \text{for } n=1\\
	 			-\frac{\Gamma(\omega)\mathcal{L}_{nl}(n\omega)|\mathcal{S}_{1}(\omega)|e^{jn\angle \mathcal{S}_{1}(\omega)}}{1+\mathcal{L}_{bl}(n\omega)} , & \text{for odd} \ n > 1\\
	 			0, & \text{for even} \ n \geqslant 2 \end{cases}
			$}
	 \end{equation}
	\begin{equation}
		\label{eq: Complementary sensitivity functions in CL}
    \resizebox{1\hsize}{!}{$
			\mathcal{T}_n(\omega)= \frac{Y_n(\omega)}{R_n(\omega)}=
			\begin{cases}
				\frac{\mathcal{L}_{o}(\omega)}{1+\mathcal{L}_{o}(\omega)} , & \text{for } n=1\\
				\frac{\Gamma(\omega)\mathcal{L}_{nl}(n\omega)|\mathcal{S}_{1}(\omega)|e^{jn\angle \mathcal{S}_{1}(\omega)}}{1+\mathcal{L}_{bl}(n\omega)} , & \text{for odd} \ n > 1\\
				0, & \text{for even} \ n \geqslant 2 \end{cases}
			$}
	\end{equation}
        \begin{equation}
	\label{eq: CS}        
        \mathcal{CS}_n(\omega)= \frac{U_n(\omega)}{R_n(\omega)}=\frac{\mathcal{T}_n(\omega)}{\mathcal{P}(n\omega)},
        \end{equation}
 %        \begin{equation}
	% \label{eq: Theta}
 %        \Theta_n(\omega) = \frac{V_n(\omega)}{R_n(\omega)}=\frac{\mathcal{T}_n(\omega)}{\mathcal{C}_\alpha(n\omega)\mathcal{P}(n\omega)},
 %        \end{equation}
%         \begin{equation}
% 		\label{eq: CS}
% %		\resizebox{\textwidth}{!}{%
% %			$
% 			\mathcal{CS}_n(\omega)= 
% 			\begin{cases}
% 				\frac{\mathcal{L}_{o}(\omega)}{\mathcal{P}(\omega)(1+\mathcal{L}_{o}(\omega))} , & \text{for } n=1\\
% 				\frac{\Gamma(\omega)\mathcal{C}_{nl}(n\omega)\mathcal{C}_{\alpha}(n\omega)|\mathcal{S}_{1}(\omega)|e^{jn\angle \mathcal{S}_{1}(\omega)}}{1+\mathcal{L}_{bl}(n\omega)} , & \text{for odd} \ n > 2\\
% 				0, & \text{for even} \ n \geqslant 2 \end{cases}
% %			$
% 			%
% %			}
% 	\end{equation}
where
\begin{equation}
\label{eq: linear and nonlinear part of t_mc}
\begin{aligned}
  R_n(\omega) &= |R|\mathscr{F}[\sin(n\omega t)],\\
\mathcal{L}_{o}(n\omega) &= \mathcal{L}_{bl}(n\omega) + \Gamma(\omega)\mathcal{L}_{nl}(n\omega).	
\end{aligned}
\end{equation}
Functions $\mathcal{L}_{bl}(n\omega)$, $\mathcal{L}_{nl}(n\omega)$, and $\Gamma(\omega)$ are given in \eqref{eq: L, LNL} and \eqref{eq: Gamma_thm}. 
\end{thm}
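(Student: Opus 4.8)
The plan is to derive the closed-loop sensitivity functions directly from the pulse-based model of Theorem~\ref{thm: Pulse-based model for RCS in closed loop}, translating the time-domain decomposition $v(t) = v_l(t) + v_{nl}(t)$ into the Fourier domain and then closing the loop harmonic by harmonic. First I would write the closed-loop relation $E_n(\omega) = R_n(\omega) - \mathcal{C}_\alpha(n\omega)\mathcal{P}(n\omega) V_n(\omega)$ for each harmonic $n$, which is valid because the plant $\mathcal{P}_\alpha$ is linear and therefore maps the $n$-th harmonic of $v$ to the $n$-th harmonic of $y$. Using \eqref{eq: uw2}, \eqref{eq: vl} and the expression for $v_{nl}$ in \eqref{eq:u_sum}, I would substitute $V_n(\omega) = E_n(\omega)\mathcal{C}_{bl}(n\omega) + \Gamma(\omega)\, V_{nl,n}^{1}(\omega)$, where $V_{nl,n}^1(\omega)$ is the $n$-th harmonic contribution of the first reset controller's nonlinear pulse, known from Theorem~\ref{thm: Open loop model for RC} to equal $E_{1n}(\omega)\mathcal{C}_{nl}(n\omega)$ with $E_{1n}$ built from the fundamental error $e_1(t)$.

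The key algebraic step is the $n=1$ case: here $E_1(\omega) = R_1(\omega) - \mathcal{C}_\alpha(\omega)\mathcal{P}(\omega)\big(E_1(\omega)\mathcal{C}_{bl}(\omega) + \Gamma(\omega) E_1(\omega)\mathcal{C}_{nl}(\omega)\big)$, since for $n=1$ the harmonic generator input $E_{11}(\omega)$ coincides with $E_1(\omega)$ itself. Solving for $E_1(\omega)/R_1(\omega)$ gives $\mathcal{S}_1(\omega) = 1/(1 + \mathcal{L}_{bl}(\omega) + \Gamma(\omega)\mathcal{L}_{nl}(\omega)) = 1/(1+\mathcal{L}_o(\omega))$, which is exactly \eqref{eq: sensitivity functions in CL} for $n=1$; $\mathcal{T}_1$ then follows from $Y_1 = \mathcal{L}_{bl}(\omega)E_1 + \Gamma(\omega)\mathcal{L}_{nl}(\omega)E_1$ after dividing by $R_1$. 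For odd $n>1$, the harmonic input driving the nonlinear pulse is $e_{1n}(t) = |E_1|\sin(n\omega t + n\angle E_1)$, so $E_{1n}(\omega)$ has magnitude $|E_1| = |R|\,|\mathcal{S}_1(\omega)|$ and phase $n\angle\mathcal{S}_1(\omega)$; meanwhile the base-linear feedback at frequency $n\omega$ contributes the factor $1/(1+\mathcal{L}_{bl}(n\omega))$. Combining: $E_n(\omega) = -\mathcal{C}_\alpha(n\omega)\mathcal{P}(n\omega)\Gamma(\omega)\mathcal{C}_{nl}(n\omega)E_{1n}(\omega)\big/(1+\mathcal{L}_{bl}(n\omega))$, and dividing by $R_n(\omega) = |R|\mathscr{F}[\sin(n\omega t)]$ — noting that $R_n$ and $E_{1n}$ share the same magnitude $|R|$ so the ratio reduces to $|\mathcal{S}_1(\omega)|e^{jn\angle\mathcal{S}_1(\omega)}$ — yields \eqref{eq: sensitivity functions in CL}. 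The control sensitivity \eqref{eq: CS} is immediate since $u_n(t)$ and $y_n(t)$ are related by the linear plant $\mathcal{P}(n\omega)$, and even harmonics vanish because $\mathcal{C}_{nl}(n\omega) = 0$ for even $n$ by Theorem~\ref{thm: Open loop model for RC}.

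The main obstacle I anticipate is justifying rigorously that the nonlinear part of the closed-loop output, $v_{nl}(t) = \Gamma(\omega)\sum_n v_{nl}^1(t)$, is genuinely a \emph{scaled copy} of the first reset controller's pulse response rather than a sum of independent pulse responses from every harmonic $e_n(t)$. This is the content already absorbed into Theorem~\ref{thm: Pulse-based model for RCS in closed loop}: because all harmonics $e_n(t)$ cross zero synchronously with $e_1(t)$ under Assumption~\ref{assum:2reset}, every reset controller resets at the same instants, so all the nonlinear pulse components are phase-locked square-wave–type signals differing only by a real scaling $\Delta_c^n(\omega)/\Delta_c^1(\omega)$; summing the geometric-type series over the higher harmonics produces the factor $\Gamma(\omega)$ in \eqref{eq: Gamma_thm}. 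Given that this structural fact is supplied by the preceding theorem, the remaining work here is the bookkeeping of magnitudes and phases through the loop, together with checking convergence of the harmonic series (guaranteed by Assumption~\ref{assum: stable} and the exponential convergence it implies), so the proof is essentially a careful Fourier-domain substitution plus the $n=1$ fixed-point solve described above.
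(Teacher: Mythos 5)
Your proposal follows essentially the same route as the paper's proof: take the per-harmonic decomposition $V_n(\omega) = E_n(\omega)\mathcal{C}_{bl}(n\omega) + \Gamma(\omega)E_{1n}(\omega)\mathcal{C}_{nl}(n\omega)$ supplied by Theorem~\ref{thm: Pulse-based model for RCS in closed loop}, solve the $n=1$ loop equation to get $\mathcal{S}_1(\omega)=1/(1+\mathcal{L}_o(\omega))$, and for odd $n>1$ propagate $E_{1n}(\omega)/R_n(\omega)=|\mathcal{S}_1(\omega)|e^{jn\angle\mathcal{S}_1(\omega)}$ through the base-linear feedback $1/(1+\mathcal{L}_{bl}(n\omega))$, with even harmonics vanishing since $\mathcal{C}_{nl}(n\omega)=0$ there. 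The only imprecision is your opening loop relation, which includes $R_n(\omega)$ for every $n$ even though the physical reference contains only the fundamental — the correct relation for $n>1$ is $E_n(\omega)=-V_n(\omega)\mathcal{C}_\alpha(n\omega)\mathcal{P}(n\omega)$ as in \eqref{eq:E_n} — but since you drop that term in your final combination, the result and the argument match the paper's.
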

\begin{proof}
The proof is provided in \ref{appen: Proof for thm4}.
\end{proof}
% \begin{rem}
%     In a closed-loop RCS, $\Gamma(\omega) \propto |T_n(n\omega)|= |S_n(n\omega)|$. Thus, larger $\Gamma(\omega)$ 
% \end{rem}
\begin{rem}
\label{cor: e,y,u,v}
Consider a SISO reset control system under Assumptions \ref{assum: stable} and with a sinusoidal reference signal $r(t) = |R|\sin(\omega t)$, as depicted in Fig. \ref{fig1:RC system}. By utilizing the \enquote{Virtual Harmonic Generator}, the input signal $r(t)$ generates $n=2k+1 (k\in\mathbb{N})$ harmonics $r_n(t) = |R|\sin(n\omega t)$. The Fourier transform for $r_n(t)$ is denoted as $R_n(\omega)$. The steady-state error signal $e(t)$, output signal $y(t)$, and control input signal $u(t)$ are expressed as follows:
    \begin{equation}
        \begin{aligned}
            e(t) &= \sum\nolimits_{n=1}^{\infty}e_n(t)=\sum\nolimits_{n=1}^{\infty}\mathscr{F}^{-1}\left[\mathcal{S}_n(\omega)R_n(\omega)\right],\\
            y(t) &= \sum\nolimits_{n=1}^{\infty}y_n(t)= \sum\nolimits_{n=1}^{\infty}\mathscr{F}^{-1}\left[\mathcal{T}_n(\omega)R_n(\omega)\right],\\
            u(t) &= \sum\nolimits_{n=1}^{\infty}u_n(t)=\sum\nolimits_{n=1}^{\infty}\mathscr{F}^{-1}\left[\mathcal{CS}_n(\omega)R_n(\omega)\right].
            % v(t) &= \sum\nolimits_{n=1}^{\infty}\mathscr{F}^{-1}\left[\Theta_n(\omega)R_n(\omega)\right].
        \end{aligned}
    \end{equation}
\end{rem}
\begin{rem}
\label{rem: Gamma}
The function $\Gamma(\omega)$ in \eqref{eq: Gamma_thm}, represents the ratio of the nonlinear outputs $v_{nl}(t)$ to $v^1_{nl}(t)$ in \eqref{eq:u_sum} at input frequency $\omega$. Previous frequency response analysis methods, Method A and Method B, introduced in Section \ref{subsec: PS}, assume that the higher-order harmonics $v^n_{nl}(t)$ (for $n>1$) generated by $e_n(t)$ are zero, thereby implying $\Gamma(\omega) = 1$. However, this assumption does not hold across the entire frequency spectrum of the reset control system, leading to inaccuracies in the analysis. Theorem \ref{thm: Method C, new HOSIDF} addresses these inaccuracies analytically by introducing $\Gamma(\omega)$ in \eqref{eq: Gamma_thm}. While Theorem \ref{thm: Method C, new HOSIDF} exhibits comparable accuracy to Method B within the frequency range where $\Gamma(\omega) = 1$, it has superior accuracy compared to Method B when $\Gamma(\omega) \neq 1$, particularly in scenarios where the magnitudes of higher-order harmonics $v^n_{nl}(t)$ (for $n>1$) are significant. 
% Method A not only disregards the $v^n_{nl}(t)$ (for $n>1$) but also neglects the $v^n_{bl}(t)$ (for $n>1$) in \eqref{eq:u_sum}. 
\end{rem}
% \begin{rem}
% According to Theorem \ref{thm: Method C, new HOSIDF} and Remark \ref{cor: e,y,u,v}, the frequency response analysis for closed-loop SISO reset control systems to sinusoidal inputs reveals that the magnitudes of the higher-order harmonics (\(n>1\)) in both $y(t)$ and $e(t)$ are equal, while their phases exhibit opposite signs.
% \end{rem}

% \section{Illustrative Examples}
% \label{sec: results}
% This section shows the experimental setup: a precision motion stage and the T-reset control systems designed for illustration. The effectiveness of Theorem \ref{thm: Method C, new HOSIDF} in predicting the closed-loop behavior of T-reset control systems is demonstrated. Additionally, the superior performance of T-reset control systems in enhancing steady-state precision compared to traditional RCS is illustrated.
\section{Case Study 1: Proportional-Clegg-Integrator Proportional-Integrator-Derivative (PCI-PID) Control System}
\label{example 1}
In this section, we design a PCI-PID controller on a precision motion stage for validating the Theorem \ref{thm: Method C, new HOSIDF}. 
\subsection{Precision Positioning Setup}
\label{sec:spider}
The plant utilized in this study is a planar motion system with three degrees of freedom as depicted in Fig. \ref{fig1:Spyder}(a), referred to as \enquote{Spyder} stage. This system employs dual leaf flexures, each associated with corresponding masses ($M_1$, $M_2$, $M_3$), for connection to the base ($M_c$). These masses are driven by three voice coil actuators labeled $A_1$, $A_2$, and $A_3$. Linear encoders (denoted as \enquote{Enc}), specifically Mercury M2000 with a resolution of 100 nm and sampled at 10 kHz, are utilized to monitor the positions of the masses. Additionally, with additional oversampling introduced on the FPGA, this resolution is increased to 3.125 nm. For the SISO investigation, only actuator $A_1$ is used to position mass $M_1$. The control systems are implemented on an NI compactRIO platform and incorporate a linear current source power amplifier.
\begin{figure}[h]
	%	\missingfigure{RCsystem}
	\centerline{\includegraphics[width=\columnwidth]{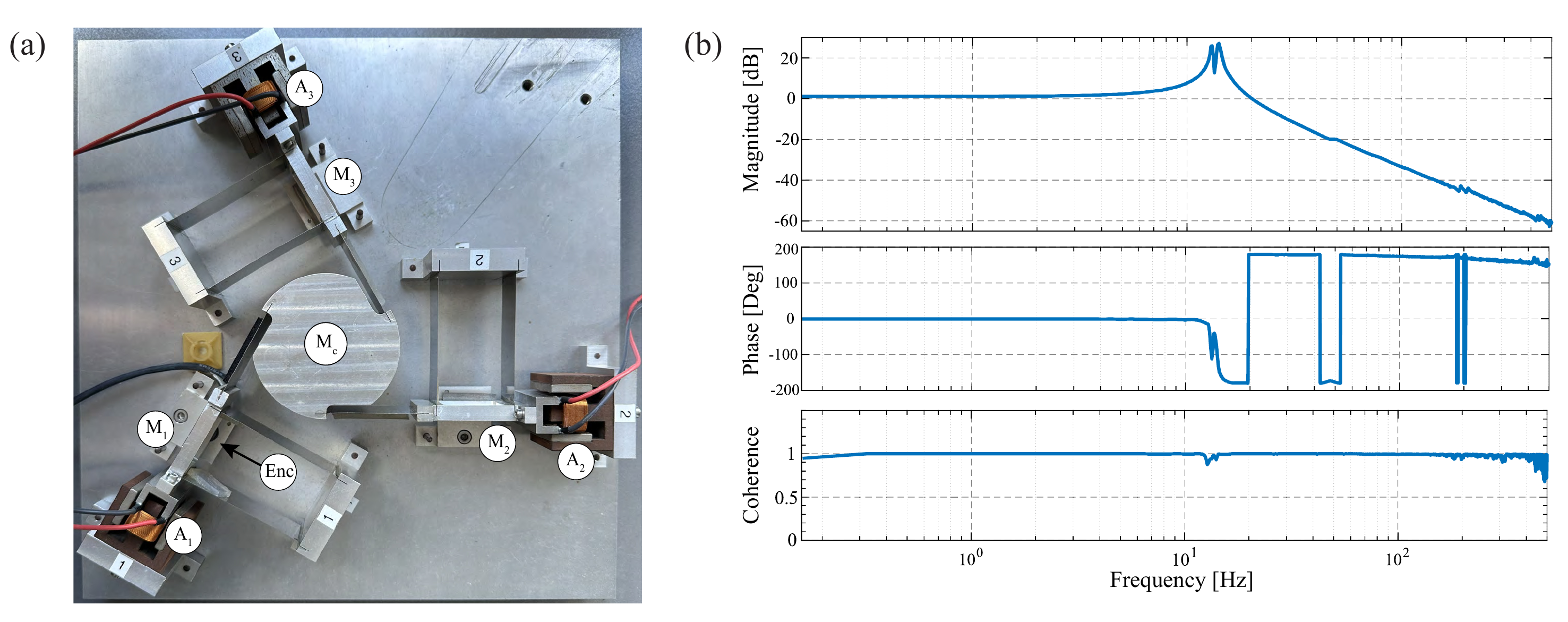}}
	\caption{(a) The planar precision positioning system. (b) The FRF data from actuator $A_1$ to attached mass $M_1$.}
	\label{fig1:Spyder}
\end{figure}

Figure \ref{fig1:Spyder}(b) depicts the measured Frequency Response Function (FRF) of the system, which exhibits a collocated double mass-spring-damper system with additional high-frequency parasitic dynamics. To improve control clarity, the system is approximated to a single eigenmode mass-spring-damper system using Matlab's identification tool. The transfer function of the system is expressed as:
\begin{equation}
\label{eq:P(s)}
    \mathcal{P}(s) = \frac{6.615e5}{83.57s^2+279.4s+5.837e5}.
\end{equation}
% \begin{figure}[h]
% 	%	\missingfigure{RCsystem}
% 	\centerline{\includegraphics[width=0.5\columnwidth]{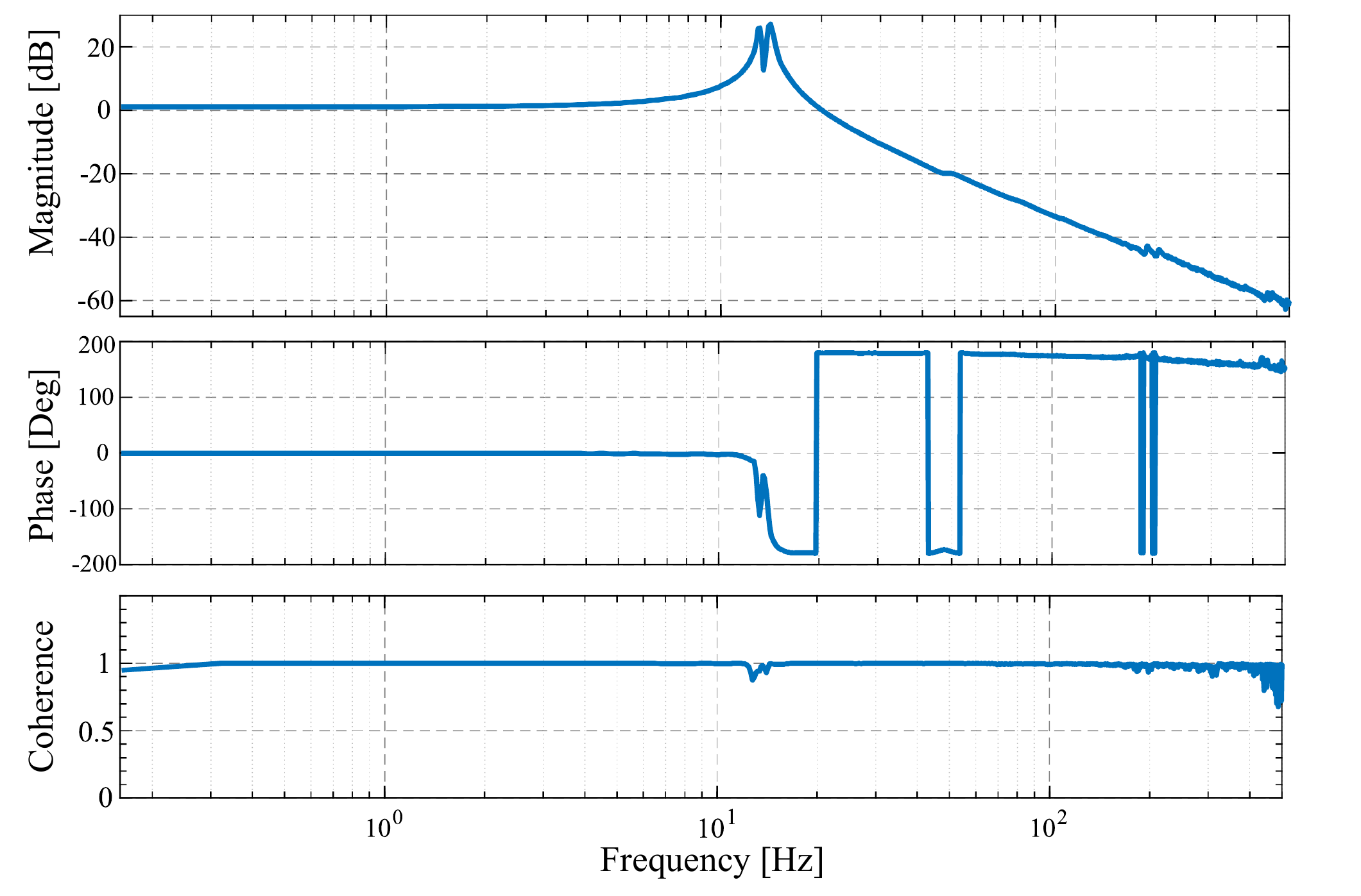}}
% 	\caption{FRF data from actuator $A_1$ to attached mass $M_1$.}
% 	\label{fig1: Bode2 of Spyder}
% \end{figure}
\subsection{The Validation and the Limitation of Theorem \ref{thm: Method C, new HOSIDF} on the Analysis of the PCI-PID Control System}
The PID controller is widely used in industries. Within the PID framework, we design a PCI-PID controller for the precision motion stage, as shown in Fig. \ref{Structure_PCIID}. The parameters of the PCI-PID system are as follows: the reset value $\gamma=0$, $K_p = 20.5$, the cut-off frequency $\omega_c = 2\pi\cdot 150$ [rad/s], $\omega_d = \omega_c/4.8$, $\omega_t=\omega_c\cdot4.8$, $\omega_f = 10\omega_c$, and $\omega_i=0.1\omega_c$. The design specifications for the PCI-PID system aim to achieve a bandwidth (BW) of 150 Hz and a phase margin (PM) of 50$\degree$ in the open-loop. Additionally, the stability and convergence of the system have been verified through testing.
\begin{figure}[h]
    \centering
    \centerline{\includegraphics[width=\columnwidth]{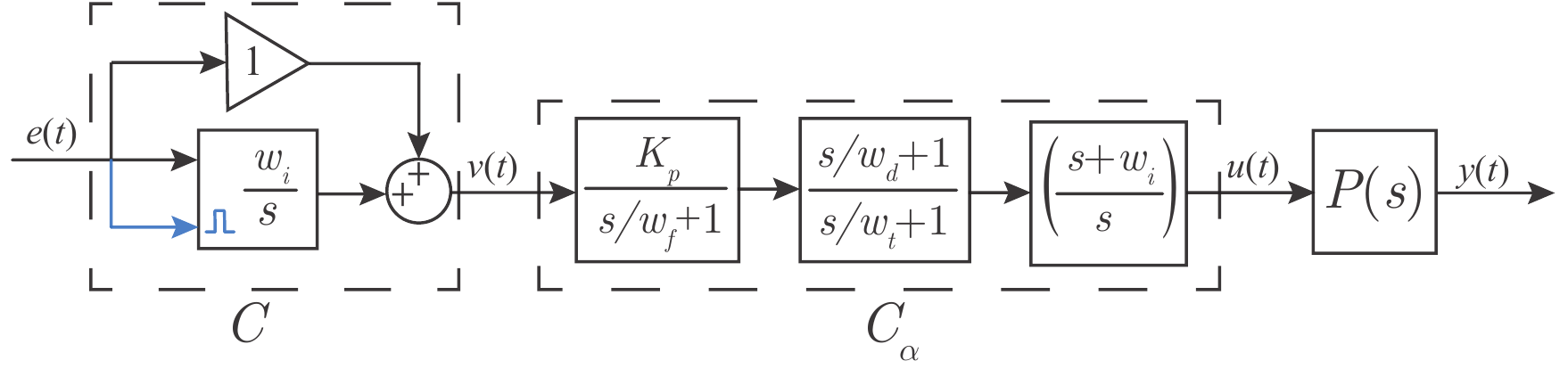}}
    \caption{The open-loop block diagram of the PCI-PID control system.}
	\label{Structure_PCIID}
\end{figure}
\begin{figure}[htpb]
	\centering
	\includegraphics[width=0.98\columnwidth]{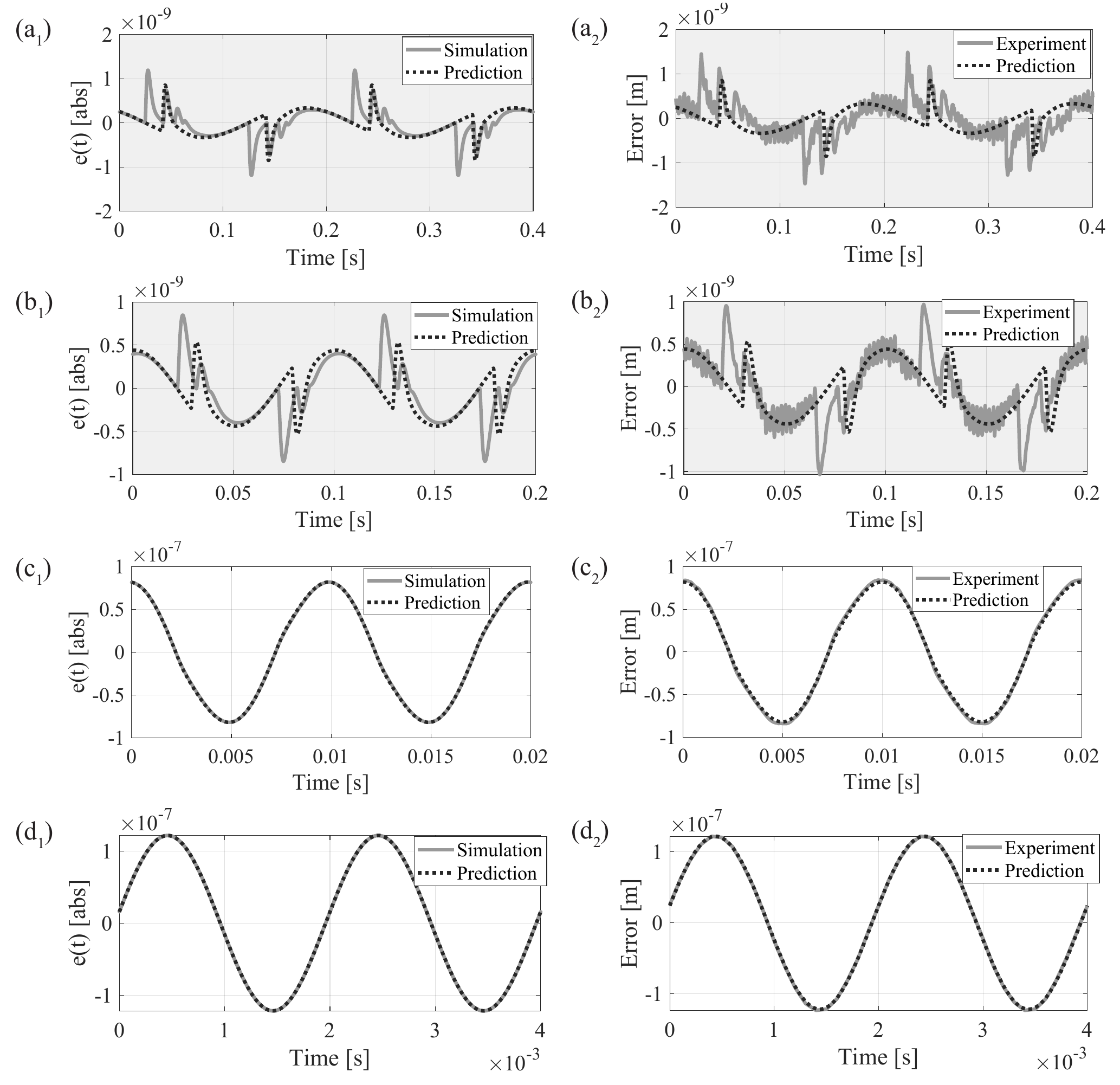}
	\caption{The steady-state errors $e(t)$ for the PCI-PID system under sinusoidal input $r(t) = \sin(2\pi f t)$ at input frequencies $f$ of 5 Hz in ($a_1$) and ($a_2$), 10 Hz in ($b_1$) and ($b_2$), 100 Hz in ($c_1$) and ($c_2$), and 500 Hz in ($d_1$) and ($d_2$) as predicted, simulated, and experimentally measured. The grey background indicates scenarios involving multiple-reset actions.}
	\label{CL PCIID_et_3freqs}
\end{figure} 

The steady-state errors indicate the tracking precision of the motion control system. We utilize the steady-state errors to validate the accuracy of Theorem \ref{thm: Method C, new HOSIDF}. Figures \ref{CL PCIID_et_3freqs}($c_1$), ($c_2$), ($d_1$), and ($d_2$) depict the steady-state errors \(e(t)\) of the PCI-PID system under input signal $r(t)=1E-7\sin(2\pi f)$ at input frequencies $f$ of 100 Hz and 500 Hz, as predicted, simulated, and measured in experiments. The alignment between the predictions, simulations, and experiments suggests that Theorem \ref{thm: Method C, new HOSIDF} accurately predicts the system behavior. Note that the reference input signal used in the experiments is $r(t) = 1E-7\sin(2\pi f t)$ [m].

However, as depicted in Fig. \ref{CL PCIID_et_3freqs}($a_1$), ($a_2$), ($b_1$), and ($b_2$), at input frequencies of 5 Hz and 10 Hz where multiple resets per cycle occur, Assumption \ref{assum:2reset} is not satisfied. This results in inaccuracies in the predictions provided by Theorem \ref{thm: Method C, new HOSIDF}. Note that the jagged noises observed in the measured signals during experiments arise from the sensor's resolution limitations. In practical scenarios, multiple resets can introduce excessive nonlinearity to the system, leading to undesired performance issues such as the limit cycle problem in the PI+CI system \cite{banos2011limit}. It is preferred to design reset systems without multiple resets across the entire frequency range. The examples of the PCI-PID system in Fig. \ref{CL PCIID_et_3freqs}($a_1$)-($b_2$) serve to illustrate the limitation of the accuracy of the new analysis method imposed by Assumption \ref{assum:2reset}.

In Section \ref{example 2}, we introduce a new reset control structure as a case study aimed at addressing the multiple-reset limitation observed in the PCI-PID control structure. This new structure is designed to achieve two reset instants per steady-state cycle, thereby satisfying Assumption \ref{assum:2reset} required for Theorem \ref{thm: Method C, new HOSIDF}.
\section{Case Study 2: Two-reset-PCI-PID (T-PCI-PID) Control System}
\label{example 2}
\subsection{The Two-Reset Control System}
In this section, we introduced a new reset control structure termed the Two-Reset Control System (T-RCS), illustrated in Fig. \ref{fig2:RC system}. This structure forces the system to reset twice per steady-state cycle under a sinusoidal reference input signal, serving as an illustrative example for validating Theorem \ref{thm: Method C, new HOSIDF}. 
% Experimental results suggest that this new structure achieves improved tracking performance compared to traditional reset systems. However, this aspect extends beyond the primary focus of this paper and will be explored further in future work.
% Since the error signal $e(t)$ includes many harmonics in a closed-loop reset system, the objective of the transfer function $\mathcal{C}_s$ is to filter the first-order harmonic in $e(t)$ and feed it into $e_s(t)$.
\begin{figure}[h]
	%	\missingfigure{RCsystem}
	\centerline{\includegraphics[width=0.92\columnwidth]{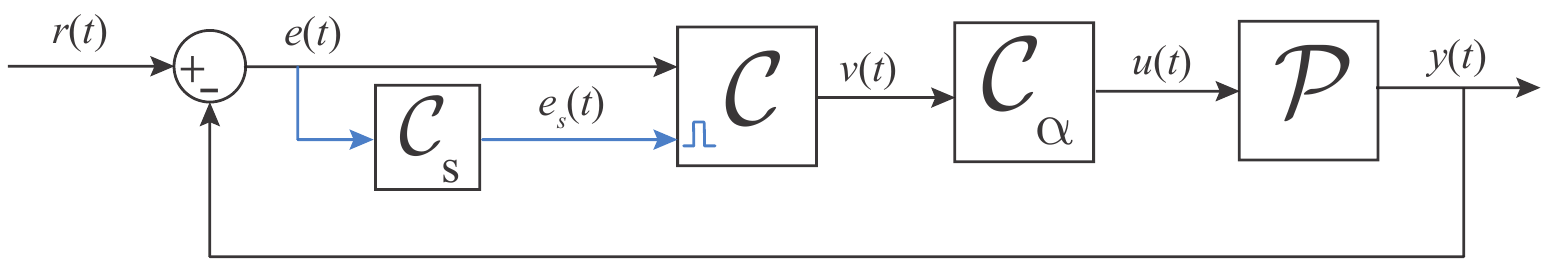}}
	\caption{Block diagram of the T-reset system, where $\mathcal{C}_s(s)$ is given in \eqref{Cs(s)} and $e_s(t)$ is the reset signal.}
	\label{fig2:RC system}
\end{figure}

In a traditional closed-loop reset system under sinusoidal input $r(t) = |R|\sin(\omega t)$, the error signal $e(t)$ is nonlinear and includes infinitely many harmonics $e_n(t)$ as defined in \eqref{eq: e,y,u}. This new T-RCS introduces a transfer function $\mathcal{C}_s$ between the error signal $e(t)$ and the reset-triggered signal $e_s(t)$. The transfer function of $\mathcal{C}_s(s)$ is given by
\begin{equation}
\label{Cs(s)}
    \mathcal{C}_{s}(s) = k_{cs}\left[\frac{(s/(\omega))^2+s/(\omega\cdot Q_1) +1}{(s/(\omega))^2+s/(\omega\cdot Q_2) +1}\right],
\end{equation}
where $Q_1<Q_2\in\mathbb{R}^+$. In this paper, we set $Q_1=1$, $Q_2=100$, and $k_{cs}=0.05$.
\begin{figure}[h]
	%	\missingfigure{RCsystem}
	\centerline{\includegraphics[width=0.9\columnwidth]{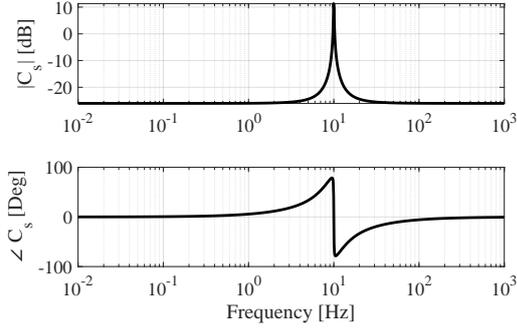}}
	\caption{Bode Plot of $\mathcal{C}_s(s)$ for $\omega=2\pi \cdot 10$ [rad/s].}
	\label{Bode Plot Cs}
\end{figure}
% The new structure is designed for SISO RCS working under sinusoidal input signal and step signal. The transfer function $\mathcal{C}_s$ is designed based on the two input signal:
% \begin{enumerate}
%     \item 

From \eqref{Cs(s)}, $\mathcal{C}_s(s)$ functions as an anti-notch filter. This filter allows frequencies within a specific range to pass through while attenuating all others. By adjusting the parameters $Q_1$ and $Q_2$, the frequency range of interest can be tailored. For example, selecting $Q_1=1, \  Q_2=100,\ k_{cs}=0.05$ for a frequency of $\omega= 2\pi \cdot 10$ [rad/s] yields the Bode plot depicted in Fig. \ref{Bode Plot Cs}. At the frequency $\omega = 2\pi\cdot10$ [rad/s], $\angle \mathcal{C}_s(\omega) = 0$, and the magnitude $|\mathcal{C}_s(\omega)|$ is relatively a peak to other frequencies. Therefore, $\mathcal{C}_s(s)$ approximately allows signals with a frequency of $\omega$ to pass through. By aligning $\omega$ with the first-order input frequency in $e(t)$ \eqref{eq: e,y,u}, the reset-triggered signal $e_s(t)$ is approximately expressed as:
\begin{equation}
\label{es(t)}
    e_s(t) = |E_1\mathcal{C}_s|\sin(\omega t+\angle E_1).
\end{equation}
From \eqref{es(t)}, the reset-triggered signal $e_s(t)$ shares the same frequency and phase as the first-order harmonic $e_1(t)$ in the error signal $e(t)$. Since the reset-triggered signal $e_s(t)$ is phase-dependent and amplitude-independent, equation \eqref{es(t)} indicates that the reset action of $\mathcal{C}$ is triggered based on $e_1(t)$. Thus, within this structure, the T-RCS ensures two reset instants per steady-state. This example serves to validate Theorem \ref{thm: Method C, new HOSIDF}, satisfying Assumption \ref{assum:2reset}.
% \item Suppose that a closed-loop RCS is subjected to a step input signal given by
% \begin{equation}
% \label{eq: h(t)}
%     \displaystyle h(t):={\begin{cases}1,&t>0,\\0,&t\leq 0.\end{cases}}
% \end{equation} 
% The step signal $h(t)$ is the normalised square wave signal $q_0(t)$ given in \eqref{eq: q0} when its fundamental frequency $\omega$ tends to 0, expressed as $h(t) = \lim_{\omega\to 0} q_0(t)$. Thus, to filter the fundamental frequency of $h(t)$, $\omega$ in $\mathcal{C}_{s}(s)$ is set to be equal to a relative small value 1E-16. 
% \end{enumerate}
\subsection{The Application of Theorem \ref{thm: Method C, new HOSIDF} to Analyze the T-PCI-PID System}
% The steady-state error signals $e(t)$ of the closed-loop CI without $\mathcal{C}_s$ and the new closed-loop Two-Reset CI (T-CI) at input frequencies of 0.1 Hz and 100 Hz are illustrated in Fig. \ref{CLSCI_et_0.1_100Hz}(a) and (b), respectively. In the traditional CI under a low-frequency sinusoidal reference input signal, it exhibits multiple-reset behavior (defined as more than two resets per steady-state). The new T-CI can overcome the multiple-reset scenarios and exhibits lower steady-state error. Conversely, under a high-frequency sinusoidal input, the T-CI and CI behave approximately the same because the traditional CI exhibits two resets per cycle, and the notch filter $\mathcal{C}_s$ will keep the system behavior.
% \begin{figure}[h]
% 	%	\missingfigure{RCsystem}
% 	\centerline{\includegraphics[width=0.6\columnwidth]{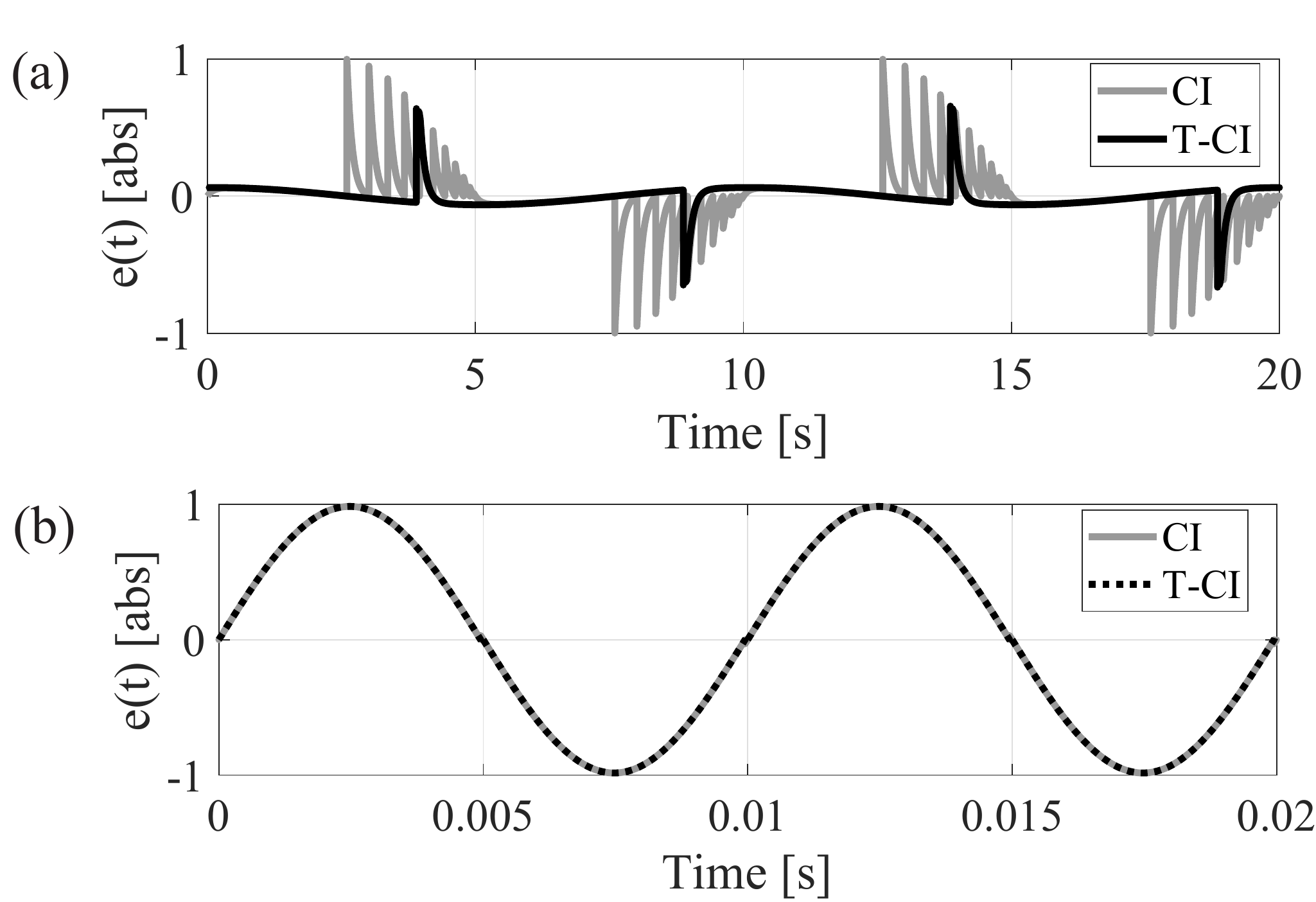}}
% 	\caption{The simulated error signal $e(t)$ of closed-loop CI and T-CI at input frequency of (a) 0.1 Hz and (b) 100 Hz.}
% 	\label{CLSCI_et_0.1_100Hz}
% \end{figure}
\subsubsection{The Accuracy of Theorem \ref{thm: Method C, new HOSIDF}}
\label{subsub: case 2}
Figure \ref{Open-loop block diagrams of PCID RCS} shows the structure of the Two-reset-PCI PID (T-PCI-PID) Control System. The transfer function $\mathcal{C}_s$ is defined in \eqref{Cs(s)}. Except for the setting for shaping filter $\mathcal{C}_s$ with $Q_1=1,\ Q_2=100,\ K_{cs}=0.05$, the parameters of the T-PCI-PID system are specified the same as those of the PCI-PID control system in Case Study 1. It has been verified that the T-PCI-PID system is stable and convergent.
\begin{figure}[h]
    \centering
    \centerline{\includegraphics[width=\columnwidth]{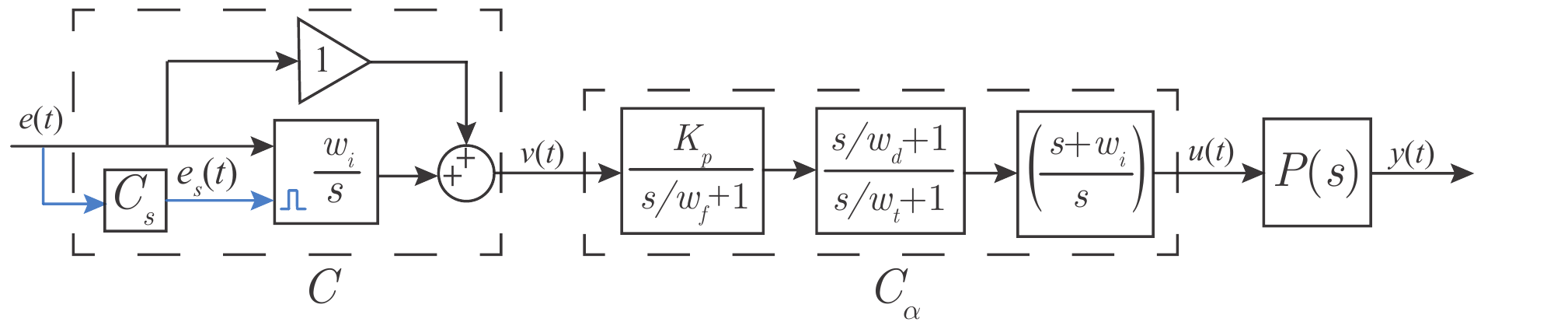}}
    \caption{The open-loop block diagram of the T-PCI-PID control system.}
	\label{Open-loop block diagrams of PCID RCS}
\end{figure}

Figure \ref{CL T-PCIID_et_4freqs} ($a_1$)-($d_1$) illustrates the simulated and Theorem \ref{thm: Method C, new HOSIDF}-predicted steady-state errors \(e(t)\) for the T-PCI-PID system. These simulations consider a reference input signal \(r(t) = 1E-7\sin(2\pi f t)\) at frequencies \(f\) of 5 Hz, 10 Hz, 100 Hz, and 500 Hz. Given that the reset action is amplitude-independent, we display the scaled reset triggered signals for these frequencies. The reset triggered signal \(e_s(t)\) forces two reset instants per steady-state period in multiple-reset systems at input frequencies 5 Hz and 10 Hz, while it maintains the reset instants unchanged in two-reset systems at input frequencies 100 Hz and 500 Hz. 

The analytical predictions closely align with the simulations in T-PCI-PID systems, validating the accuracy of Theorem \ref{thm: Method C, new HOSIDF}. This conclusion is further supported by the experimental results in Fig. \ref{CL T-PCIID_et_4freqs} ($a_2$)-($d_2$). Note that the jagged signal observed at input frequencies 5 Hz and 10 Hz in the measured results arises from sensor limitations.
\begin{figure}[h]
	\centering
	\includegraphics[width= \columnwidth]{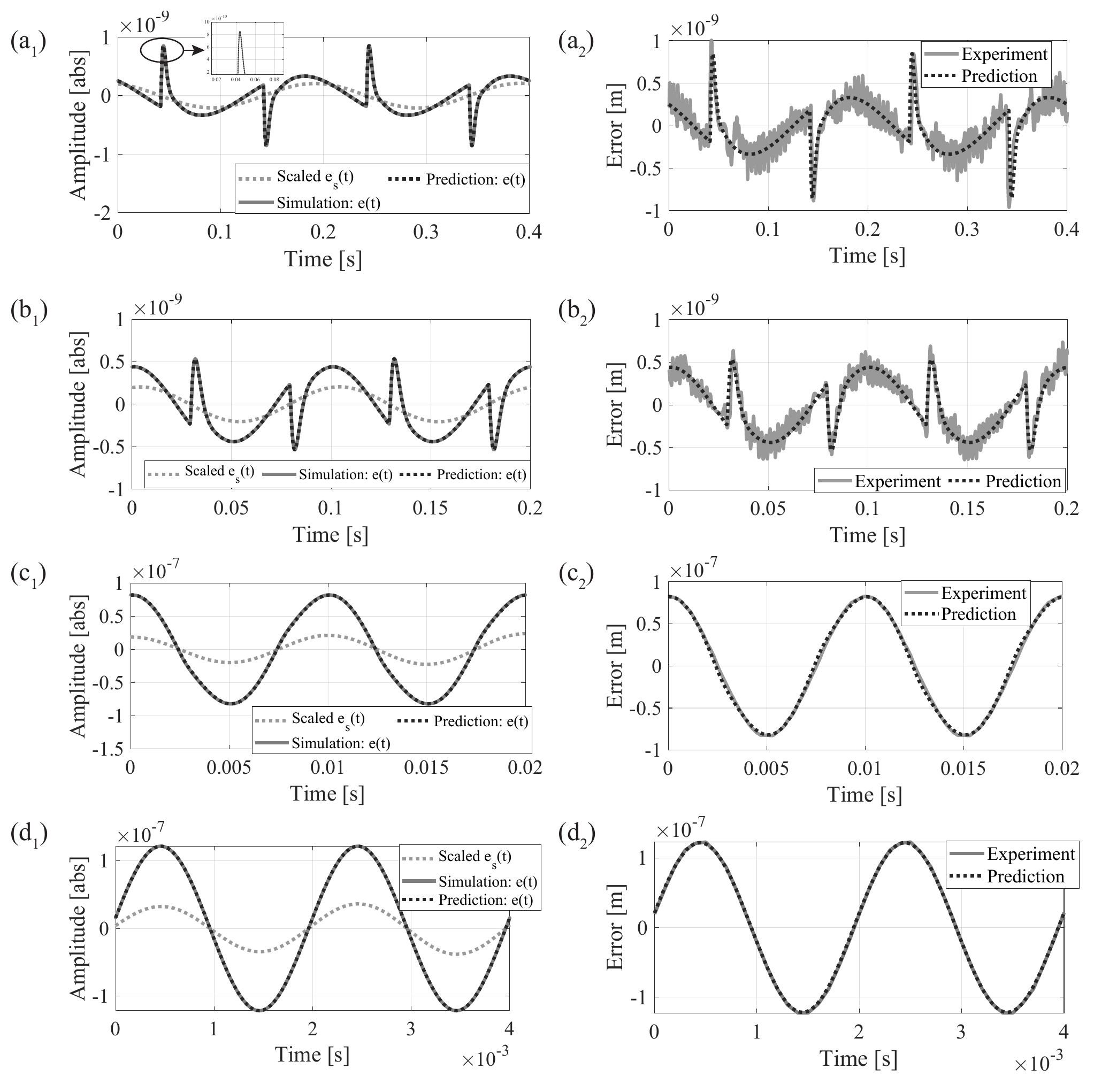}
	\caption{The steady-state error $e(t)$ for the T-PCI-PID system under sinusoidal input $r(t) = \sin(2\pi f t)$ at input frequencies $f$ of 5 Hz in ($a_1$) and ($a_2$), 10 Hz in ($b_1$) and ($b_2$), 100 Hz in ($c_1$) and ($c_2$), and 500 Hz in ($d_1$) and ($d_2$) as predicted, simulated, and experimentally measured. $e_s(t)$ represents the reset triggered signal.}
	\label{CL T-PCIID_et_4freqs}
\end{figure} 

Experimental results presented in Fig. \ref{CL T-PCIID_et_4freqs} not only validate the accuracy of Theorem \ref{thm: Method C, new HOSIDF} but also demonstrate the superior performance achieved by the T-PCI-PID system compared to the PCI-PID under sinusoidal inputs. The T-PCI-PID system effectively reduces low-frequency steady-state error while retaining high-frequency performance. For instance, under the same input signal at the input frequency of 10 Hz, the T-PCI-PID system in Fig. \ref{CL T-PCIID_et_4freqs}($b_2$) has a maximum steady-state error magnitude of 0.5E-9 [m], while the PCI-PID system in Fig. \ref{CL PCIID_et_3freqs}($b_2$) has a maximum steady-state error magnitude of 1E-9 [m]. However, since this paper primarily focuses on developing frequency response analysis for closed-loop reset systems, a comprehensive exploration of the T-RCS structure's capabilities across diverse situations will be addressed in future research. 
% The enhanced performance observed in the T-PCI-PID system is based on a single sinusoidal reference input in this case. Generalizing this new control structure to accommodate different reset elements and considering the impact of other input signals, including disturbances, noise, and step inputs, warrants further investigation. While the primary focus of this paper lies in introducing a frequency response analysis tool for closed-loop reset systems, future research endeavors into these aspects promise a more thorough comprehension of the T-RCS structure's capabilities across diverse situations.
% The experimental and simulated output signal $y(t)$ of T-PCI-PID control systems under a reference input signal $r(t) = 1E-8\sin (2\pi f t)$ [m] at four frequencies ($f$) are shown in Fig. \ref{CL T-PCIID_et_4freqs}($a_2$)-($d_2$). Results showcase an agreement between the experimental and simulated data, affirming the reliability of the simulation. Note that the comparison focuses on the output signal $y(t)$ rather than $e(t)$ due to the fact that, at lower frequencies such as 1 Hz, the amplitude of the error signal $e(t)$ becomes low and falls beyond the sensor's operational range, rendering the results less dependable.
% Subfigures ($a_2$)-($d_2$) compare the experimental (Exp) and simulated (Sim) \(y(t)\) for T-PCI-PID systems at these four frequencies.
\subsubsection{The Relation of the Accuracy of Theorem \ref{thm: Method C, new HOSIDF} and the Number of Harmonics}
Let \(e_{\text{sim}}(t)\) and \(e_{\text{pre}}(t)\) denote the simulated and predicted steady-state errors in the T-PCI-PID system, respectively. The Prediction Error (PE) between them is defined as PE = \(|e_{\text{sim}}(t) - e_{\text{pre}}(t)|\). 
\begin{figure}[h]
	\centering
	\includegraphics[width=0.92\columnwidth]{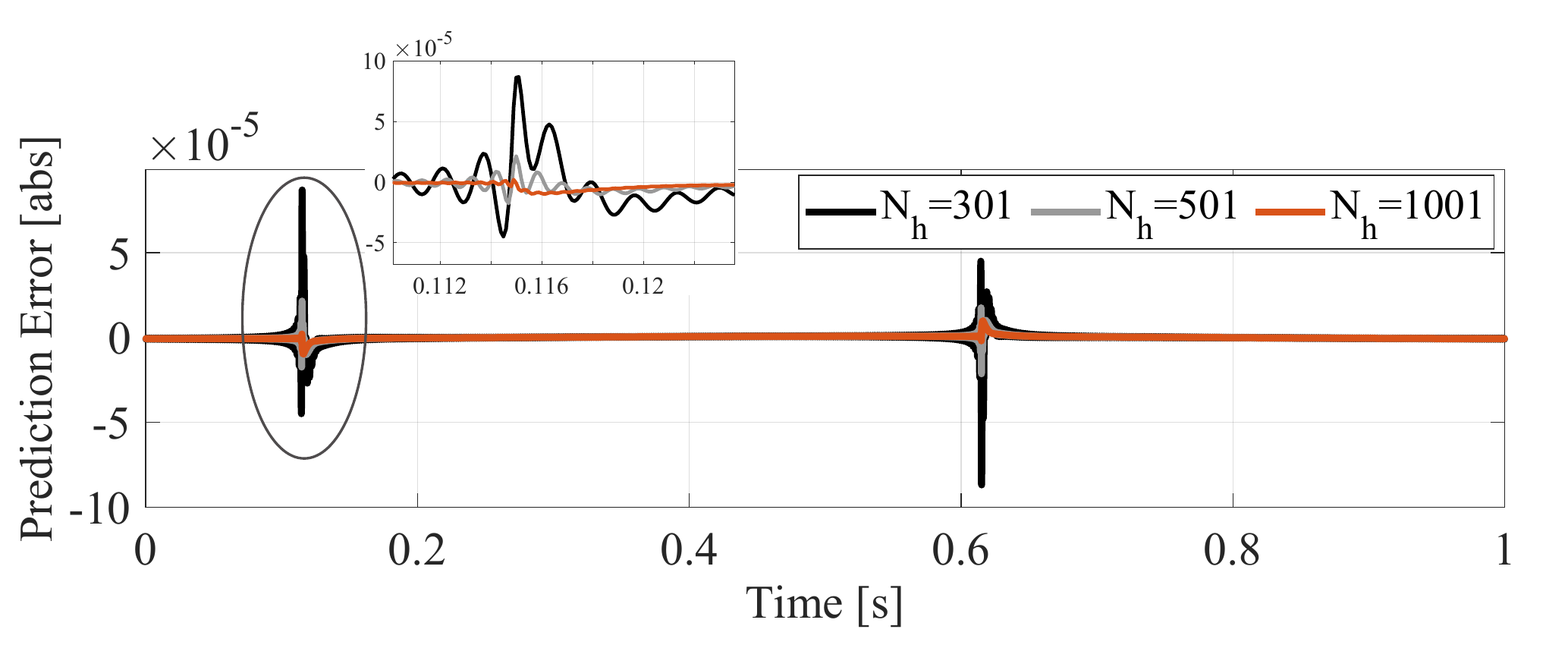}
	\caption{The PE of the T-PCI-PID control system under the input $r(t) = \sin(2\pi t)$ for different numbers of harmonics (\(N_h\)) included in the prediction calculation process, specifically 301, 501, and 1001.}
	\label{CL T-PCIID_predictionerror_Nh}
\end{figure} 

Figure \ref{CL T-PCIID_predictionerror_Nh} shows the relationship between the PE and the number of harmonics (denoted as \(N_h\)) included in the prediction calculation process, in the T-PCI-PID control system under a sinusoidal input signal $r(t) = \sin(2\pi t)$. As the number of harmonics \(N_h\) increases, the accuracy of the method increases. Ideally, PE approaches 0 as \(N_h\) tends to infinity. The zoom-in plot emphasizes the differences in predictions between the simulation and Theorem \ref{thm: Method C, new HOSIDF}, resembling vibration signals. This discrepancy arises because Theorem \ref{thm: Method C, new HOSIDF} calculates outputs by summing a finite number of sinusoidal harmonics. The unaccounted infinite harmonics contribute to the prediction error, resulting in a pattern reminiscent of vibrations. There is a trade-off between the PE and $N_h$ (indicating the calculation time in practice). Readers can strive to minimize PE while considering the computational time trade-off.
\subsection{Discussion: The Usability of Theorem \ref{thm: Method C, new HOSIDF} in the Reset Systems Design}
% The Bode plots of the open-loop transfer function \(\mathcal{L}_o(\omega)\) given in \eqref{eq: Lo} and 
In linear systems, the analytical connection between open-loop and closed-loop frequency-domain analysis serves as an effective tool for designing and predicting the performance of the systems. Let $\mathcal{L}_n(\omega)$ and $\mathcal{S}_n(\omega)$ represent the open-loop transfer function and the closed-loop sensitivity function, respectively. However, in reset systems, the relationship $\mathcal{S}_n(\omega) = 1/(\mathcal{L}_n(\omega) +1)$ does not hold, neither for the first-order harmonic (for $n=1$) nor higher-order harmonics (for $n>1$). The developed closed-loop sensitivity functions in \eqref{eq: sensitivity functions in CL} illustrate that the first and higher-order harmonics in the open-loop have a cross effect on the first and higher-order harmonics in the closed-loop, mediated by the parameter $\Gamma(\omega)$.

As highlighted in Remark \ref{rem: Gamma}, Method A and Method B assume $\Gamma(\omega) = 1$ for all $\omega$, which implies that higher-order harmonics $e_n(t)$ for $n > 1$ undergo no reset actions, leading to inaccuracies. Theorem \ref{thm: Method C, new HOSIDF} addresses this issue by introducing the parameter $\Gamma(\omega)$. The parameter $\Gamma(\omega)$ affects computations for both first-order and higher-order harmonics, as demonstrated in equations \eqref{eq: sensitivity functions in CL} to \eqref{eq: linear and nonlinear part of t_mc}. Its introduction provides a more accurate representation of the system's behavior, particularly in scenarios involving significant higher-order harmonics. Moreover, $\Gamma(\omega)$ can be used to design and tune reset systems to be less affected by higher-order harmonics, for improving system performance.

Figure \ref{pciid_Gamma_final} illustrates the $\Gamma(\omega)$ values in the T-PCI-PID system when subjected to sinusoidal inputs across frequencies ranging from 1 Hz to 1000 Hz. It is evident from the figure that $\Gamma(\omega) = 1$ does not hold across the entire frequency range. Given this context, comparing the new method with previous Methods A and B would be redundant, as these methods have already been shown to make inaccurate assumptions. To validate the accuracy of the new theorem, we directly compare its predictions with simulations and experiments in Section \ref{subsub: case 2}. 

Note that the depiction of $\Gamma(\omega)$ in Fig. \ref{pciid_Gamma_final} specifically pertains to the system described in case study 2. For different reset control systems, $\Gamma(\omega)$ will vary as determined by \eqref{eq: Gamma_thm}. 
\begin{figure}[h]
	\centering
	\includegraphics[width=0.92\columnwidth]{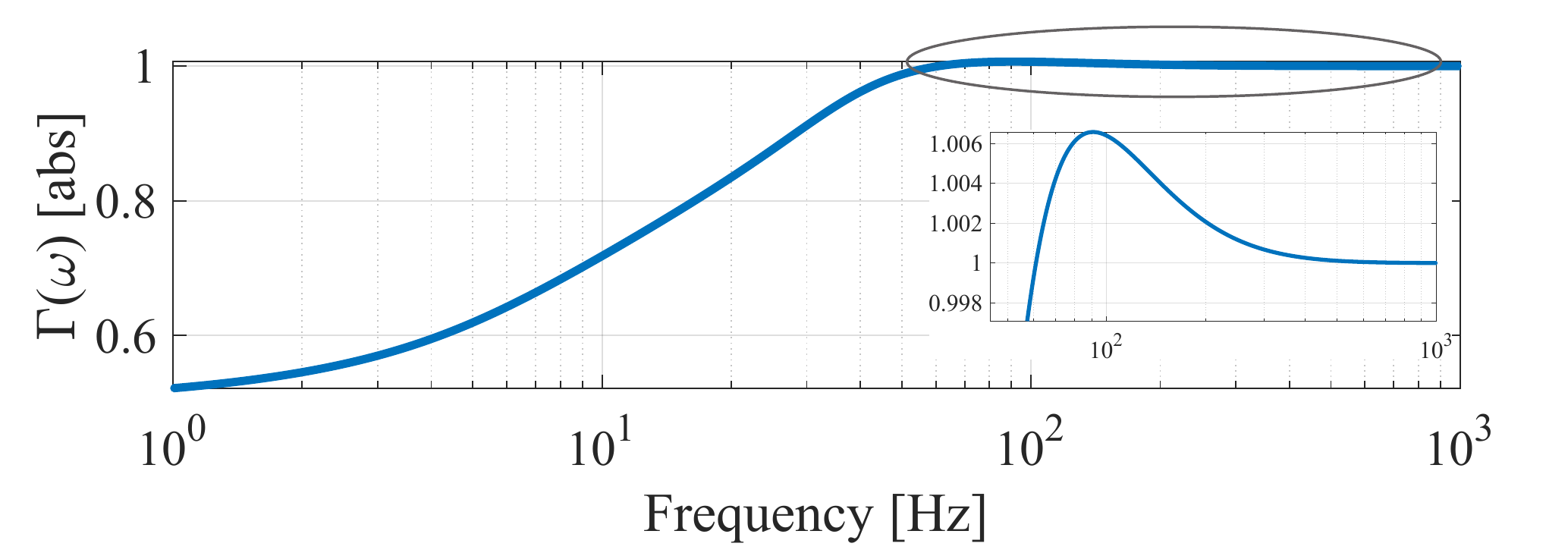}
	\caption{The $\Gamma(\omega)$ of the T-PCI-PID control system.}
	\label{pciid_Gamma_final}
\end{figure} 

Theorem \ref{thm: Method C, new HOSIDF} serves two primary purposes in the design of reset systems:\\
(1) Assessing the time-domain performance of the closed-loop system, for example evaluating its tracking precision at various frequencies, as demonstrated in Fig. \ref{CL PCIID_et_3freqs} and \ref{CL T-PCIID_et_4freqs}.\\
(2) Analyzing the frequency-domain characteristics of closed-loop reset systems by generating plots of the closed-loop HOSIDF using \eqref{eq: sensitivity functions in CL} to \eqref{eq: CS}, as exemplified in Fig. \ref{CL T-PCIID_open_closed_loop_BP}. 
\begin{figure}[h]
    \centering
    \centerline{\includegraphics[width=0.92\columnwidth]{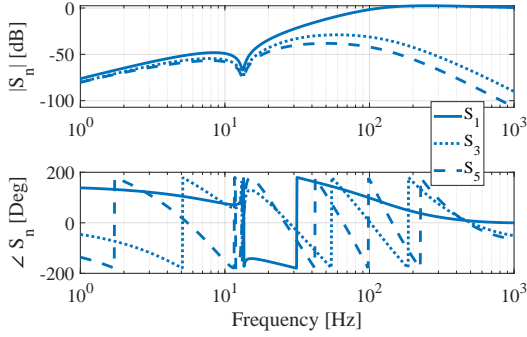}}
    \caption{The Bode plot of the closed-loop Higher Order Sinusoidal Input Sensitivity Function (HOSISF) $\mathcal{S}_n(\omega)$ for the closed-loop T-PCI-PID control system, with the first three harmonics \(\mathcal{S}_1\), \(\mathcal{S}_3\), and \(\mathcal{S}_5\).}
	\label{CL T-PCIID_open_closed_loop_BP}
\end{figure}\\
Theorems \ref{thm: vnl_npi}, \ref{thm: Open loop model for RC}, \ref{thm: Pulse-based model for RCS in closed loop}, and \ref{thm: Method C, new HOSIDF} collectively offer tools to analyze the behavior of reset controllers in both open-loop and closed-loop, facilitating the application of loop-shaping techniques in the reset systems design in the future. 

\section{Conclusion}
\label{sec:conclusion}
Reset control systems are effective in improving the performance of motion systems. To facilitate the practical design of reset systems, this study develops frequency response analysis methods for open-loop and closed-loop reset control systems, by assessing their steady-state responses to sinusoidal inputs. Results show the efficacy of the methods in predicting the precision of two-reset systems on precision motion stages. Moreover, the methods establish connections between open-loop and closed-loop analysis of reset systems. However, the paper primarily develops these analysis tools. Future research can explore practical applications of these frequency response analysis methods in designing reset control systems.

The frequency response analysis for closed-loop reset systems under sinusoidal disturbance and noise follows a similar derivation process as the theories presented in this paper. However, to emphasize and clarify the contribution of this paper, we have chosen not to include analysis for systems with disturbance or noise inputs here. Instead, we plan to address these aspects in our future research discussing disturbances.

The frequency response analysis is currently limited to two-reset systems. In our future research, we aim to develop techniques to identify two-reset systems and analyze multiple-reset systems, thereby expanding the scope of our analysis methods. Furthermore, the newly introduced Two-Reset Control System (T-RCS) in this research is designed to enforce two reset instants per steady-state cycle when the system is subjected to sinusoidal inputs. The T-RCS has shown improved steady-state tracking precision at low frequencies compared to traditional reset systems under sinusoidal reference inputs. The enhanced performance is attributed to the elimination of multiple-reset occurrences achieved by the T-RCS. The practical application of the T-RCS under various types of inputs are worth exploring in future research endeavors.

\appendix
\section{Proof for Lemma \ref{lem: CI MODEL}}
\label{appen: proof for lemma1}
\vspace*{12pt}
\begin{proof} 
Consider a GCI subjected to a sinusoidal input signal $e(t) = |E_1|\sin (\omega t)$, under Assumption \ref{open-loop stability} and the Zeno-free condition in \eqref{eq: zeno-free}, at steady states.

Based on \eqref{eq: State-Space eq of RC}, the output signal $u_{ci}(t)$ of the GCI (with $A_R =0, B_R = 1, C_R = 1, D_R = 0$) is given by
\begin{equation}
\label{uci, t+}
\begin{cases}
\dot u_{ci}(t) = e(t), & e(t) \neq 0, \\
u_{ci}(t^+) = \gamma u_{ci}(t), & e(t) = 0.
\end{cases}
\end{equation}
For its BLS (which is an integrator), we have 
\begin{equation}
\label{eq: dot ui}
    \dot u_{i}(t) = e(t).
\end{equation}
Define $q_i(t) = u_{ci}(t) - u_{i}(t)$. From \eqref{uci, t+} and \eqref{eq: dot ui}, we have
	\begin{equation}
 \label{eq: qi(t)0}
		\begin{cases}
			\dot q_i(t) = \dot u_{ci}(t) - \dot u_{i}(t) = 0, & e(t) \neq 0,\\
			q_i(t^+) = \gamma q_i(t) + (\gamma - 1)u_{i}(t), & e(t) = 0.
		\end{cases}
	\end{equation}	
% Thus, $q_i(t)$ is a square wave with pulse locating at reset instants. \cite{buitenhuis2020frequency}
The reset instant of the GCI with a sinusoidal input signal $e(t) = |E_1|\sin(\omega t)$ is determined by $t_i = i{\pi}/{\omega}$, where $i \in \mathbb{Z}^{+}$ and $e(t_i) = 0$. Utilizing \eqref{eq: qi(t)0}, the signal $q_i(t)$ between two consecutive reset instants $[t_i^+, t_{i+1}]$ (where $t_i^+$ denotes the after-reset instant) is expressed as:
	\begin{equation}
		\label{eq: qi(t)1}
		q_i(t) = q_i(t_i^+) + \int_{t_i}^{t}\dot q_i(\tau)d\tau = q_i(t_i^+),\ t\in [t_i^+, t_{i+1}].
	\end{equation}
Given the input signal $e(t) = |E_1|\sin(\omega t)$ and based on the state-space equation in \eqref{eq: State-Space eq of RC}, at the reset instant $t_i=i{\pi}/{\omega}$, the base-linear output $u_i(t_i)$ is given by:
\begin{equation}
\label{uti}
   u_{i}(t_i) = \begin{cases}
       0, & \text{for even}\ i, \\
       {2|E_1|}/{\omega}, & \text{for odd}\ i.
   \end{cases} 
\end{equation}
Combining \eqref{eq: qi(t)0}, \eqref{eq: qi(t)1}, and \eqref{uti}, at the reset instant $t_i$, $q_i(t_i^+)$ is given by
\begin{equation}
\label{eq:qi00}
	\begin{aligned}
		q_i(t_i^+) &= \gamma q_i(t_{i-1}^+) + (\gamma - 1)u_{i}(t_i)\\
		&= \begin{cases}
			\gamma q_i(t_{i-1}^+), & \text{for even}\ i, \\
			\gamma q_i(t_{i-1}^+) + {2|E_1|(\gamma - 1)}/{\omega}, & \text{for odd}\ i.
		   \end{cases}
	\end{aligned}
\end{equation}
Based on \eqref{eq:qi00}, for an odd $i$, $q_i(t_i^+)$ is given by 
\begin{equation}
\label{eq:qi01}
	\begin{aligned}
		q_i(t_i^+) &= q_i(t_{i+2}^+)\\
		&= 	\gamma q_i(t_{i+1}^+) + {2|E_1|(\gamma - 1)}/{\omega}\\
		&= 	\gamma^2 q_i(t_{i}^+) + {2|E_1|(\gamma - 1)}/{\omega}.
	\end{aligned}
\end{equation}
Equations \eqref{eq:qi00} and \eqref{eq:qi01} can be concluded that
\begin{equation}
	\label{eq: t_i+}
	q_i(t_i^+) = 
	\begin{cases}
		-{2|E_1|\gamma (\gamma + 1)^{-1}}/{\omega}, & \text{for even}\ i, \\
		-{2|E_1|(\gamma + 1)^{-1}}/{\omega}, & \text{for odd}\ i.
	\end{cases}
\end{equation}
Combining \eqref{eq: qi(t)1} and \eqref{eq: t_i+}, $q_i(t)$ in the time domain can be derived as follows:
	\begin{equation}
		\label{eq: q(t)00}
		q_i(t) = 
		\begin{cases}
			-{2|E_1|\gamma (\gamma + 1)^{-1}}/{\omega}, & \text{for}\ t \in [2k, 2k+1)\cdot\pi/\omega,\\
			-{2|E_1|(\gamma + 1)^{-1}}/{\omega}, &  \text{for}\ t \in [2k+1, 2k+2)\cdot\pi/\omega.
		\end{cases}
	\end{equation}	
Here we conclude the proof.
\end{proof}
\section{Proof for Theorem \ref{thm: vnl_npi}}
\label{appen: proof for thm1}
\vspace*{12pt}
\begin{proof}
Consider a reset controller $\mathcal{C}$ \eqref{eq: State-Space eq of RC} where $n_r=1$ with an input signal of $e(t) = |E_n|\sin(n\omega t + \angle E_n) \ ( \angle E_s\in(-\pi,\pi),\ n=2k+1,\ k\in\mathbb{N})$ and a reset triggered signal of $e_s(t) = |E_s|\sin(\omega t+\angle E_s) \ (\angle E_s\in(-\pi,\pi))$, at steady states. Let $x_c(t)$ and $x_{bl}(t)$ denote the state and the base-linear state of the $\mathcal{C}$ and the BLC $\mathcal{C}_{bl}$, respectively. Define
\begin{equation}
\label{A: eq: xc,xnl,xbl}
    x_{nl}(t) = x_c(t) - x_{bl}(t)\ (\in \mathbb{R}^{n_c\times 1}),
\end{equation}
where $n_c$ is the number of states of the reset controller $\mathcal{C}$ and $x_{bl}(t)$ is given by
\begin{equation}
\label{A: X_bl(t)npi}
    x_{bl}(t) = |E_n\Delta_l(n\omega)|\sin(n\omega t + \angle E_n + \angle\Delta_l(n\omega)),
\end{equation}
where 
\begin{equation}
\label{eq: Delta_l(nw)}
 \Delta_l(n\omega) = (jn\omega I-A_R)^{-1}B_R\ (\in\mathbb{R}^{n_c\times 1}). 
\end{equation}
Based on \eqref{eq: State-Space eq of RC} and \eqref{A: eq: xc,xnl,xbl}, we have:
\begin{equation}
\label{A: eq: q(t)11}
    \begin{cases}
        \dot x_{nl}(t) = A_Rx_{nl}(t), &  e_s(t)\neq0,\\
        x_{nl}(t^+) = A_\rho x_{nl}(t) + (A_\rho-I)x_{bl}(t), & e_s(t) = 0.
    \end{cases}
\end{equation}
For the reset controller $\mathcal{C}$ with a reset triggered signal of $e_s(t) = |E_s|\sin(\omega t+\angle E_s)$, the set of reset instants is denoted by $J_o := \{t_i \mid t_i=(i\pi-\angle E_s)/\omega,\ i \in \mathbb{Z}^{+}\}$. The reset interval is given by $\sigma_i = t_{i+1}-t_i=\pi/\omega$. According to \eqref{A: eq: q(t)11}, between two consecutive reset instants $[t_{i}^+, t_{i+1}]$ where $e_s(t) \neq 0$, the expression for $x_{nl}(t)$ is given by
\begin{equation}
\label{A: xnl(t)}
    x_{nl}(t) = e^{A_R (t-t_i)}\Delta_n(\omega) , \ \text{ for } t\in[t_{i}^+, t_{i+1}],
\end{equation}
where $\Delta_n(\omega)\in \mathbb{R}^{n_c\times 1}$ is a constant matrix independent of time $t$.

Based on \eqref{A: xnl(t)}, at the reset instant $t_{i+1} \in J_o$, $x_{nl}(t_{i+1})$ is given by
\begin{equation}
\label{A: xnl(t_i+1)}
    x_{nl}(t_{i+1}) = e^{A_R (t_{i+1}-t_i)}\Delta_n(\omega)  =e^{A_R\pi/\omega} \Delta_n(\omega).
\end{equation}
From \eqref{A: eq: q(t)11}, $x_{nl}(t^+)$ at the reset instant $t_{i+1}$ is given by
\begin{equation}
\label{A: xnl(t_i+1+)}
    x_{nl}(t_{i+1}^+) = A_\rho x_{nl}(t_{i+1}) + (A_\rho-I)x_{bl}(t_{i+1}).
\end{equation}
Substituting $x_{nl}(t_{i+1})$ from \eqref{A: xnl(t_i+1)} into \eqref{A: xnl(t_i+1+)}, $x_{nl}(t_{i+1}^+)$ is obtained as
\begin{equation}
\label{A: xnl(t_i+1+)2}
    x_{nl}(t_{i+1}^+)  = A_\rho e^{A_R\pi/\omega} \Delta_n(\omega) + (A_\rho-I)x_{bl}(t_{i+1}).
\end{equation}
From \eqref{A: X_bl(t)npi}, at the reset instant $t_i= (i\pi-\angle E_s)/\omega$, the base-linear state is given by
\begin{equation}
\label{A: eq:x_bl(t_i)_npi}
\begin{aligned}
    x_{bl}(t_i) &= |E_n\Delta_l(n\omega)|\sin(ni\pi+ \angle E_n + \angle\Delta_l(n\omega)-n\angle E_s)\\
    &=
    \resizebox{0.85\hsize}{!}{$
        \begin{cases}
        |E_n\Delta_l(n\omega)|\sin(\angle\Delta_l(n\omega)+ \angle E_n-n\angle E_s), & \text{ for even }i,\\
        -|E_n\Delta_l(n\omega)|\sin(\angle\Delta_l(n\omega)+ \angle E_n-n\angle E_s), & \text{ for odd }i,
    \end{cases}    
    $}
\end{aligned}
\end{equation}
where $\Delta_l(n\omega)$ is given in \eqref{eq: Delta_l(nw)}.

Define 
\begin{equation}
\label{A: eq: Delta_cn}
\Delta_c^n(\omega) = |\Delta_l(n\omega)|\sin(\angle\Delta_l(n\omega)+ \angle E_n-n\angle E_s)\ (\in \mathbb{R}^{n_c\times 1}),   
\end{equation}
and substitute $\Delta_c^n(\omega)$ from \eqref{A: eq: Delta_cn} into \eqref{A: eq:x_bl(t_i)_npi}, we have
\begin{equation}
\label{A: eq:x_bl(t_i)_npi2}
    x_{bl}(t_i) = 
    \begin{cases}
        |E_n|\Delta_c^n(\omega), & \text{ for even }i,\\
        -|E_n|\Delta_c^n(\omega), & \text{ for odd }i.
    \end{cases}    
\end{equation}
From \eqref{A: eq:x_bl(t_i)_npi2}, at the reset instant $t_{i+1} = ((i+1)\pi-\angle E_s)/\omega$, the expression for $x_{bl}(t_{i+1})$ is written as:
\begin{equation}
\label{A: eq:xbl(ta+)1}
    x_{bl}(t_{i+1}) =
    \begin{cases}
    -|E_n|\Delta_c^n(\omega),\ &\text{for even }i,\\
    |E_n|\Delta_c^n(\omega),\ &\text{for odd }i,  
    \end{cases}    
\end{equation}
Substituting $x_{bl}(t_{i+1})$ from \eqref{A: eq:xbl(ta+)1} into \eqref{A: xnl(t_i+1+)2}, $x_{nl}(t_{i+1}^+)$ is derived as
\begin{equation}
\label{A: eq:xnl(ta+)0}
    \resizebox{1\hsize}{!}{$
    x_{nl}(t_{i+1}^+) =
    \begin{cases}
    A_\rho e^{A_R\pi/\omega}\Delta_n(\omega)  - (A_\rho-I)|E_n|\Delta_c^n(\omega),\ &\text{for even }i,\\
    A_\rho e^{A_R\pi/\omega}\Delta_n(\omega)  + (A_\rho-I)|E_n|\Delta_c^n(\omega),\ &\text{for odd }i.    
    \end{cases}
    $}
\end{equation}
From \eqref{A: eq:xnl(ta+)0}, we have
\begin{equation}
\label{A: eq:xnl(ta+)}
    \resizebox{1\hsize}{!}{$
    x_{nl}(t_{i}^+) =
    \begin{cases}
    A_\rho e^{A_R\pi/\omega}\Delta_n(\omega)  + (A_\rho-I)|E_n|\Delta_c^n(\omega),\ &\text{for even }i,\\
    A_\rho e^{A_R\pi/\omega}\Delta_n(\omega)  - (A_\rho-I)|E_n|\Delta_c^n(\omega),\ &\text{for odd }i.  
    \end{cases}
    $}
\end{equation}
From \eqref{A: xnl(t)}, at the reset instant $t_i^+$, we have
\begin{equation}
\label{A: x_nl_ti+}
    x_{nl}(t_i^+) = \Delta_n(\omega).
\end{equation}
Let equations \eqref{A: eq:xnl(ta+)} and \eqref{A: x_nl_ti+} be set equal to each other. Then, $\Delta_n(\omega)$ is derived as follows:
\begin{equation}
\label{A: eq:xnl(ta+)2}
    \Delta_n(\omega) = 
    \begin{cases}
     |E_n|\Delta_v^n(\omega),\ &\text{for even }i,\\
     -|E_n|\Delta_v^n(\omega),\ &\text{for odd }i,   
    \end{cases}
\end{equation}
where 
\begin{equation}
\label{A: eq: Delta_vn}
    \Delta_v^n(\omega) = (A_\rho e^{A_R\pi/\omega}-I)^{-1}(I-A_\rho)\Delta_c^n(\omega)\ (\in \mathbb{R}^{n_c\times 1}).
\end{equation}
Since the reset interval $\sigma_i=\pi/\omega$, equations \eqref{A: xnl(t)} and \eqref{A: eq:xnl(ta+)2} together illustrate that $x_{nl}(t)$ is bounded and has the same period $2\pi/\omega$ as the reset triggered signal $e_s(t)$. The absolute integrability of $x_{nl}(t)$ implies the existence of its Fourier and Laplace transforms.
% Define $t^* = t-t_i$. Equation \eqref{A: xnl(t)} is written as:
% \begin{equation}
% \label{A: xnl2}
%   x_{nl}(t) = e^{A_Rt^*}\Delta_n(\omega), \text{ for }t^*\in [0^+, \ t_{i+1}-t_{i}].
% \end{equation}
Let $X_{nl}(s)$ denote the Laplace transform of $x_{nl}(t)$. From \eqref{A: xnl(t)}, for $t\in[t_i^+, \ t_{i+1}]$, the Laplace transform of $x_{nl}(t)$ is expressed as:
\begin{equation}
\label{A: Xnl(s)}
X_{nl}(s) = \mathscr{L}[x_{nl}(t)] =  (sI-A_R)^{-1}e^{-A_Rt_i}\Delta_n(\omega).
\end{equation}
Define a parameter $Q(s) = \mathscr{L}[q(t)]$ as:
\begin{equation}
\label{A: eq: Q(s), Xnl(s)}
    Q(s) = (sI)^{-1}(sI-{A_R})X_{nl}(s).
\end{equation}
Then, based on \eqref{A: Xnl(s)} and \eqref{A: eq: Q(s), Xnl(s)}, for $t\in[t_i^+, \ t_{i+1}]$, $Q(s)$ is given by
\begin{equation}
\label{A: Q(s)}
Q(s) = (sI)^{-1}e^{-A_Rt_i}\Delta_n(\omega).
\end{equation}
From \eqref{A: eq:xnl(ta+)2} and \eqref{A: Q(s)}, during the time interval $[t_i^+, \ t_{i+1}]$, the inverse Laplace transform of $ Q(s)$ is given by:
\begin{equation}
\label{A: q(t)0}
    \resizebox{1\hsize}{!}{$
q(t) =  \mathscr{L}^{-1}[Q(s)]= 
    \begin{cases}
     |E_n|\Delta_v^n(\omega)u(t-t_i) + C_{\beta1},\ &\text{for even }i,\\
     -|E_n|\Delta_v^n(\omega)u(t-t_i) + C_{\beta2},\ &\text{for odd }i,    
    \end{cases}
    $}
\end{equation}
where $u(t)$ is an unit step signal. The parameters $C_{\beta1}\in \mathbb{R}^{n_c\times1}$ and $C_{\beta2}\in \mathbb{R}^{n_c\times1}$ are the values of $q(t)$ at the reset instant $t_i$. From \eqref{A: q(t)0}, we obtain that $q(t)$ remains a constant matrix during the time interval $[t_i^+, \ t_{i+1}]$.

From \eqref{A: xnl(t)}, at the time instant $t_{i+1}$, $x_{nl}(t_{i+1}) = e^{A_R\pi/\omega}\Delta_n(\omega)$. From \eqref{A: eq:xnl(ta+)2}, $x_{nl}(t_{i})$ is given by
\begin{equation}
\label{A: eq:xnl(ti) with deltan}
   x_{nl}(t_i) = 
    \begin{cases}
     -|E_n|e^{A_R\pi/\omega}\Delta_v^n(\omega),\ &\text{for even }i,\\
     |E_n|e^{A_R\pi/\omega}\Delta_v^n(\omega),\ &\text{for odd }i,    
    \end{cases}
\end{equation}
From \eqref{A: xnl(t)} and \eqref{A: eq:xnl(ta+)2}, at the time instant $t_{i}^+$, $x_{nl}(t_{i}^+)$ is given by
\begin{equation}
\label{A: eq:xnl(ti+) with deltan}
     x_{nl}(t_{i}^+)= 
     \begin{cases}
     |E_n|\Delta_v^n(\omega),\ &\text{for even }i,\\
     -|E_n|\Delta_v^n(\omega),\ &\text{for odd }i.    
    \end{cases}
\end{equation}
From the time instant $t_i$ to $t_i^+$, $x_{nl}(t_{i})$ jumps to $x_{nl}(t_{i}^+)$. From \eqref{A: eq:xnl(ti) with deltan} and \eqref{A: eq:xnl(ti+) with deltan}, this jump is given by
\begin{equation}
\label{A: xnl jump at ti}
    \resizebox{1\hsize}{!}{$  
\begin{aligned}
x_{nl}(t_{i}^+) - x_{nl}(t_{i}) &=  
    \begin{cases}    
    |E_n|(I+e^{A_R\pi/\omega})\Delta_v^n(\omega),\ &\text{for even }i,\\
    -|E_n|(I+e^{A_R\pi/\omega})\Delta_v^n(\omega),\ &\text{for odd }i.    
    \end{cases}
\end{aligned}
    $}
\end{equation}
Substituting $\Delta_v^n(\omega)$ from \eqref{A: eq: Delta_vn} into \eqref{A: xnl jump at ti}, we have
\begin{equation}
\label{A: xnl jump at ti2}
\begin{aligned}
&x_{nl}(t_{i}^+) - x_{nl}(t_{i}) = \\
    &\resizebox{1\hsize}{!}{$   
\begin{cases}
     |E_n|(I+e^{A_R\pi/\omega})(A_\rho e^{A_R\pi/\omega}-I)^{-1}(I-A_\rho)\Delta_c^n(\omega),\ &\text{for even }i,\\
     -|E_n|(I+e^{A_R\pi/\omega})(A_\rho e^{A_R\pi/\omega}-I)^{-1}(I-A_\rho)\Delta_c^n(\omega),\ &\text{for odd }i.  
    \end{cases}
    $}
\end{aligned}
\end{equation}
Define
\begin{equation}
\label{A: eq: Delta_q}
\resizebox{\columnwidth}{!}{$   
\begin{aligned}
\Delta_q^n(\omega) &= (I+e^{A_R \pi/\omega})(A_\rho e^{A_R\pi/\omega}-I)^{-1}(I-A_\rho)\Delta_c^n(\omega) \ (\in \mathbb{R}^{n_c\times 1})
\end{aligned}   
$}
\end{equation}
and substitute $\Delta_q^n(\omega)$ from \eqref{A: eq: Delta_q} into \eqref{A: xnl jump at ti2}, we have
\begin{equation}
\label{A: xnl jump at ti3}
x_{nl}(t_{i}^+) - x_{nl}(t_{i}) =  
\begin{cases}
     |E_n|\Delta_q^n(\omega),\ &\text{for even }i,\\
     -|E_n|\Delta_q^n(\omega),\ &\text{for odd }i.  
    \end{cases}
\end{equation}
This jump indicates that from the time instant $t_i$ to $t_i^+$, $x_{nl}(t)$ is an impulse signal denoted by $\zeta_{nl}(t)$. From \eqref{A: xnl jump at ti3}, $\zeta_{nl}(t)$ is expressed as:
\begin{equation}
\label{A: x{nl}(t{i}+) - x{nl}(t{i})2}
\begin{aligned}
    \zeta_{nl}(t) &= [x_{nl}(t_{i}^+) - x_{nl}(t_{i})]\delta(t-t_i) \\
    &=\begin{cases}
    |E_n|\Delta_q^n(\omega)\delta(t-t_i),\ &\text{for even }i,\\
    -|E_n|\Delta_q^n(\omega)\delta(t-t_i),\ &\text{for odd }i,   
    \end{cases}    
\end{aligned}
\end{equation}
where $\delta(t)$ represents the Dirac delta function.

Since the Laplace transform of the delta function is 1, the Laplace transform of the impulse signal $\zeta_{nl}(t)$ in \eqref{A: x{nl}(t{i}+) - x{nl}(t{i})2} is given by:
\begin{equation}
\label{A: Delta_nl(s)}
    \zeta_{nl}(s) =    
    \begin{cases}
    |E_n|\Delta_q^n(\omega)e^{-t_is},\ &\text{for even }i,\\
    -|E_n|\Delta_q^n(\omega)e^{-t_is},\ &\text{for odd }i.   
    \end{cases}
\end{equation}
The impulse response is defined as the response of a system to a Dirac delta input \cite{pesaran1998generalized}. Define a signal $\zeta_{q}(t)$ as the impulse response of the impulse signal $\zeta_{nl}(t)$ filtered through the linear time invariant (LTI) transfer function $(sI)^{-1}(sI-{A_R})$. From \eqref{A: Delta_nl(s)}, $\zeta_{q}(t)$ is given by:
\begin{equation}
\label{A: zeta_q1}
\begin{aligned}
    \zeta_q(t) &= \mathscr{L}^{-1}[(sI)^{-1}(sI-{A_R})\zeta_{nl}(s)]\\
    &=\begin{cases}
    |E_1|\Delta_q^n(\omega)\delta(t-t_i)-|E_n|A_R\Delta_q^n(\omega),\ &\text{for even }i,\\
    -|E_1|\Delta_q^n(\omega)\delta(t-t_i)+|E_n|A_R\Delta_q^n(\omega),\ &\text{for odd }i.   
    \end{cases}    
\end{aligned}
\end{equation}
Equation \eqref{A: zeta_q1} demonstrates that at the time instant $t_i$, $\zeta_q(t_i)$ has an initial value of $\pm |E_n|A_R\Delta_q^n(\omega)$ and undergoes a jump of:
\begin{equation}
\label{A: zeta q jump}
    \zeta_q(t_i^+)-\zeta_q(t_i) = 
    \begin{cases}
    |E_n|\Delta_q^n(\omega),\ &\text{for even }i,\\
    -|E_n|\Delta_q^n(\omega),\ &\text{for odd }i.   
    \end{cases}
\end{equation}
From \eqref{A: eq: Q(s), Xnl(s)}, the signal $x_{nl}(t)$ filtered by $(sI)^{-1}(sI-{A_R})$ generates the signal $q(t)$. Equations \eqref{A: zeta_q1} and \eqref{A: zeta q jump} demonstrate that at the time instant $t_i$ to $t_i^+$, the signal $x_{nl}(t_i)$ jumps to $x_{nl}(t_i^+)$. The jump $\zeta_{nl}(t)$, filtered by the transfer function $(sI)^{-1}(sI-{A_R})$, generates a jump $\zeta_q(t_i^+)-\zeta_q(t_i)$ in \eqref{A: zeta q jump}. Thus, from \eqref{A: eq: Q(s), Xnl(s)}, \eqref{A: zeta_q1}, and \eqref{A: zeta q jump}, the jump in $q(t) = \mathscr{L}^{-1}[Q(s)]$ at the time instant $t_i$ is given by:
\begin{equation}
\label{A: zeta_q2}
    q(t_i^+) - q(t_i) = \zeta_{q}(t_i^+) - \zeta_{q}(t_i).
\end{equation}
Based on \eqref{A: zeta q jump} and \eqref{A: zeta_q2}, at the time instant $t_i$, we have
\begin{equation}
\label{A: q jump}
    q(t_i^+) = 
    \begin{cases}
    q(t_i) + |E_n|\Delta_q^n(\omega),\ &\text{for even }i,\\
    q(t_i)  -|E_n|\Delta_q^n(\omega),\ &\text{for odd }i.   
    \end{cases}
\end{equation}
% From \eqref{A: xnl(t)}, \eqref{A: eq:xnl(ta+)2} and \eqref{A: x{nl}(t{i}+) - x{nl}(t{i})2}, during the interval $[t_i^+,\ t_{i+1}^+]$, $x_{nl}(t)$ is given by
% \begin{equation}
% \label{A: eq:xnl2}
% x_{nl}(t) = 
% \begin{cases}
%     \Delta_v^n(\omega)e^{t-t_i} -\Delta_q^n(\omega)\delta(t_{i+1}) ,\ &\text{for even }i,\\
%     -\Delta_v^n(\omega)e^{t-t_i} +\Delta_q^n(\omega)\delta(t_{i+1}),\ &\text{for odd }i, 
%     \end{cases}
% \end{equation}
% where $\delta(t)$ is a Dirac delta function.
% The nonlinear signal $x_{nl}(t)$ does not directly result from the signal $r(t)$ but from the reset action. To generate the nonlinear signal $x_{nl}(t)$ defined in \eqref{A: eq: xc,xnl,xbl}, we introduce a signal $q(t)$, which can be considered a disturbance introduced to the system. This process is illustrated in the block diagram for $\mathcal{C}$ in Fig. \ref{fig: State Space of Reset Controller}(b). In this new block diagram, the state $x_c(t)$ has two elements: $x_{bl}(t)$ resulting from the reference $r(t)$, and $x_{nl}(t)$ resulting from the input $q(t)$.
% \begin{equation}
% \label{A: eq:xnl3}
% x_{nl}(t) = 
% \begin{cases}
%     \Delta_v^n(\omega)e^{t^*} -\Delta_q^n(\omega)\delta(t_{i+1}) ,\ &\text{for even }i,\\
%     -\Delta_v^n(\omega)e^{t^*} +\Delta_q^n(\omega)\delta(t_{i+1}),\ &\text{for odd }i. 
%     \end{cases}
% \end{equation}
From \eqref{A: q(t)0}, $q(t_i^+)$ for even $i$ is given by
\begin{equation}
\label{A: qt_i+1}
q(t_i^+) =  |E_n|\Delta_v^n(\omega) + C_{\beta1}.
\end{equation}
From \eqref{A: q jump}, at the time instant $t_i$ for even $i$, $q(t_i^+) = q(t_i) + |E_n|\Delta_q^n(\omega)$. Then, based on \eqref{A: q(t)0} and \eqref{A: q jump}, $q(t_i^+)$ for even $i$ can be written as
\begin{equation}
\label{A: qt_i+2}
q(t_i^+) = -|E_n|\Delta_v^n(\omega) + C_{\beta2} + |E_n|\Delta_q^n(\omega). 
\end{equation}
By setting equations \eqref{A: qt_i+1} and \eqref{A: qt_i+2} equal to each other, we get:
\begin{equation}
\label{A: eq:Cq1}
C_{\beta1} =  C_{\beta2} -2|E_n|\Delta_v^n(\omega) + |E_n|\Delta_q^n(\omega).
\end{equation}
Substituting $C_{\beta1}$ from \eqref{A: eq:Cq1} to \eqref{A: q(t)0}, $q(t)$ is given by
\begin{equation}
\label{A: q(t)1}
\begin{aligned}
&q(t) =  \\
&\resizebox{1\hsize}{!}{$  
\begin{cases}
     -|E_n|\Delta_v^n(\omega)u(t-t_{2k}) + |E_n|\Delta_q^n(\omega) + C_{\beta2}, &\text{ for } t\in[t_{2k}^+, \ t_{2k+1}],\\
     -|E_n|\Delta_v^n(\omega)u(t-t_{2k+1})  + C_{\beta2}, &\text{ for } t\in[t_{2k+1}^+, \ t_{2k+2}],
    \end{cases}  
    $}  
\end{aligned}
\end{equation}
where $k\in\mathbb{N}$.

Let $f(t)_k\ (k\in\mathbb{Z}^+)$ represent the $k$-th state of a function $f(t)$. Considering the reset controller $\mathcal{C}$ \eqref{eq: State-Space eq of RC} with resetting the first state where $n_r=1$, the first state $x_c(t_i)_1$ in the state $x_c(t_i)$ is reset to $\gamma x_c(t_i)_1$, and there are no reset actions in state $x_c(t)_k$ for $k>1$. Therefore, we have
\begin{equation}
\label{A: k>1,x_c}
    x_c(t)_k = x_{bl}(t)_k, \text{ for } k>1.
\end{equation}
From \eqref{A: eq: xc,xnl,xbl}, \eqref{A: x{nl}(t{i}+) - x{nl}(t{i})2}, and \eqref{A: k>1,x_c}, we have
\begin{equation}
\label{A: k>1,x_nl}
\begin{aligned}
   x_{nl}(t)_k &= 0, \text{ for } k>1,\\
   \Delta_q^n(\omega)_k &= 0, \text{ for } k>1.   
\end{aligned}
\end{equation}
From \eqref{A: eq: Q(s), Xnl(s)} and \eqref{A: k>1,x_nl}, we have
\begin{equation}
\label{A: k>1,q}
  q(t)_k = 0, \text{ for } k>1. 
\end{equation}
Equation \eqref{A: q(t)1} indicates that during the time interval $[t_{i}^+, \ t_{i+1}]$, $q(t)$ is a constant matrix. Equation \eqref{A: q jump} demonstrates that at the time instant $t_i \in J_o$, $q(t)$ has a jump of $\pm|E_n|\Delta_q^n(\omega)$. Therefore, the signal $q(t)$ is absolutely integrable, ensuring the existence of its Fourier and Laplace transforms. Since $\sigma_i=t_{i+1}-t_i=\pi/\omega$, from \eqref{A: q jump}, \eqref{A: q(t)1}, and \eqref{A: k>1,q}, the first state of $q(t)$ denoted by $q(t)_1$ is a square wave with a period of ${2\pi}/{\omega}$. It shares the same phase as the reset triggered signal $e_s(t)$ and has a magnitude of $|E_n|\Delta_q^n(\omega)_1$. 
% Thus, $q(t)$ is written as $q(t)=[q(t)_1, 0,...,0]^T \in \mathbb{R}^{n_c\times1}$, where $n_c$ is the number of states in $x_c(t)$. 
Define a normalized square wave signal $q_s(t)$ with the same phase as the reset-triggered signal $e_s(t)$, expressed as follows:
\begin{equation}
	\label{A: eq: q0}
	q_s(t) = \frac{4}{\pi}\sum\nolimits_{\mu=1}^{\infty} \frac{\sin(\mu\omega t+\mu\angle E_s)}{\mu},\ \mu = 2k+1 (k\in\mathbb{N}),
\end{equation}
whose Fourier transform is given by
\begin{equation}
\label{A: eq: Q_s(w)}
Q_s(\omega)=\mathscr{F}[q_s(t)]  =  4\sum\nolimits_{\mu=1}^{\infty} \mathscr{F}[\sin(\mu\omega t+\mu\angle E_s)]/{(\mu\pi)}. 
\end{equation}
% where $T_R = [1, 0,...,0]^T \in \mathbb{N}^{n_c\times 1}$.
According to \eqref{A: q(t)1}, \eqref{A: k>1,q} and \eqref{A: eq: q0}, $q(t)$ can be expressed as:
\begin{equation}
\label{A: q(t)}
    q(t) = |E_n|\Delta_q^n(\omega)q_s(t)/2 -|E_n|\Delta_v^n(\omega) + C_{\beta2} + |E_n|\Delta_q^n(\omega)/2.
\end{equation}
For a time-domain signal $q(t)$, the frequency $\omega$ is a constant. A constant function corresponds to a delta function in the frequency domain. Defining $Q(\omega)=\mathscr{F}[q(t)]$, we derive $Q(\omega)$ from \eqref{A: q(t)} as follows:
\begin{equation}
\label{A: Q(w)0}
\begin{aligned}
Q(\omega) = |E_n|\Delta_q^n(\omega) Q_s(\omega)&/2 +  (|E_n|\Delta_q^n(\omega)/2\\
&-\Delta_v^n(\omega) + C_{\beta2}) \delta(\omega).        
\end{aligned}
\end{equation}
Since $\delta(\omega)$ is a Dirac delta function, which is zero for $\omega \neq 0$, equation \eqref{A: Q(w)0} is simplified as: 
\begin{equation}
\label{A: Q(w)}
 Q(\omega)= |E_n|\Delta_q^n(\omega) Q_s(\omega)/2.    
\end{equation}
From \eqref{A: eq: Q_s(w)} and \eqref{A: Q(w)}, $ Q(\omega)$ is given by
\begin{equation}
\label{A: eq: Q(W)2}
Q(\omega)= 2|E_n|\Delta_q^n(\omega)\sum\nolimits_{\mu=1}^{\infty} \mathscr{F}[\sin(\mu\omega t+\mu\angle E_s)]/{(\mu\pi)}.     
\end{equation}
Both $Q(\omega)$ \eqref{A: eq: Q(W)2} and $Q_s(\omega)$ \eqref{A: eq: Q_s(w)} contain $\mu$ harmonics. Let $Q^\mu(\omega)$ and $Q_s^\mu(\omega)$ represent the $\mu$-th ($\mu\in\mathbb{Z}^+$) harmonics of $Q(\omega)$ and $Q_s(\omega)$, respectively. They are expressed as:
\begin{equation}
\label{A: Qn,Q,Q0n,Q0}
\begin{aligned} 
Q(\omega) &= \sum\nolimits_{\mu=1}^{\infty}Q^\mu(\omega),\\    
Q_s(\omega) &= \sum\nolimits_{\mu=1}^{\infty}Q_s^\mu(\omega).    
\end{aligned}
\end{equation}
From \eqref{A: eq: Q_s(w)}, \eqref{A: eq: Q(W)2}, and \eqref{A: Qn,Q,Q0n,Q0}, $Q_s^\mu(\omega)$ and $Q^\mu(\omega)$ are given by
\begin{equation}
\label{A: Q_0n}
\begin{aligned}
Q_s^\mu(\omega) &= {4\mathscr{F}[\sin(\mu\omega t+\mu\angle E_s)]}/{(\mu\pi)},\\
Q^\mu(\omega) &= {2|E_n|\Delta_q^n(\omega)\mathscr{F}[\sin(\mu\omega t+\mu\angle E_s)]}/{(\mu\pi)}.
\end{aligned}    
\end{equation}
% Given that the reset controller in Fig. \ref{fig: State Space of Reset Controller}(a) shares the same reset-triggered signal $e(t) = |E_n|\sin (\omega t)$ as the open-loop CI discussed in Lemma \ref{lem: CI MODEL}, it can be inferred that $q(t)$ is a square wave with identical phases but distinct magnitudes compared to $q_i(t)$ in \eqref{A: eq: qi(t)}. Let $Q(\omega)=\mathscr{F}[q(t)]$ and suppose that $M_q(\omega) = Q(\omega)/Q_i(\omega)$. Then, substituting $Q_i(\omega)$ in \eqref{A: eq: Q_i(w)} into $Q(\omega) = M_q(\omega)Q_i(\omega)$, $Q(\omega)$ is obtained as
% \begin{equation}
% \label{A: eq: Q(w)}
%     Q(\omega) = 2M_q(\omega)M_i(\omega)/\pi \cdot \sum\nolimits_{\mu=1}^{\infty}Q_{0}^{n}(n\omega).
% \end{equation}
Let $X_{nl}(\omega) = \mathscr{F}[x_{nl}(t)]$, and $X_{nl}^\mu(\omega)$ represents the $\mu$-th harmonic of $X_{nl}^\mu(\omega)$, defined as:
\begin{equation}
\label{A: Xnl, Xnl_n}
 X_{nl}(\omega) = \sum\nolimits_{\mu=1}^{\infty} X_{nl}^\mu(\omega).   
\end{equation}
From \eqref{A: eq: Q(s), Xnl(s)}, \eqref{A: Q_0n} and \eqref{A: Xnl, Xnl_n}, $X_{nl}^\mu(\omega)$ is derived as
\begin{equation}
\label{A: X_{nl}(w)}
\begin{aligned}   
    X_{nl}^\mu(\omega) &= (j\mu\omega I-{A_R})^{-1}j\mu\omega IQ^\mu(\omega).
\end{aligned}
\end{equation} 
According to \eqref{A: eq: xc,xnl,xbl}, \eqref{A: Xnl, Xnl_n}, and \eqref{A: X_{nl}(w)}, $X_c(\omega) = \mathscr{F}[x_c(t)]$ is obtained as
\begin{equation}
\label{A: Xc(w)}
    X_c(\omega) = X_{bl}(\omega) + \sum\nolimits_{\mu=1}^{\infty} (j\mu\omega I-{A_R})^{-1}j\mu\omega IQ^\mu(\omega),
\end{equation}
where $X_{bl}(\omega) = \mathscr{F}[x_{bl}(t)] $.

Let $V(\omega) = \mathscr{F}[v(t)]$. According to \eqref{eq: State-Space eq of RC}, $V(\omega)$ is given by
\begin{equation}
\label{A: V(w)00}
    V(\omega) = C_RX_c(\omega) + D_RE(\omega).
\end{equation}
Substituting $X_c(\omega)$ from \eqref{A: Xc(w)} into \eqref{A: V(w)00}, $V(\omega)$ is given by
\begin{equation}
\label{A: V(w)01}
\begin{aligned}
V(\omega) = &C_RX_{bl}(\omega) + D_RE(\omega)+ \\
& \sum\nolimits_{\mu=1}^{\infty} C_R(j\mu\omega I-{A_R})^{-1}j\mu\omega IQ^\mu(\omega).
\end{aligned}
\end{equation}
% Note that given $C_R\in\mathbb{R}^{1\times n_c}$, $A_R\in\mathbb{R}^{n_c\times n_c}$, and $M\in\mathbb{R}^{n_c\times 1}$ in \eqref{A: V(w)01}, the term $M$ followed by $E_{1\mu}(\omega)$ can be canceled.
From \eqref{eq: State-Space eq of RC}, we have 
\begin{equation}
\label{A: Vbl(w)}
 V_{bl}(\omega) = C_{bl}(\omega)E(\omega) = C_RX_{bl}(\omega) + D_RE(\omega).  
\end{equation}
Then, substituting $V_{bl}(\omega)$ from \eqref{A: Vbl(w)} into \eqref{A: V(w)01}, $V(\omega)$ is given by
\begin{equation}
\label{A: V(w)02}
    V(\omega) = V_{bl}(\omega) + \sum\nolimits_{\mu=1}^{\infty} C_R(j\mu\omega I-{A_R})^{-1}j\mu\omega IQ^\mu(\omega).
\end{equation}
Define 
\begin{equation}
 \label{A: eq:R_delta}
\begin{aligned}
    V_{nl}(\omega) &= \sum\nolimits_{\mu=1}^{\infty} V_{q}^\mu(\omega),\\
    V_{q}^\mu(\omega) &= \Delta_x(\mu\omega)Q^\mu(\omega),\\
    \Delta_x (\mu\omega) &= C_R(j\mu\omega I-A_R)^{-1}j\mu\omega I\  (\in\mathbb{R}^{1\times n_c}).
 \end{aligned}
\end{equation}
and substitute $V_{q}^\mu(\omega)$ from \eqref{A: eq:R_delta} into \eqref{A: V(w)02}, $V(\omega)$ is given by
\begin{equation}
\label{A: V(w)03}
    V(\omega) = V_{bl}(\omega) + V_{nl}(\omega) =V_{bl}(\omega) + \sum\nolimits_{\mu=1}^{\infty}V_{q}^\mu(\omega).
\end{equation}
% from \eqref{A: eq: Delta_q} and \eqref{A: eq:R_delta}
In the time domain, equation \eqref{A: V(w)03} is given by
\begin{equation}
\label{A: eq: v(t)}
    v(t) = v_{bl}(t) + v_{nl}(t).
\end{equation}
From \eqref{A: eq:R_delta}, $v_{nl}(t)$ is given by
\begin{equation}
\begin{aligned}
\label{A: eq: vnl(t)}
v_{nl}(t) &= \mathscr{F}^{-1}\bigg[\sum\nolimits_{\mu=1}^{\infty}\Delta_x(\mu\omega)Q^\mu(\omega)\bigg],\ \mu\in\mathbb{Z}^+,
\end{aligned}
\end{equation}
where $Q^\mu(\omega)$ and $\Delta_x(\mu\omega)$ are defined in \eqref{A: eq:R_delta} and \eqref{A: Q_0n}, respectively. Equations \eqref{A: eq: v(t)} and \eqref{A: eq: vnl(t)} conclude the proof.
\end{proof}

\section{Proof for Theorem \ref{thm: Open loop model for RC}}
\label{appen: proof for thm2}
\vspace*{12pt}
\begin{proof}
The reset controller $\mathcal{C}$, operating with the input signal and reset-triggered signal $e(t) = |E_1|\sin(\omega t+\angle E_1)$ at steady states is defined as the reset controller discussed in Theorem \ref{thm: vnl_npi} when $e_s(t)=e(t)$. 

Following a similar proof process as in Appendix \ref{appen: proof for thm1}, consider that there are $n\in\mathbb{N}$ harmonics in the reset output signal $v(t)$. Referring to \eqref{A: V(w)03}, let $V(\omega) = \sum\nolimits_{n=1}^{\infty} V_n(\omega)$, we obtain $V_n(\omega)$ as
\begin{equation}
\label{eq:Vn}
    V_n(\omega) =
    \begin{cases}
    V_{bl}(\omega) + V_{q}^1(\omega),\ &\text{for }n=1,\\
    V_{q}^n(\omega),\ &\text{for odd } n>1,\\    
    0,\ &\text{for even } n\geq2.
    \end{cases}
\end{equation}
By applying the \enquote{Virtual Harmonic Generator} \cite{saikumar2021loop}, the input signal $e(t)$ generates $n$ harmonics $e_{1n}(t) = |E_1|\sin(n\omega t +n\angle E_1)$, whose Fourier transform is $E_{1n}(\omega) = |E_1|\mathscr{F}[\sin(n\omega t +n\angle E_1)]$. From \eqref{A: eq:R_delta}, $\mathcal{C}_{nl}(n\omega)$ is defined as
\begin{equation}
\label{Cnl(w)}
     \mathcal{C}_{nl}(n\omega)= \frac{V_{q}^n(\omega)}{E_{1n}(\omega)} =  {2}\Delta_x (n\omega)\Delta_q(\omega) /{(n\pi)}.
\end{equation}
From \eqref{A: Vbl(w)}, \eqref{eq:Vn}, and \eqref{Cnl(w)}, the $n$-th HOSIDF for $\mathcal{C}$, denoted as $\mathcal{C}_n(\omega)$, is defined as
\begin{equation}
\label{C_n}
    \mathcal{C}_n(\omega) = \frac{V_n(\omega)}{E_{1n}(\omega)}=
    \begin{cases}
    \mathcal{C}_{bl}(\omega) + \mathcal{C}_{nl}(\omega),\ &\text{for }n=1,\\
    \mathcal{C}_{nl}(n\omega),\ &\text{for odd } n>1,\\    
    0,\ &\text{for even } n\geq2.
    \end{cases}
\end{equation}
This concludes the proof. 
\end{proof}
\section{Proof for Corollary \ref{cor: v,vnl}}
\label{appen: proof for cor1}
\vspace*{12pt}
\begin{proof}
Define
\begin{equation}   
\label{Vnl(w)}
    V_{nl}(\omega) = \sum\nolimits_{n=1}^{\infty} V_{q}^n(\omega).
\end{equation}
From \eqref{Cnl(w)}, we have
\begin{equation}
\label{vqn2}
 V_{q}^n(\omega) = E_{1n}(\omega)\mathcal{C}_{nl}(n\omega).
\end{equation}
From \eqref{Vnl(w)} and \eqref{vqn2}, the inverse Fourier transform of $V_{nl}(\omega)$ is given by
\begin{equation}
\label{eq: vnl-11}
    v_{nl}(t) = \sum\nolimits_{n=1}^{\infty}\mathscr{F}^{-1}[E_{1n}(\omega)\mathcal{C}_{nl}(n\omega)].
\end{equation}
From \eqref{A: V(w)03} and \eqref{Vnl(w)}, $V(\omega)$ given by
\begin{equation}
 V(\omega) =  V_{bl}(\omega) + V_{nl}(\omega).
\end{equation}
The inverse Fourier transform of $V(\omega)$ is given in \eqref{eq:vt=vbl+vnl}. This concludes the proof.
\end{proof}
% \bmsubsection{Subsection title of first appendix\label{A: app1.1a}}
\section{Proof for Theorem \ref{thm: Pulse-based model for RCS in closed loop}}
\label{appen: proof for thm3}
\vspace*{12pt}
\begin{proof}

Consider a closed-loop reset control system (with reset controller $\mathcal{C}$ \eqref{eq: State-Space eq of RC} where $n_r = 1$), as depicted in Fig. \ref{fig1:RC system} with a sinusoidal reference input signal $r(t) = |R|\sin(\omega t)$ and under Assumptions \ref{assum: stable} and \ref{assum:2reset}, at steady states. 

The Fourier transform of \eqref{vl,vnl} is given by
\begin{equation}
\label{Vl(w),Vnl(w)}
\begin{aligned}
V_{l}(\omega) &=  \sum\nolimits_{n=1}^{\infty}V_{bl}^{n}(\omega),\\
V_{nl}(\omega) &= \sum\nolimits_{n=1}^{\infty}V_{nl}^n(\omega).    
\end{aligned}
\end{equation}
% From \eqref{eq: vl} and \eqref{vt,vl,vnl}, \(V(\omega)\) is given by
% \begin{equation}
% \label{eq:V(omega)}
%      V(\omega) = \sum\nolimits_{n=1}^{\infty} [E_n(\omega)\mathcal{C}_{bl}(n\omega) +  V_{nl}^{n}(\omega)],
% \end{equation}
% where $V_{nl}^{n}(\omega) = \mathscr{F}[v_{nl}^{n}(t)]$.
Based on \eqref{eq: vl} and \eqref{Vl(w),Vnl(w)}, the signal $v_l(t)$ in \eqref{vt,vl,vnl} is given by
\begin{equation}
\label{vlt}
    v_{l}(t) = \sum\nolimits_{n=1}^{\infty} \mathscr{F}^{-1}[E_n(\omega)\mathcal{C}_{bl}(n\omega)].
\end{equation}
The following section outlines the derivation of the signal $v_{nl}(t)$ in \eqref{vt,vl,vnl}. This process commences with the derivation of its first-order harmonic $v^1_{nl}(t)$. According to \eqref{eq:vt=vbl+vnl}, the first $\mathcal{C}$ under an input signal and the reset-triggered signal of $e_1(t) = |E_1|\sin(\omega t+ \angle E_1)$, yields the base-linear output denoted by $v_{bl}^{1}(t)$ and the nonlinear output signal denoted by $v_{nl}^{1}(t)$. To generate the signal $v^1_{nl}(t)$, the \enquote{Virtual Harmonic Generator} is employed to generate harmonics $e_{1n}(t)$ from $e_1(t)$, as illustrated in Fig. \ref{fig: Model for Closed-loop RCS}(a), as given by:
\begin{equation}
\label{eq:e_1n}
    \begin{aligned}
        e_{1n}(t) &= |E_1|\sin(n\omega t + n\angle E_1),\ n =2k+1 (k \in \mathbb{N}),
    \end{aligned}
\end{equation}
whose Fourier transform is $E_{1n}(\omega) = \mathscr{F}[e_{1n}(t)]$.

The signal $v_{bl}^{1}(t)$ is derived by \eqref{eq: vl}. From \eqref{Cnl(w)}, the signal $v_{nl}^{1}(t)$ is given by:
\begin{equation}
\label{eq: unln}
    \begin{aligned}
       v_{nl}^{1}(t) &= \sum\nolimits_{n=1}^{\infty}\mathscr{F}^{-1}[E_{1n}(\omega)\mathcal{C}_{nl}(n\omega)].
    \end{aligned}
\end{equation}
In Fig. \ref{fig: Model for Closed-loop RCS}(a), consider a reset controller $\mathcal{C}$ with a sinusoidal input signal $e_n(t) = |E_n|\sin(n\omega t + \angle E_n)$ and a reset triggered signal $e_1(t) = |E_1|\sin(\omega t+\angle E_1)$. From \eqref{eq: uw2}, $V^n(\omega)$ is given by
\begin{equation}
V^n(\omega) = V_{bl}^n(\omega) + V_{nl}^n(\omega).    
\end{equation}
From \eqref{eq:Vnl(t)npi1}, $V_{nl}^n(\omega)$ is given by
\begin{equation}
\label{eq: closed Vnl1}
\begin{aligned}
V_{nl}^n(\omega) &=\sum\nolimits_{\mu=1}^{\infty}\Delta_x(\mu\omega)Q^\mu(\omega),
\end{aligned}
\end{equation}
where $\Delta_x(\mu\omega)$ and $Q^\mu(\omega)$ are given in \eqref{eq:Vnl(t)npi2}.
% \begin{equation}
% \label{eq: closed Vnl2}
% \begin{aligned}
% \Delta_l(n\omega) &=  (jn\omega I-A_R)^{-1}B_R,\\ 
% \Delta_x(\mu\omega) &= C_R(j\mu\omega I-A_R)^{-1}j\mu\omega I,\\
% \Delta_c^n(\omega) &= |\Delta_l(n\omega)| \sin(\angle \Delta_l(n\omega)+ \angle E_n -n\angle E_1 ),\\   
% Q^\mu(\omega) &= 2|E_n|\Delta_q^n(\omega) \mathscr{F}[\sin(\mu\omega t + \mu\angle E_1)]/(\mu\pi),\\
% \Delta_q^n(\omega) &= (I+e^{A_R\pi/\omega})(A_\rho e^{A_R\pi/\omega}-I)^{-1}(I-A_\rho)\Delta_c^n(\omega).
% \end{aligned}
% \end{equation}

From \eqref{eq: closed Vnl1}, $V_{nl}^n(\omega)$, for a constant $\omega$ and varying $n$, exhibit the same phase as the reset-triggered signal $e_1(t)$. Considering that $V_{nl}^n(\omega)$ exhibits the same phase for different \(n\), as indicated in \eqref{Vl(w),Vnl(w)}, we introduce a function \(\Gamma(\omega)\) to express the ratio of \(V_{nl}(\omega)\) to \(V^{1}_{nl}(\omega)\):
\begin{equation}
		\label{eq: G0}
		\Gamma(\omega)=\frac{V_{nl}(\omega)}{V^{1}_{nl}(\omega)} =\frac{\sum\nolimits_{n=1}^{\infty}V^{n}_{nl}(\omega)}{V^{1}_{nl}(\omega)}.
\end{equation}
% Theorem \ref{thm: Open loop model for RC} illustrates that $v_{nl}^{n}(t) $is characterized as a filtered pulse signal, with pulses occurring at reset instants $t_m$. Thus, $\Gamma(\omega)$ can be formulated as
% \begin{equation}
% 		\label{eq: G}
%      \Gamma(\omega)= \frac{\sum\nolimits_{n=1}^{\infty}v_{nl}^{n}(t_m^+)}{v^{1}_{nl}(t_m^+)}.
% \end{equation}
From \eqref{eq: unln} and \eqref{eq: G0}, $V_{nl}(\omega)$ is given by
\begin{equation}
\label{eq: vnl}
    \begin{aligned}
V_{nl}(\omega) &= \sum\nolimits_{n=1}^{\infty}\Gamma(\omega)E_{1n}(\omega)\mathcal{C}_{nl}(n\omega). 
    \end{aligned}
\end{equation}
From \eqref{Vl(w),Vnl(w)} and \eqref{eq: vnl}, we have
\begin{equation}
\label{eq: vnl_n}
V_{nl}^n(\omega) = \Gamma(\omega)E_{1n}(\omega)\mathcal{C}_{nl}(n\omega).    
\end{equation}
Equations \eqref{vt,vl,vnl}, \eqref{vlt}, and \eqref{eq: vnl} describe the new block diagram for the closed-loop RCS, presented in Fig. \ref{fig: Model for Closed-loop RCS}(b). In the new block diagram, the unknown parameter \(\Gamma(\omega)\) will be elucidated in the subsequent derivation.

By substituting $V_{nl}^n(\omega)$ from \eqref{eq:Vnl(t)npi2} and \eqref{eq: closed Vnl1} into \eqref{eq: G0}, $\Gamma(\omega)$ is simplified as
% Referring to \eqref{Q(w)}, for an open-loop reset controller $\mathcal{C}$ under the input signal of $e_{n}(t)$, the Fourier transform of $q_n(t)$ is given by:
% \begin{equation}
% \label{Q_n}
%     Q_n(\omega) = |E_n|\Delta_q^n(\omega)Q_0(\omega)/2,
% \end{equation}
% where  $\Delta_q^n(\omega)$ is derived from \eqref{eq:xnl(ta+)2} and \eqref{x{nl}(t{i}+) - x{nl}(t{i})2}, given by
% \begin{equation}
% \label{Cqn}
%     \Delta_q^n(\omega) = (A_\rho e^{A_R\pi/\omega}-I)^{-1}(I-A_\rho) \Delta_c^n(\omega). 
% \end{equation}
% Since each reset controller $\mathcal{C}$ shares the same reset-triggered signal $e_1(t)$, $v_{nl}^n(t)$ shares the same phase. Based on \eqref{x{nl}(t{i}+) - x{nl}(t{i})2}, let the magnitude of the square wave added to the state $x_n(t)$ as $\Delta_q^n(\omega)$. 
\begin{equation}
		\label{eq: G2}
\Gamma(\omega)=\frac{\sum\nolimits_{n=1}^{\infty}|E_n|\Delta_c^n(\omega)}{|E_1|\Delta_c^1(\omega)}.
\end{equation}
As illustrated in Fig. \ref{fig: Model for Closed-loop RCS}(b), in the closed-loop configuration, we have
\begin{equation}
\label{eq:E_n}
    E_n(\omega) = - V_n(\omega)\mathcal{C}_\alpha(n\omega)\mathcal{P}(n\omega).
\end{equation}
Substituting \eqref{eq: vl} and \eqref{eq: vnl_n} into \eqref{eq: uw2}, $V_{n}(\omega)$ is given by
\begin{equation}
\label{eq: Un}
    V_n(\omega) = E_n(\omega)\mathcal{C}_{bl}(n\omega) + \Gamma(\omega)E_{1n}(\omega)\mathcal{C}_{nl}(n\omega).
\end{equation}
Substituting \eqref{eq: Un} into \eqref{eq:E_n}, we have
\begin{equation}
\label{eq: En2}
    E_n(\omega) = - E_n(\omega)\mathcal{L}_{bl}(n\omega) - \Gamma(\omega)E_{1n}(\omega)\mathcal{L}_{nl}(n\omega),
\end{equation}
where
\begin{equation}
\label{eq:Lbl,Lnl}
    \begin{aligned}
    \mathcal{L}_{bl}(n\omega) &= \mathcal{C}_{bl}(n\omega)\mathcal{C}_{\alpha}(n\omega)\mathcal{P}(n\omega),\\
    \mathcal{L}_{nl}(n\omega) &= \mathcal{C}_{nl}(n\omega)\mathcal{C}_{\alpha}(n\omega)\mathcal{P}(n\omega).
    \end{aligned}
\end{equation}
Based on the definitions of $e_n(t)$ and $e_{1n}(t)$ provided in \eqref{eq: e,y,u} and \eqref{eq:e_1n}, we express \eqref{eq: En2} in the time domain as follows:
\begin{equation}
\label{eq:phase and Mag}
\begin{aligned}
&|E_n||1+\mathcal{L}_{bl}(n\omega)| \sin(n\omega t + \angle E_n + \angle (1+\mathcal{L}_{bl}(n\omega))) =\\
&-\Gamma(\omega)|E_1||\mathcal{L}_{nl}(n\omega)| \sin(n\omega t + n\angle E_1 + \angle\mathcal{L}_{nl}(n\omega)).
\end{aligned}
\end{equation}
From \eqref{eq:phase and Mag} and the given condition $|E_n|>0$, we can deduce the following equations:
\begin{equation}
\label{eq: phase and Mag2}
\begin{aligned}
|E_n| &=  \frac{\Gamma(\omega)|\mathcal{L}_{nl}(n\omega)|}{|1+\mathcal{L}_{bl}(n\omega)|}|E_1|,\\
\angle E_n &= n\pi+ n\angle E_1 + \angle\mathcal{L}_{nl}(n\omega) - \angle (1+\mathcal{L}_{bl}(n\omega)), \\
&\indent\indent\indent\indent\indent\indent\indent\indent\indent\text{ for } n=2k+1>1.    
\end{aligned}
\end{equation}
From \eqref{eq:Vnl(t)npi2}, $\Delta_c^n(\omega)$ is given by
\begin{equation}
\label{Delta_cn2}
\Delta_c^n(\omega) =|\Delta_l(n\omega)| \sin(\angle \Delta_l(n\omega)+ \angle E_n -n\angle E_1 ).
\end{equation}
% \begin{equation}
% \label{Delta_cn2}
% \Delta_c^n(\omega) =
% \begin{cases}
% |\Delta_l(\omega)| \sin(\angle \Delta_l(\omega)), & \text{ for } n=1,\\
% |\Delta_l(n\omega)| \sin(\angle \Delta_l(n\omega) +  \angle E_n - n\angle E_1 ), & \text{ for } n > 1.     
% \end{cases}
% \end{equation}
Substituting the relation between $\angle E_1$ and $\angle E_n (n>1)$ from \eqref{eq: phase and Mag2} into \eqref{Delta_cn2}, we obtain:
\begin{enumerate}
    \item For $n=1$,\\
    \begin{equation}
    \label{Delta_cn31}
     \Delta_c^1(\omega) =   |\Delta_l(\omega)| \sin(\angle \Delta_l(\omega)).
     \end{equation}
     \item  For $n>1$,\\
     \begin{equation}
    \label{Delta_cn32}
     \begin{aligned}
     \Delta_c^n(\omega) = - &|\Delta_l(n\omega)| \sin(\angle \Delta_l(n\omega) +  \\&\angle\mathcal{L}_{nl}(n\omega)- \angle (1+\mathcal{L}_{bl}(n\omega))). 
     \end{aligned}
     \end{equation}
\end{enumerate}
% \begin{equation}
% \label{Delta_cn3} 
% \resizebox{1\hsize}{!}{$
% \Delta_c^n(\omega) = 
% \begin{cases}
% |\Delta_l(\omega)| \sin(\angle \Delta_l(\omega)), & \text{ for } n=1,\\ 
% - |\Delta_l(n\omega)| \sin(\angle \Delta_l(n\omega) + \angle\mathcal{L}_{nl}(n\omega) - \angle (1+\mathcal{L}_{bl}(n\omega))),& \text{ for }n>1.    
% \end{cases}  
% $}
% \end{equation}
Substituting $\Delta_c^n(\omega)$ from \eqref{Delta_cn31} and \eqref{Delta_cn32} into \eqref{eq: G2}, we have
\begin{equation}
\label{eq: G4}
\Gamma(\omega)=1+\frac{\sum\nolimits_{n=3}^{\infty}|E_n|\Delta_c^n(\omega)}{|E_1|\Delta_c^1(\omega)}.
\end{equation}
% \begin{equation}
% \label{eq: G4}
% \Gamma(\omega)=1-\frac{\sum\nolimits_{n=3}^{\infty}|E_n\Delta_l(n\omega)| \sin(\angle \Delta_l(n\omega) + \angle\mathcal{L}_{nl}(n\omega) - \angle (1+\mathcal{L}_{bl}(n\omega)))}{|E_1\Delta_l(\omega)| \sin(\angle \Delta_l(\omega))}.
% \end{equation}
% Since the harmonic $n = 2k+1, k\in \mathbb{N}$ is an odd number, the selection of $m\in\mathbb{Z}^+$ does not impact the value of $\sin((nm+1)\pi)/\sin(m\pi)$. Equation \eqref{eq: G3} can be simplified as
% \begin{equation}
% \label{eq: G4}
% \Gamma(\omega)=1-\frac{\sum\nolimits_{n=3}^{\infty}|E_n\Delta_l(n\omega)| \sin(\angle\mathcal{L}_{nl}(n\omega) - \angle (1+\mathcal{L}_{bl}(n\omega)) + \angle \Delta_l(n\omega))}{|E_1\Delta_l(\omega)| \sin(\angle \Delta_l(\omega))}.
% \end{equation}
Define
\begin{equation}
\label{eta}
    \begin{aligned}        
	\Psi_n(\omega) &= {|\mathcal{L}_{nl}(n\omega)|}/{|1+\mathcal{L}_{bl}(n\omega)|}
        % \Omega_1(\omega) &= -|\Delta_l(\omega)| \sin(\angle \Delta_l(\omega)),\\
        % \Omega_n(\omega) &= -|\Delta_l(n\omega)| \sin(\angle\mathcal{L}_{nl}(n\omega) - \angle (1+\mathcal{L}_{bl}(n\omega)) + \angle \Delta_l(n\omega)),
    \end{aligned}
\end{equation}
and substitute $\Psi_n(\omega)$ from \eqref{eta} into \eqref{eq: phase and Mag2}, we then obtain:
\begin{equation}
	\label{eq: Rn2}
	|E_n| = \Gamma(\omega)\Psi_n(\omega)|E_1|, \text{ for }n>1.
\end{equation}
Substituting \eqref{eta} and \eqref{eq: Rn2} into \eqref{eq: G4}, $\Gamma(\omega)$ is given by
\begin{equation}
\label{eq: G5}
\Gamma(\omega)=1+\Gamma(\omega)\frac{\sum\nolimits_{n=3}^{\infty}\Psi_n(\omega)\Delta_c^n(\omega)}{\Delta_c^1(\omega)}.
\end{equation}
% where $\Delta_c(\omega)=|\Delta_l(\omega)| \sin(\angle \Delta_l(\omega))$ is defined in \eqref{Cnl_final}.
 
Derived from \eqref{eq: G5}, $\Gamma(\omega)$ is obtained as below:
\begin{equation}
		\label{eq: gamma_final}		
		\begin{aligned}
		  \Gamma(\omega) &= 1/\left(1-{\sum\nolimits_{n=3}^{\infty}\Psi_n(\omega)\Delta_c^n(\omega)}/{\Delta_c^1(\omega)}\right).
		\end{aligned}
	\end{equation}
 Here, $\Gamma(\omega)$ is derived and the proof of Theorem \ref{thm: Pulse-based model for RCS in closed loop} is concluded.
\end{proof}

\section{Proof for Theorem \ref{thm: Method C, new HOSIDF}}
\label{appen: Proof for thm4}
\vspace*{12pt}
\begin{proof}
Consider a closed-loop SISO reset control system in Fig. \ref{fig1:RC system} with a reset controller $\mathcal{C}$ \eqref{eq: State-Space eq of RC} (where $n_r=1$) and to a sinusoidal reference input signal $r(t) = |R|\sin(\omega t)$, complying with Assumptions \ref{assum: stable} and \ref{assum:2reset}, at steady states.

From the block diagram for the closed-loop reset system in Fig. \ref{fig: Model for Closed-loop RCS}(b), we can express the first harmonic of the output $Y(\omega)$ as $Y_1(\omega)$, given by
\begin{equation}
\label{eq:Y1(w)}
\begin{aligned}
    Y_1(\omega) = E_1(\omega)[\mathcal{L}_{bl}(\omega) + \Gamma(\omega)\mathcal{L}_{nl}(\omega)],
\end{aligned}
\end{equation}
where $\mathcal{L}_{bl}(n\omega)$ and $\mathcal{L}_{nl}(n\omega)$ are given in \eqref{eq:Lbl,Lnl}.

Let $R_n(\omega) = |R|\mathscr{F}[\sin(n\omega t)]$, in the closed loop, we have
\begin{equation}
\label{eq:E1(w)}
    Y_{1}(\omega) = R_1(\omega)-E_{1}(\omega).
\end{equation}
Combining \eqref{eq:Y1(w)} and \eqref{eq:E1(w)}, the first order sensitivity function for the closed-loop reset system, denoted as $\mathcal{S}_1(\omega)$ is given by 
\begin{equation}
\label{eq: E_1}
		\mathcal{S}_1(\omega)= \frac{E_1(\omega)}{R(\omega)} = \frac{1}{1+\mathcal{L}_{bl}(\omega)+\Gamma(\omega)\mathcal{L}_{nl}(\omega)}.
\end{equation}
Define 
\begin{equation}
    \label{eq: Lo}
	\mathcal{L}_{o}(n\omega) = \mathcal{L}_{bl}(n\omega)+\Gamma(\omega)\mathcal{L}_{nl}(n\omega).
\end{equation}
Substituting \eqref{eq: Lo} into \eqref{eq: E_1}, $\mathcal{S}_1(\omega)$ is given by
\begin{equation}
    \label{eq: S1}
\mathcal{S}_1(\omega)= \frac{1}{1+\mathcal{L}_{o}(\omega)}.
\end{equation}
From \eqref{eq:e_1n} and \eqref{eq: E_1}, we obtain
\begin{equation}
\label{eq:E_1n}
    E_{1n}(\omega) = |\mathcal{S}_{1}(\omega)|e^{jn\angle \mathcal{S}_{1}(\omega)}R_n(\omega),
\end{equation}
where $R_n(\omega) = |R|\mathscr{F}[\sin(n\omega t)]$.

Combining \eqref{eq: En2} and \eqref{eq:E_1n}, the $n$-th ($(n = 2k+1,\ k\in \mathbb{N})$) sensitivity function for the closed-loop reset system, denoted by $\mathcal{S}_n(\omega)$ is given by
\begin{equation}
\label{Sn}
	\begin{aligned}
		\mathcal{S}_n(\omega) &= \frac{E_n(\omega)}{R_n(\omega)}= \frac{-\Gamma(\omega)\mathcal{L}_{nl}(n\omega)}{1+\mathcal{L}_{bl}(n\omega)}\frac{E_{1n}(\omega)}{R_n(\omega)} \\
		&= -\frac{\Gamma(\omega)\mathcal{L}_{nl}(n\omega)|\mathcal{S}_{1}(\omega)|e^{jn\angle \mathcal{S}_{1}(\omega)}}{1+\mathcal{L}_{bl}(n\omega)}.
	\end{aligned}
\end{equation}
According to Theorems \ref{thm: Open loop model for RC} and \ref{thm: Pulse-based model for RCS in closed loop}, the even harmonics of $\mathcal{S}_n(\omega)$ are zeros. The n$^{\text{th}}$ complementary sensitivity function $\mathcal{T}_n(\omega)$ in \eqref{eq: Complementary sensitivity functions in CL} and the control sensitivity function $\mathcal{CS}_n(\omega)$ in \eqref{eq: CS} can be derived using the same method from \eqref{eq:Y1(w)} to \eqref{Sn}. This concludes the proof. 
\end{proof}
% Ex: \cite{Blondeletal2008,FabricioLiang2013} 
% \appendix

\section*{CRediT authorship contribution statement}
\textbf{Xinxin Zhang}: Conceptualization, Modelling $\&$ Simulation $\&$ Conducting Experiments $\&$ Analysis, Writing $\&$ editing. \textbf{Marcin B Kaczmarek}: Discussion $\&$ Review. \textbf{S. Hassan HosseinNia}: Conceptualization, Supervision, Review $\&$ editing. 

\section*{Declaration of competing interest}
The authors declare that they have no known competing financial interests or personal relationships that could have appeared to influence the work reported in this paper.
\section*{Declaration of Generative AI and AI-assisted technologies in the writing process}
During the preparation of this work the authors used [ChatGPT] in order to improve readability and language. After using this tool, the authors reviewed and edited the content as needed and take full responsibility for the content of the publication.

\bibliographystyle{elsarticle-num-names} 
\bibliography{References}

% \bibliographystyle{elsarticle-harv} 
% \bibliography{References}

\end{document}